\newif\ifpublic
\crefname{algocf}{alg.}{algs.}
\Crefname{algocf}{Algorithm}{Algorithms}
\renewcommand{\eqref}[1]{\hyperref[#1]{(\ref*{#1})}}
\pgfplotsset{compat=1.14}
\theoremstyle{plain}
\newtheorem{theorem}{Theorem}[section]
\newtheorem{corollary}[theorem]{Corollary}
\newtheorem{lemma}[theorem]{Lemma}
\newtheorem{definition}[theorem]{Definition}
\newtheorem{claim}[theorem]{Claim}
\theoremstyle{definition}
\newtheorem{remark}[theorem]{Remark}
\renewcommand{\epsilon}{\varepsilon}
\renewcommand{\phi}{\varphi}
\newcommand{\cmark}{\ding{51}}%
\newcommand{\xmark}{\ding{55}}%
\newcommand{\aradhya}[1]{{\color{purple} #1}}
\newcommand{\pg}[1]{\noindent\textbf{#1}~~}
\NewDocumentCommand \subtitle { m }{
  \tl_put_right:cn { @title } 
    { \\[.5ex] \LARGE #1 }
}
\begin{document}
\title{Towards Learning-Augmented Peer-to-Peer Networks: Self-Stabilizing Graph Linearization with Untrusted Advice}
\author{
  Vijeth Aradhya\thanks{National University of Singapore, Singapore.}
   \and
   Christian Scheideler\thanks{Paderborn University, Germany.} 
}
\date{}

\maketitle

\begin{abstract}
Distributed peer-to-peer systems are widely popular due to their decentralized nature, which ensures that no peer is critical for the functionality of the system. However, fully decentralized solutions are usually much harder to design, and tend to have a much higher overhead compared to centralized approaches, where the peers are connected to a powerful server. On the other hand, centralized approaches have a single point of failure. Thus, is there some way to combine their advantages without inheriting their disadvantages? To that end, we consider a supervised peer-to-peer approach where the peers can ask a \emph{potentially unreliable} supervisor for \emph{advice}. This is in line with the increasingly popular algorithmic paradigm called \emph{algorithms with predictions} or \emph{learning-augmented algorithms}, but we are the first to consider it in the context of peer-to-peer networks.

Specifically, we design \emph{self-stabilizing} algorithms for the fundamental problem of distributed \emph{graph linearization}, where peers are supposed to recover the ``sorted line'' network from any initial network after a transient fault. With the help of the supervisor, peers can recover the sorted line network in $O(\log n)$ time, if the advice is correct; otherwise, the algorithm retains its original recovery time (i.e., without any supervisor). A crucial challenge that we overcome is to correctly compose multiple self-stabilizing algorithms, that is, one that processes and exploits the advice, and another  that does not rely on the advice at all. Our key technical contributions combine ideas from the fields of overlay networks and proof-labeling schemes. Finally, we give a matching lower bound of $\Omega(\log n)$ for the recovery time of any algorithm if the advice can be corrupted, where $n$ is the network size.
\vspace{5mm}

{\footnotesize \textbf{Keywords}: distributed algorithms, overlay networks, self-stabilization, proof-labeling schemes}

\end{abstract}

\newpage
\tableofcontents
\newpage

\section{Introduction}

Traditionally, online algorithms augmented with advice is an area of research where one attempts to measure how
much knowledge of the future is necessary to achieve a given competitive ratio \cite{EmekFKR09, BoyarFKLM17}. However, the advice might not always be helpful since it is usually hard to predict the future sufficiently well. Ideally, we would like that bad advice does not result in a competitive ratio that is worse than what is achieved without any advice. Motivated by recent advances in machine learning, a new line of research has thus looked at \emph{algorithms with predictions} \cite{MitzenmacherV22}. In this framework \cite{MitzenmacherV20}, the algorithm takes advantage of an extra piece of information, called ``prediction'' (also called ``advice'' or ``hint''), to achieve (near) optimal performance if the prediction is good, while also not performing (much) worse than an algorithm without prediction, if the prediction is bad. 

While competitive analysis in the sequential online setting has been predominant for modeling \emph{temporal} uncertainty (i.e., the input is arriving piece by piece, and decisions need to be made on the fly), problems in distributed networks typically arise from \emph{spatial} uncertainty where the input itself is divided among several compute nodes, and all nodes collectively perform global computation \cite{PapadimitriouY93, IraniR96, Suomela13, AkbariELMSS23}. To surmount the uncertainty, many researchers have investigated trade-offs between the size of advice and the network's performance for many problems (e.g., \cite{GavoillePPR04, FraigniaudGIP09, FraigniaudIP10, FraigniaudKL10, KormanKP10, EllenGMP21}), seeking to \emph{minimize} the information needed to efficiently solve them. Recently, however, there has been a growing trend towards designing distributed algorithms that can not only exploit the advice for better performance, but are also \emph{robust} (i.e., retain their native performance without any advice) when the advice is unreliable \cite{GilbertNVW21, ImMINFOCOM23, AddankiP024, boyarRL25}.

We initiate the study of algorithms with \emph{untrusted} advice for peer-to-peer systems, where the peers continuously execute \emph{local} algorithms to maintain a desirable network with certain connectivity properties. As these systems are highly dynamic (for e.g., due to churn, link failures, memory corruption, etc), such algorithms should quickly \emph{recover} the network after any failure. To that end, we consider the \emph{self-stabilization} framework \cite{dijkstra74} that requires the system to converge to a desirable state after any transient fault which can arbitrarily corrupt the system state. Following the pioneering work of Dijkstra \cite{dijkstra74}, self-stabilization has been promising for modeling recovery in distributed systems (e.g., \cite{SS-AwerbuchKMPV93, SS-DolevW04, SS-BurmanK07, SS-ManneMPT09, SS-JacobRSST14, SS-BlinT18, SS-BurmanCCDNSX21}); for books, see \cite{Dolev2000, SS-tixeuil2009self}.

Specifically, we consider a network model where there is a \emph{supervisor} that can be contacted by the peers if they need advice. Supervised networks have been considered in the past (e.g., \cite{KothapalliS05, FriedmanOPODIS13, FeldmannKSS18, Sup-AfekGP24}), but there only the case has been studied that the advice is reliable. In that scenario, the peers can trust all the connections suggested by the supervisor, which would make the self-stabilization problem for overlay networks very easy: simply suggest the set of peers some peer should ideally be connected to. But if the advice can be corrupted, for example, an advice to connect to non-existing peers, then a malicious supervisor could easily start so-called \emph{Sybil attacks} \cite{Douceur02}, i.e., attacks integrating many potentially malicious peers into the peer-to-peer (P2P) system, that were previously not part of it.

To prevent such attacks, the advice of the supervisor must be limited to peers that a peer is already connected to. Indeed, this scenario has been recently considered in \cite{scheideler2019complexity}, but that work only considered how quickly a well-behaved supervisor can compute a near-optimal schedule for transforming any initial overlay network into the desired overlay network, given that it can only suggest network transformations by pointing to connections that the peers currently have. Bad advice can be hard for peers to identify since they are only aware of their local neighborhood. It is crucial that the advice does not throw the stabilization process backwards so that the peers never converge to the desired network. Further, a crucial challenge is that self-stabilizing algorithms must converge from any initial state, including arbitrarily bad advice from the supervisor.

\begin{figure*}[t]
    \centering \includegraphics[width=\textwidth]{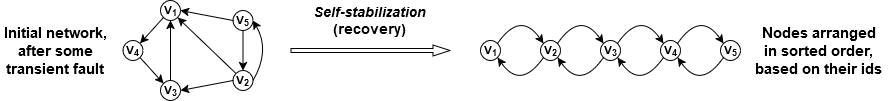}
    \caption{Illustration for \emph{Graph Linearization}. In peer-to-peer networks, there is a directed edge $(u, v)$ if a node $u$ has the \emph{id} of a node $v$ (i.e., node $u$ can send messages to node $v$). Since the id of a node can also be sent in a message, the network can \emph{change} over time. Here, the nodes recover the sorted path: $(v_1, v_2, \dots, v_5)$.}
    \label{fig:illustration-recovery}
\end{figure*}   

Despite the obstacles, we develop a generic self-stabilizing mechanism for solving the fundamental problem of distributed \emph{graph linearization} (topological sorting) \cite{onus2007linearization, Lin-JacobRSS12, Lin-NorNT13, Gall-ThCompSys14, ScheidelerSS15}, where each peer connects to its adjacent peer(s) in the topological sorted order to form the so-called sorted line network, no matter what kind of advice is given. This is achieved by a modular design of separately executing lightweight self-stabilizing algorithms that \emph{process} the advice, and any self-stabilizing algorithm that converges to the sorted line network \emph{without} any advice.

Our advice messages are of \emph{small} size (per peer), and enable the peers to quickly build a \emph{hypercubic} structure that facilitates efficient information exchange. Using that structure, a peer is able to connect to any other peer in the network. Crucially, the advice is also \emph{verifiable} in the sense that if the advice is bad, the peers can quickly detect it and stop making any additional connections due to that advice. Indeed, an added advantage of being able to detect bad advice in the self-stabilization setting is that the peers can quickly return to a state where they are again ready to receive (correct) advice from the supervisor.

\pg{Organization.} First, we describe our contributions, related work, and model. In Section \ref{sec:prelims}, we state a few preliminaries. In Section \ref{sec:advice}, we describe all the different aspects of the advice. In Section \ref{sec:flyover}, we provide the key details of the self-stabilizing processes. Finally, we give the analysis sketch in Section \ref{sec:analysis-sketch}. All the algorithms and their full analysis are provided in the Appendix. 
\subsection{Our Contributions}
We take the first steps towards efficiently and robustly augmenting advice for efficient recovery of P2P networks. Specifically, we provide self-stabilizing algorithms for \emph{graph linearization}, which has been a foundation (i.e., an intermediate step) for many self-stabilizing networks (see, a recent survey of \cite{feldmann2020survey}). Our solution achieves an optimal self-stabilization time if the supervisor is honest; otherwise, it retains the self-stabilization time of any solution without supervision.

To that end, we develop a novel and generic approach in which the supervisor needs to send only one $O(\log n)$-bit sized advice message to any peer, where $n$ denotes the network size. Surprisingly, despite this \emph{small} advice, peers can quickly build a \emph{hypercubic} structure that can be used to \emph{securely} create new connections among peers. Specifically, every peer connects to its neighbor(s) in the sorted line network in $O(\log n)$ rounds, if the advice messages are correct; otherwise, each peer rejects its advice in $O(\log n)$ rounds, and stops making any additional edge manipulations due to that advice. This is done in part by an interesting connection to the field of proof-labeling schemes \cite{KormanDC10}, and the following key technical ingredient (which may be of independent interest) for overlay networks.


\begin{theorem}[Informal, Tree-to-Path] \label{conf-theorem:tree-to-path}
There is a distributed algorithm for forming a path overlay from a (labelled) tree overlay in $O(1)$ rounds, whilst adding $O(1)$ edges per peer.
\end{theorem}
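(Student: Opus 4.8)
The plan is to realize the classical \emph{Euler-tour / segment-concatenation} construction, but to make it run in constant time by exploiting the labels. Picture the (rooted) labelled tree $T$ as a search tree: the label of a node $v$ lets $v$ learn its parent, the list of its children $\gamma_1,\dots,\gamma_k$ \emph{sorted in the target order}, and the identifier $\mathrm{rmd}(v)$ of the last node (in the target order) of the subtree rooted at $v$ --- equivalently, $v$'s predecessor and successor on the path. For trees of bounded degree, which is exactly the regime of the hypercubic structures used in the paper, such a label is $O(\log n)$ bits; in general, a single round of exchanging labels with tree-neighbours suffices to recover the sorted child list. The invariant we aim to maintain is that the target path, restricted to the subtree of $v$, is precisely the contiguous segment $[\,v,\ \mathrm{seg}(\gamma_1),\ \mathrm{seg}(\gamma_2),\ \dots,\ \mathrm{seg}(\gamma_k)\,]$, so that the whole path arises by having every internal node locally ``stitch together'' the segments of its children.

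Concretely, every peer acts in parallel. An internal node $v$ with sorted children $\gamma_1,\dots,\gamma_k$ does two things: it creates the edge $v \to \gamma_1$; and for every $i<k$ it arranges the edge $\mathrm{rmd}(\gamma_i) \to \gamma_{i+1}$ by sending $\gamma_{i+1}$'s identifier to $\gamma_i$, whereupon $\gamma_i$ --- which knows $\mathrm{rmd}(\gamma_i)$ from its own label --- forwards the instruction one more hop to $\mathrm{rmd}(\gamma_i)$. The boundary edge $\mathrm{rmd}(\gamma_k)\to(\text{the node following }v\text{'s subtree})$ is created symmetrically by $v$'s parent, so the recursion is self-consistent, and each peer finally deletes its original tree edges. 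It then remains to verify three points: the union of the created edges is exactly the desired path (a routine induction on subtree size using the order encoded by the labels); the process takes $O(1)$ rounds (a node communicates only with tree-neighbours, the sole indirection being the single forwarding hop to an $\mathrm{rmd}$); and each peer adds, and ends up incident to, only $O(1)$ new edges --- in fact exactly one out-edge (to its path-successor) and one in-edge (from its path-predecessor), the two path endpoints aside, and since $T$ has bounded degree no peer's degree blows up at any intermediate step.

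The step I expect to be the crux is securing \emph{genuine} constant time rather than time proportional to $\mathrm{depth}(T)$. The naive strategy --- let each peer walk through $T$ until it finds its path-successor --- costs $\Theta(\mathrm{depth}(T))$, which can be $\Theta(n)$, since the ``endpoint'' of a subtree may sit at the bottom of a long rightmost spine; this is precisely what the $\mathrm{rmd}$ labels are there to short-circuit, making every segment boundary reachable in two hops. Establishing the $O(1)$-edges-per-peer bound then rests on the observation that a node $x$ can be $\mathrm{rmd}(\gamma_i)$ \emph{for a non-last child} $\gamma_i$ for at most one $\gamma_i$ in all of $T$ --- namely $x$ must be a leaf, and $\gamma_i$ must be the top of the rightmost spine through $x$ --- so no peer is flooded with stitching requests or incoming edges. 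A secondary obstacle, relevant to the self-stabilizing setting rather than to this lemma in isolation, is that the supplied labels need not encode a genuine labelled tree at all; the transformation must therefore be phrased so that a peer only ever creates an edge that is justified by labels it has \emph{locally verified} to be mutually consistent (in proof-labeling-scheme style), and so that a peer can retract the $O(1)$ edges it added once verification fails --- but that robustness layer is supplied by the surrounding framework.
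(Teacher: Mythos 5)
There is a genuine gap, and it sits exactly at the point you flag as the crux. Your construction stitches the children's segments together with edges $\mathrm{rmd}(\gamma_i)\to\gamma_{i+1}$, and you realize each such edge by having $v$ send $\gamma_{i+1}$'s id to $\gamma_i$, which then ``forwards the instruction one more hop to $\mathrm{rmd}(\gamma_i)$.'' But in the overlay model of this paper a node can send a message to another node only if it already holds that node's id as a neighbor, and in a \emph{tree} overlay $\gamma_i$ has no edge to $\mathrm{rmd}(\gamma_i)$, which is a descendant at arbitrary depth down the rightmost spine of $\gamma_i$'s subtree. Merely knowing the \emph{identifier} $\mathrm{rmd}(\gamma_i)$ from a label does not make it reachable in one hop; reaching it honestly costs $\Theta(\mathrm{depth}(T))$ rounds. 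And stuffing $\mathrm{rmd}(v)$ into the advice is precisely what the paper's Sybil-resistance requirement forbids: the advice may only name ids already present in a node's neighborhood, since otherwise a malicious supervisor can inject arbitrary ids. So the one mechanism that is supposed to ``short-circuit'' the depth dependence is not available, and without it your algorithm degenerates to the $\Theta(\mathrm{depth}(T))$ walk you correctly identify as inadequate.

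The paper's algorithm avoids the rightmost-descendant problem altogether rather than short-circuiting it. The label is just the parity of the depth, and the orientation of each subtree's segment \emph{alternates} with that parity: for a subtree rooted at $v$ with $l(v)=0$ the output segment runs from $v$ to $\min(C_T(v))$, and with $l(v)=1$ it runs from $\max(C_T(v))$ to $v$. The invariant maintained by induction on the height is therefore that both endpoints of every subtree's segment lie within tree-distance $1$ of the subtree root, so every stitching edge joins two nodes at tree-distance at most $2$ (parent--child or sibling--sibling), and all edges can be created in $O(1)$ rounds using only tree edges and $O(1)$ new edges per node. (The resulting path is not the DFS preorder your segments would produce, but any Hamiltonian path suffices for the flyover construction.) If you want to repair your write-up within this model, you need to replace the $\mathrm{rmd}$ labels with an argument of this alternating-orientation type, or otherwise guarantee that every segment endpoint is $O(1)$ tree-hops from the node responsible for stitching it.
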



For robustness against a (potentially) malicious supervisor, the peers (simultaneously) run a ``base algorithm $\mathcal{A}_0$,'' which when run independently (i.e., without any supervisor) is known to self-stabilize in $\mathrm{R}(\mathcal{A}_0)$ rounds. When a peer receives an advice message, it executes a set of algorithms (cf. Section \ref{sec:flyover}), designed to \emph{expedite} the base algorithm, so that the network can quickly self-stabilize. Combined with our design of advice, we get the following upper bounds (based on the supervisor) for converging to the sorted line network.

\begin{theorem}[Upper Bound] \label{conf-theorem2:upper-bound}
The network self-stabilizes in $O(\log n)$ rounds if the supervisor is honest; otherwise, the network self-stabilizes in $O(\mathrm{R}(\mathcal{A}_0) + \log n)$ rounds.
\end{theorem}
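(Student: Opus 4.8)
The plan is to establish the two regimes separately while treating the system as the concurrent composition of the advice-processing algorithms with the base algorithm $\mathcal{A}_0$, and to argue both \emph{closure} (the sorted line, together with a terminal state of the advice-processing, is a fixed point) and \emph{convergence} within the claimed time bounds. Throughout I would use the standard overlay-network invariant that the (weakly) connected component containing a peer is never destroyed, so that a peer's view is always of a connected network; I would also insist, as a design requirement, that the advice-processing algorithm writes only to its own private variables and that any edges it adds are ``tagged'', so that $\mathcal{A}_0$ treats them exactly like the arbitrary extra edges it is already designed to tolerate. This disjointness is what lets me reason about the two algorithms almost independently.

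For the honest supervisor, I would decompose the $O(\log n)$ budget into three phases. (1) \emph{Construction}: from the correct advice --- which I take to encode each peer's rank in the sorted order together with the proof-labeling certificate --- the peers build the promised hypercubic overlay; using a standard pipelined pointer-doubling argument over the initially connected component, this completes in $O(\log n)$ rounds. (2) \emph{Verification}: the peers run the proof-labeling verifier for ``this is a consistent sorted-order labeling'' over the hypercube; since the hypercube has diameter $O(\log n)$, the required aggregation and comparison finish in $O(\log n)$ rounds and, because the certificate is correct, every peer accepts. (3) \emph{Finalization}: each peer uses hypercubic routing to learn its true predecessor and successor and to assemble the sorted line, appealing to Theorem~\ref{conf-theorem:tree-to-path} for the final tree-to-path conversion in $O(1)$ rounds and $O(1)$ extra edges per peer. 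Summing the three phases gives $O(\log n)$, and closure holds because once the sorted line is present and every verifier accepts, no rule of either algorithm fires.

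For a malicious (or merely corrupted) supervisor, the key point is that the advice-processing must be itself self-stabilizing and must \emph{fail safely}. Starting from an arbitrary configuration, I would argue that within $O(\log n)$ rounds one of two things happens: either the construction and verification phases go through exactly as above and the sorted line is formed (the advice turned out to be harmless), or some peer detects a local inconsistency --- guaranteed by the proof-labeling property whenever the global certificate is not a valid sorted-order labeling, and by explicit local structural checks whenever the ``hypercube'' that was optimistically built is malformed. A detecting peer enters a reject state, which I would propagate along the optimistically built structure in $O(\log n)$ rounds; every rejecting peer then deletes all advice-tagged edges and resets its advice-processing variables (only $O(1)$ edges per peer, by the construction invariant) and permanently ignores that advice message. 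After this cleanup, which is completed within $O(\log n)$ rounds of the start, the configuration is one in which $\mathcal{A}_0$'s invariants hold and no advice-tagged edges remain, so $\mathcal{A}_0$ self-stabilizes in a further $\mathrm{R}(\mathcal{A}_0)$ rounds, for a total of $O(\mathrm{R}(\mathcal{A}_0) + \log n)$; this upper bound subsumes the sub-case where the sorted line was formed directly.

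The main obstacle I expect is the circular dependency in the malicious case: detecting that the advice, hence the built structure, is bad seems to require a good structure over which to run the detector and propagate the rejection, yet that structure is exactly what may be corrupted. I would resolve this by making the verifier's certificate include the hypercube structure itself, so that the $O(\log n)$-round verification simultaneously checks ``the labels are consistent'' and ``the overlay really is the hypercube these labels demand'', with any discrepancy locally visible; and by routing the rejection wave not only along the optimistic hypercube but also along the ever-present weakly connected component, so that even a degenerate build cannot stall cleanup beyond $O(\log n)$ rounds --- here I would lean either on a diameter bound for the relevant structure or on the fact that rejection only needs to reach peers holding advice-tagged edges, which by the construction invariant lie within $O(\log n)$ hops. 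The remaining care is purely composition bookkeeping: checking that $\mathcal{A}_0$'s tolerance of extra edges is never violated at an intermediate step, that the advice-processing never re-enters ``construct'' mode once it has rejected (so cleanup is not undone), and that a fresh transient fault is the only way to re-arm the mechanism --- all of which follow from the private-variable and tagged-edge discipline set up at the outset.
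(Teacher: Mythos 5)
Your high-level architecture matches the paper's: a dual-state composition in which advice-processing writes only to its own memory, an optimistic build-then-verify of a hypercubic structure in $O(\log n)$ rounds when the advice is good, and a fail-safe rejection followed by falling back on $\mathcal{A}_0$ otherwise. However, there is a genuine gap in your malicious case, precisely at the ``circular dependency'' you flag. Your proposed resolution --- routing the rejection wave ``along the ever-present weakly connected component'' --- does not give $O(\log n)$: that component can have diameter $\Omega(n)$, and your fallback claim that peers holding advice-tagged edges ``lie within $O(\log n)$ hops'' is unjustified when the initial configuration is arbitrary and a malicious supervisor plants several disjoint, internally consistent structures over different subsets of nodes. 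The paper's actual mechanism is different: every node in a flyover continuously pushes its flyover id to \emph{all} ids in its memory, including the base algorithm's memory, and every node not in that flyover (or in a different one) immediately replies with a rejection; combined with the Delegate-after-Reversal primitive, which forces every boundary edge between the flyover and its complement to become visible in both directions within $O(1)$ rounds, some node on the boundary detects the mismatch in $O(1)$ rounds of the flyover id stabilizing, and the rejection then spreads \emph{inside} the flyover along its $O(\log n)$-diameter shortcuts (Lemmas \ref{lemma:winged-backbone}--\ref{lemma:mult-backbone-exit}). Detection happens at the boundary in constant time; only propagation uses the built structure, and only among nodes that actually have it. Without this (or an equivalent) idea, your argument cannot rule out that a strict-subset flyover quietly survives.

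A second, related omission is the connectivity bookkeeping after cleanup. You assert that once rejecting peers delete their advice-tagged edges, ``$\mathcal{A}_0$'s invariants hold,'' but the nontrivial claim is that the subnetwork stored in $\mathrm{M}(\mathcal{A}_0)$ is itself weakly connected --- otherwise $\mathcal{A}_0$ cannot converge at all. The paper handles this with the explicit $\mathbf{Flush}$ operation (every id evicted from advice memory is handed to $\mathcal{A}_0$'s address variables, never dropped) plus an edge-coloring argument in Theorem \ref{theorem:final} showing every ``red'' or ``blue'' edge turns ``green'' within $O(\log n)$ rounds, so $\mathrm{N}(\mathcal{A}_0)$ becomes and stays weakly connected. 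Your honest-case phase decomposition is essentially right, though note two refinements in the paper: the supervisor must first wait $O(\log n)$ rounds for all nodes to tear down stale structures and become attentive before snapshotting, and Sybil resistance forbids the advice from naming sorted-order neighbors directly --- the sorted line is instead obtained by routing a locally certified spanning tree $T^*$ over the verified flyover, which is also what makes a ``split into two half-lines'' attack detectable.
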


Finally, we prove that this logarithmic convergence time barrier is unavoidable for any algorithm if the advice can be arbitrarily corrupted. In particular, since a malicious supervisor could potentially inject adversarial ids into the system \cite{Douceur02}, a peer must not make connections to any arbitrary peer id given in the advice. Thus, this constraint, that new edges can only be added to any peer's \emph{current} neighborhood, leads to the following lower bound (for converging to any network).

\begin{theorem}[Lower Bound] \label{conf-theorem:lower-bound}
There exists some initial configuration such that the convergence time for any pair of supervisor and overlay algorithms is $\Omega(\log n)$ rounds.
\end{theorem}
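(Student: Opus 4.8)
The plan is to exhibit a single ``scrambled line'' as the bad initial configuration and to show, by a knowledge-ball growth argument, that no supervisor can help a node learn the identifier of a node lying far away along that line in fewer than $\log n$ rounds. The point is that on a bounded-degree configuration the supervisor cannot be of any use: since its advice may only point a peer to identifiers that peer already holds, it can never shorten the distance an identifier must travel.

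\textbf{The bad configuration.} Fix the target to be the sorted line $v_1 - v_2 - \cdots - v_n$ (any target with at least one edge works the same way). As the initial configuration take the overlay whose underlying undirected graph is the path $P = (v_1, v_n, v_{n-1}, \dots, v_3, v_2)$, with each consecutive pair connected in both directions. This is a legitimate starting state: it is connected and every peer has at most two neighbors. Its only relevant feature is that $v_1$ and $v_2$ — adjacent in the target — are the two endpoints of $P$, so $\mathrm{dist}_P(v_1,v_2) = n-1$, where $\mathrm{dist}_P$ denotes distance in the fixed path $P$.

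\textbf{Knowledge balls grow by at most doubling.} Consider an arbitrary execution from this configuration driven by an arbitrary supervisor and an arbitrary overlay algorithm. For a round $t$ and a peer $u$, let $K_t(u)$ be the set of identifiers $u$ knows at the end of round $t$, and let $B_P(u,r)$ be the set of nodes within $\mathrm{dist}_P$-distance $r$ of $u$. I will prove $K_t(u) \subseteq B_P(u, 2^t)$ for every $u$ and $t \ge 0$ by induction on $t$. The base case $t=0$ holds because initially $u$ knows only its at most two path neighbors, at distance $1 = 2^0$. For the step, assume the claim after round $t$ and consider how $u$ can acquire a new identifier $y$ in round $t+1$: (i) $y \in K_t(u)$ already, so $y \in B_P(u,2^t) \subseteq B_P(u,2^{t+1})$; (ii) some peer $x$ sends $u$ a message mentioning $y$, which requires both $u \in K_t(x)$ and $y \in K_t(x)$, whence $\mathrm{dist}_P(u,x) \le 2^t$ and $\mathrm{dist}_P(x,y) \le 2^t$, so $\mathrm{dist}_P(u,y) \le 2^{t+1}$ by the triangle inequality; or (iii) the supervisor's advice causes $u$ to adopt $y$ — but here the anti-Sybil constraint of the model is decisive: the supervisor may only reference identifiers already in $u$'s current neighborhood, so $y \in K_t(u) \subseteq B_P(u,2^t)$. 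In all cases $y \in B_P(u,2^{t+1})$, closing the induction. Note this argument needs no bound on message size: even if a peer forwarded its entire knowledge set in one message, the bookkeeping in case (ii) still goes through; and phrasing everything over the underlying undirected graph makes it insensitive to edge directions.

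\textbf{Conclusion and main obstacle.} Suppose the network has converged to the sorted line by the end of round $T$. Then $v_1$ and $v_2$ are overlay-adjacent, so $v_2 \in K_T(v_1)$ or $v_1 \in K_T(v_2)$; by the claim this forces $\mathrm{dist}_P(v_1,v_2) \le 2^T$, i.e.\ $2^T \ge n-1$, hence $T \ge \log_2(n-1) = \Omega(\log n)$. The only genuinely delicate point of the proof is step (iii): one must pin down the model so that supervisor advice to a peer never enlarges that peer's set of known identifiers beyond its current neighborhood — precisely the Sybil-resistance requirement motivating the whole setup, and exactly what stops the supervisor from short-circuiting the line. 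Everything else is routine; minor care is needed only to enumerate all ways a peer can learn an identifier in one round (received messages, introductions/delegations by neighbors, and advice) and to keep the distance bookkeeping over the undirected graph so that directed edges cause no trouble.
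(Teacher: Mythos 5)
Your proposal is correct and follows essentially the same approach as the paper: both arguments place two target-adjacent nodes at distance $\Theta(n)$ in the initial graph and use the observation that, under the Sybil-resistance constraint on advice, the set of identifiers a node can learn grows by at most a factor of two per round, so $\Omega(\log n)$ rounds are needed before the two endpoints can meet. Your explicit induction $K_t(u)\subseteq B_P(u,2^t)$ is just a cleaner formalization of the paper's bound on the gaps between successive holders of the identifier.
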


\subsection{Related Work}
Peer-to-peer (P2P) systems have been in the spotlight for several decades (see, e.g., \cite{RowstronD01, StoicaMKKB01, RatnasamyFHKS01, MalkhiNR02, MaymounkovM02}), due to their multifaceted architectures that give rise to different algorithmic challenges in many distributed systems, including but not limited to load-balancing (e.g., \cite{KargerR06, Pa-FelberKSS14}), overlay networks (e.g., \cite{PanduranganFOCS01, MaoDVKS20}), agreement (e.g., \cite{AugustinePODC13, AugustineJCSS15}), publish/subscribe systems (e.g., \cite{Pa-ChandF05, ChocklerMTV07, Pa-BianchiFG07}), Sybil defense (e.g., \cite{GuptaJCSS23, DaniITCS24}), etc.

In many P2P systems, a cornerstone problem is building and maintaining a desirable overlay network. Since these systems need to be highly resilient to failures, researchers have extensively worked on self-stabilizing algorithms for several networks; for e.g., variants of Skip and Chord graphs \cite{KniesburgesSPAA11, jacob2014skip+, Berns21}, hypertrees and radix trees \cite{DolevK08, AspnesW07}, small-world graph \cite{kniesburges2012self}, metric graph \cite{Gmyr-ThCompSys19}, etc; for a comprehensive survey, see \cite{feldmann2020survey}.

Graph linearization is the main \emph{building block} for many of the above self-stabilizing networks. Specifically, a linearization algorithm (e.g. \cite{onus2007linearization}) forms an \emph{ordering} of peers (e.g. based on their ids), whose structure is then exploited to build other components of the desired network. To the best of our knowledge, the state-of-the-art algorithm \cite{jacob2014skip+} that can recover the sorted line network runs in $O(\log^2 n)$ rounds. Notably, the Transitive Closure Framework (TCF) \cite{BernsGP13} is a generic approach for the peers to recover any (locally-checkable) network in $O(\log n)$ rounds, but it is not scalable as it requires the peers to first form a clique network (i.e., each peer's degree necessarily increases to $\Omega(n)$), and then converge to the desired network.

Recently, the algorithmic framework of using predictions \cite{MitzenmacherV22} has gained immense popularity in several fields of algorithms (e.g., \cite{KraskaSIGMOD18, Mitzenmacher18, BhaskaraC0P20, JiangICLR20, LattanziSODA20, LykourisJACM21}) including distributed and networked algorithms, for e.g., contention resolution \cite{GilbertNVW21}, TCP acknowledgment \cite{ImMINFOCOM23}, buffer management \cite{AddankiP024} and graph algorithms \cite{boyarRL25}. These algorithms are given an extra string of bits called ``prediction'' to solve the problem. Typically, the prediction would give information about some ``structure'' of the given input. However, the main challenge is to design \emph{robust} algorithms (incl. the prediction) that can retain their original performance \emph{regardless} of what the prediction is, whilst performing much better if the prediction is helpful.

\subsection{Network Model \& Problem Definition} \label{sec:model}
\pg{Entities.} The \emph{system} has two entities: nodes and a supervisor.

\emph{Nodes} are computing entities that aim to distributively form a \emph{target topology}, denoted by $G^*$, through their connections, where we say that two nodes are ``connected'' if they can communicate (i.e., send/receive messages) with each other. We colloquially use the word ``network'' (formally defined soon) to capture the set of nodes and their connections. The number of nodes is denoted by $n$. Furthermore, the nodes do not have any global knowledge (e.g., network size, shared clock, etc). The nodes always execute the algorithms prescribed to them.

A \emph{supervisor} is a single computing entity, designed to help the nodes form the target topology $G^*$ through communication with the nodes to learn the (current) network, compute over it, and propose connections between nodes. If the supervisor follows its algorithm, we say that the supervisor is \emph{honest}; otherwise, it is said to be \emph{malicious}. There is no restriction on the messages sent by a malicious supervisor, which may even include suggestions for connecting to non-existing nodes.

\pg{Node ids.} Each node has a unique \emph{identifier} (\emph{id}). Each node's memory consists of two \emph{types} of variables: address and non-address variables. The address variables are exclusively used to store the ids of other nodes (referred to as \emph{neighbors}), whereas the non-address variables cannot be used to store node ids.

\pg{Communication Rounds.} The system proceeds in \emph{synchronous} rounds, i.e., a message sent by a node in any round is received by its recipient at the beginning of the next round. In every round, each node runs the given algorithms, given its variables' assignments (and messages in the channel), resulting in changes to its memory and new messages being sent. 

\pg{Comm. among Nodes.} Each node $u$ can send a message to another node $v$ if the node $u$ has node $v$'s id in its memory. If a message is sent to any node $u$, then it is added to node $u$'s \emph{channel} $u.\mathit{Ch}$. Once a message is added to any node's channel, the node is able to read the contents of the message, after which the message is removed from the channel. In general, a node is not informed about the (id of the) sender of a message, unless the sender explicitly adds its own id to the message.

\pg{States and Configurations.} Consider the standard \emph{state} model for self-stabilization \cite{dijkstra74}. The variables and channel of a node are stored in its \emph{register} (aka, mutable memory) and each node has read/write access to its register. Each node also has a non-mutable memory that contains its id and the code of our algorithms. The set of values assigned to the variables and channel of a node is called the \emph{state} of a node. Moreover, the tuple of states of all nodes is called the \emph{configuration}, and the (simultaneous) execution of the code by all the nodes, over consecutive rounds, forms a \emph{sequence} of configurations.

As we consider a distributed message-passing system where the communication links can themselves change over time, each node needs to \emph{explicitly} communicate some information about its own state to (some of) node ids stored in its register, for the network to reach the target topology (see, \cite{jacob2014skip+, feldmann2020survey}), unlike in the standard model \cite{dijkstra74}, where every node reads its own register and the registers of all its neighbors, executes the algorithms and updates its register, in any given step.

\pg{Communication Graph and Network.} Let $G_C = (V, \mathcal{E}_C = E_C \cup I_C)$ be the \emph{communication graph} over the set of nodes $V$ in configuration $C$, where there exists an \emph{explicit} edge $(u, v) \in E_C$ if a node $v$'s id is stored in any address variable of node $u$'s memory, or an \emph{implicit} edge $(u, v) \in I_C$ if a node $v$'s id is in node $u$'s channel. The \emph{network} is simply the communication graph restricted to explicit edges.


\pg{Comm. between Supervisor and Nodes.} The supervisor is aware of the node membership in the system, i.e., ids of all the nodes. It can send a request to any node to learn the (current) communication graph. Moreover, the supervisor can send \emph{advice} messages to help nodes quickly converge to the desired target topology. As is typical, we assume that the supervisor cannot forge messages sent by nodes. 

\pg{Functions, Rules and Actions.} Each node runs a sequence of \emph{algorithms}, and each algorithm consists of a sequence of \emph{functions}. A function has the form, $\langle \text{Label} \rangle\text{: } \langle \text{Guard} \rangle \rightarrow \langle \text{Rule 1}; \text{Rule 2}; \dots  \rangle $, where \emph{label} is the function name, \emph{guard} is a Boolean predicate over the node's variables, and \emph{rules} define the \emph{actions} by the node, where an action (typically) refers to either updating a local variable or sending a new message. Each rule is itself of the form $\langle \text{Guard} \rangle \rightarrow \langle \text{Action 1}; \text{Action 2}; \dots  \rangle $.

\emph{Pseudocodes.} Each line in a function is typically dedicated to a single rule, but if there are multiple rules in a line, then the last set of actions in that line belongs to the last rule. Moreover, a function may \emph{call} another function in any of its rules. (In the standard state model \cite{dijkstra74}, the algorithms are directly defined by a set of rules; we extend it for better readability.)

\emph{Local Computation.} In each round, each node checks the guard of every function in the specified order; if the guard is satisfied, the node executes the rules in the function. Similarly, a node checks the guard of a rule; if it is satisfied, the node proceeds to execute the actions in it. Whereas, if no guard is specified, the function or rule is always executed.

\subsection*{Problem Definition}

Given any (weakly) connected communication graph, the nodes must form (and stay in) the target topology. For graph linearization, the target topology is the network where each node is connected to its neighbor(s) in the \emph{sorted} order, based on their ids. As the nodes also need to process the advice given by the supervisor, we allow them to store a \emph{small} number of extra edges, which help in quickly linearizing the network.

\pg{Initial, Target and Legal Configurations.} Let $\mathcal{C}$ and $\mathcal{T} \subseteq \mathcal{C}$ be the set of \emph{all} possible configurations and the set of \emph{target} configurations, respectively. Let $\mathcal{I}$ be the set of \emph{initial} configurations where $\forall C \in \mathcal{C}, G_C \text{ is weakly connected} \iff C \in \mathcal{I}$. (We need only weak connectivity for the initial configuration, so the non-address variables can be arbitrarily corrupted.)

The ``sorted path'' network \cite{onus2007linearization} is the target topology, $G^* = (V, E^*)$, where $\forall u, v \in V, (u = \mathit{succ}(v) \lor v = \mathit{succ}(u)) \iff (u, v) \in E^*$, where $\mathit{succ}(u) = \min\left(\{ v \mid v.\mathit{id} > u.\mathit{id} \}\right)$. Thus, $\mathcal{T}$ is the set of configurations where the network is $G^*$.

Let $N_C(u) = \{ v \mid (u, v) \in E_C \}$ denote the set of all edges stored by a node $u$ in any configuration $C$. Let $\mathcal{L}$ be the set of \emph{legal} configurations such that for each node $u$, $\forall C' \in \mathcal{L}, \exists C \in \mathcal{T}, N_{C}(u) \subseteq N_{C'}(u)$ and $\forall C' \in \mathcal{L}, \forall C \in \mathcal{T}, |N_{C'}(u)| = O(|N_{C}(u)| + \log n)$. In other words, we allow every node to store at most \emph{logarithmic} number of extra edges (i.e., besides the edges in the sorted path) in a legal configuration.

\pg{Self-Stabilization.} We say that an algorithm is \emph{self-stabilizing} if the network, starting from any initial configuration, reaches a legal configuration (\emph{convergence}), and stays in one (\emph{closure}).
\section{Preliminaries} \label{sec:prelims}
\subsection{Dual-State Approach} \label{subsec:dual-state}
We rely on a \emph{dual-state} algorithm design, where the nodes always run a ``base algorithm'' $\mathcal{A}_0$ (in the background), and an algorithm $\mathcal{A}$ is executed for processing the advice given by the supervisor. Moreover, the base algorithm is known to self-stabilize in $\mathrm{R}(\mathcal{A}_0)$ rounds, without any interaction with the supervisor. The objective of algorithm $\mathcal{A}$ is to use the advice for introducing the neighbors in $G^*$ to each other in $O(\log n)$ rounds, whilst adding as few additional edges as possible.

We distinguish the memory (including channel) used for the algorithms $\mathcal{A}$ and $\mathcal{A}_0$ as $\mathrm{M}(\mathcal{A})$ and $\mathrm{M}(\mathcal{A}_0)$, respectively. Since these algorithms are run in tandem, to ensure connectivity for the base algorithm, we define a ``Flush'' operation, denoted by $\mathbf{Flush}(\cdot)$, that moves a given set of node ids to any (arbitrary) address variable(s) of $\mathrm{M}(\mathcal{A}_0)$; this definition is as general as possible so that our approach works for any base algorithm.
\subsection{Universal Overlay Primitives} \label{subsec:univ-prim}

Let us recall the four \emph{universal} communication primitives (for edge manipulations) in the design of distributed self-stabilizing overlay algorithms (see, e.g., \cite{scheideler2019complexity, feldmann2020survey}); see Figure \ref{fig:univ-prim}.

While preserving connectivity, they are used to transform any weakly connected graph to any other weakly connected graph.

\begin{figure*}[t]
    \centering \includegraphics[width=\textwidth]{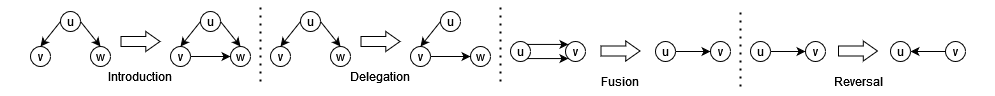}
    \caption{The universal communication primitives, used by our algorithms, incl. base algorithm.}
    \label{fig:univ-prim}
\end{figure*}

\subsection{Local Certification of Spanning Tree} \label{subsec:local-cert}

Our algorithm for the supervised setting exploits a classic result in distributed computing: local certification of spanning tree \cite{AfekWDAG90}. In general, in local certification, there exist a prover and a distributed verifier (\emph{fixed} network of nodes). The prover assigns a small \emph{certificate} to each node. Every node inspects its certificate and the certificates of its neighbors to \emph{verify} a given network property (e.g., spanning tree). If that property is specified correctly at each node, there exists a certificate assignment so that each node ``accepts'' its certificate; otherwise, regardless of what certificates are assigned by the prover, at least one node ``rejects'' its certificate. If the verification algorithm has access to the node id, that type of certification is called \emph{proof-labeling scheme} \cite{KormanDC10}. For more details, we refer to a recent survey \cite{Feuilloley-DMTCS21}.


Specifically, for each node, a certificate for a \emph{spanning tree}, with any node chosen as the root, consists of the information: id of the root, id of the parent (in that tree), and distance of the node from the root. To locally verify the encoded spanning tree, each node carries out \emph{two} checks: (1) every neighboring node has the \emph{same} root id, and (2) its distance is \emph{one} more than the distance of its parent (called ``distance-check''); and the root node checks that its distance is 0.

\section{Sybil Resistant, Compact and Robust Advice} \label{sec:advice}
Here, we provide the key intuition and details of the advice.

\subsection{Network Snapshot and Properties of Advice}

First, if the supervisor does not know anything about the \emph{current} communication graph, there is no hope of giving useful advice because the supervisor should be restricted to a node's neighborhood whenever an edge is proposed. However, as P2P systems are dynamic, the communication graph could have changed. Thus, there should be some mechanism where the nodes can send their neighborhoods when requested by the supervisor. Once the neighborhood is sent, a node can maintain it for $O(1)$ rounds, while continuing to run the base algorithm. (The details on this interaction with the supervisor are given in Section \ref{subsec:distrib-TtP}.) For simplicity, we assume that the supervisor has access to a \emph{snapshot} of the network, on which the advice can be computed, and that the snapshot is maintained by the nodes. 

Given any network snapshot, the supervisor can compute a (close to) minimum sequence of the universal primitives (cf. Section \ref{subsec:univ-prim}), so that the network forms the target topology (see \cite{scheideler2019complexity} for details). However, since the nodes do not have any global information (e.g., round number, etc), they would need to fully trust the supervisor for proposing appropriate edges to eventually form the target topology. This can be problematic for the convergence time as well as the number of edges stored by the nodes over time. Thus, there should be some mechanism by which a node can \emph{quickly verify} whether the proposed edges are indeed helpful, whilst also limiting the number of additional edges stored due to the supervisor.

\pg{Desirable Properties.} Based on these insights, we propose that the advice message at any given node has the following properties to achieve both efficiency and robustness.
\begin{enumerate}
    \item \emph{Sybil-Resistant.} For each node $u$, and for any node id $v$ present in node $u$'s advice message but not present in node $u$'s memory or channel, node $u$ neither assigns $v$ to any variable (i.e. store in its memory), nor sends $v$ to any other node (i.e., circulate it in the network).
    \item \emph{Compact.} Size of the advice is small, i.e., a message of size at most $O(\log n)$ bits is given to each node.
    \item \emph{Robust.} If any node received an incorrect advice, every node rejects its advice (if that node had received advice), and restores the dedicated memory for handling advice (i.e., memory $\mathrm{M}(\mathcal{A})$), in $O(\log n)$ rounds; whereas, if every node received correct advice, the nodes will be in a legal configuration in $O(\log n)$ rounds.
\end{enumerate}

\subsection{Building a Global Structure from any Snapshot} \label{subsec:global-struct}

Given the properties of advice, we describe two immediate challenges towards designing the advice. First, we should be able to connect every pair of nodes that are adjacent in the sorted-path topology but may be situated \emph{far away} from each other in any given snapshot, in $O(\log n)$ rounds. Second, if any node received an incorrect advice, the other nodes should be informed about it in $O(\log n)$ rounds by a \emph{scalable} mechanism so that they can also quickly reject their advice.

One can observe that the above challenges arise from a \emph{lack} of a \emph{global structure} in the network. Thus, our first attempt is to build a lightweight structure, specifically a \emph{hypercube}, from the advice given to the nodes. Building a hypercubic overlay can be used to ``kill two birds with one stone'': first, a node can rely on a \emph{local} routing algorithm, where a message from a source is forwarded to the next node and so on until it reaches the destination, to initiate a \emph{new} connection (e.g., for forming the sorted path); second, if a node detects any fault with the advice, it can \emph{quickly} inform the other nodes in the hypercube, after which the nodes can leave the hypercubic structure.

\begin{figure*}[t]
    \centering \includegraphics[width=1\textwidth]{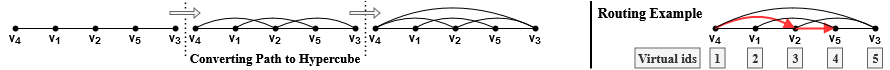}
    \caption{Five nodes hierarchically introduce their neighbors to each other to form exponentially spaced-out \emph{shortcuts}. If $v_4$ needs to send a message $m$ to the node with virtual id $4$ (i.e., $v_5$), $m$ is first forwarded to the ``closest'' shortcut (wrt. \emph{virtual ids}), i.e., $v_2$; this is repeatedly done until $m$ eventually reaches $v_5$.}
    \label{fig:pointer-doubling}
\end{figure*}

Now, we enlist the steps for building a hypercubic structure.
\begin{enumerate}
    \item \emph{Snapshot to Tree.} Supervisor chooses a rooted spanning tree in the snapshot. As part of the advice, each node is given the id of its parent and the distance to the root.
    \item \emph{Tree to Path.} Each node performs a \emph{local transformation} (see Section \ref{subsec:tree-to-path}, for the Tree-to-Path algorithm), so that all the nodes are part of a \emph{single path}. As part of the advice, nodes are given unique ``virtual ids,'' going from 1 to $n$, from one end of that path to the other.
    \item \emph{Path to Hypercube.} By a pointer-doubling approach \cite{JaJa92} (over that path), each node hierarchically introduces its neighbors to each other. Thus, in $O(\log n)$ rounds, every node connects to ``shortcuts,'' at distances of $1, 2, 4, \dots$ from it, forming what we call a ``flyover'' which is a hypercubic structure\footnote{Note that a hypercube is technically defined for a network size that is a power of 2. In general, flyover belongs to a family of hypercubic networks, and that path and collection of shortcut edges contain the hypercube edges.}; for e.g., see Figure \ref{fig:pointer-doubling}.
    \item \emph{Routing.} Once the nodes form a flyover, every node can route messages to any other node \cite{leighton2014introduction} (c.f. Figure \ref{fig:pointer-doubling}), using the \emph{virtual ids} and \emph{shortcuts}, in $O(\log n)$ rounds. 
\end{enumerate}

At this point, an astute reader may note that multiple flyovers could form in different parts of the network, whereas some nodes may not even have received any advice. Such \emph{concurrency} issues are exactly what makes this problem of handling a malicious supervisor difficult. In Section \ref{subsec:c-and-v-flyover}, we outline how this flyover is carefully constructed and verified, and how small information (i.e., a unique identifier for flyover) is exchanged between the nodes (within and outside the flyover), to quickly detect faulty advice.

\subsection{Locally Verifying the Advised sorted Path} \label{subsec:advised-sorted-path}


After we have a desriable global structure (i.e., all the nodes forming a flyover over a sequentially numbered path), one natural approach is that for each node, as part of its advice, send the virtual id (i.e., position on that numbered path) of its neighbor(s) in the target topology. Each node can then quickly route its \emph{own id} over the hypercubic structure to the node with that virtual id to create a new connection, thereby forming the ``advised'' target topology. But as the nodes have a \emph{local} view, they cannot easily \emph{verify} if the actual target topology was indeed formed. 

For example, let the sorted path be $(u_1, u_2, \dots, u_n)$, where $u_1$ is the node with the smallest node id. If the advice for node $u_i$ consists of the virtual ids of nodes $u_{i-1}$ and $u_{i+1}$, the node $u_i$ can send its own node id to them via ``hypercubic routing'' (by virtual ids and shortcuts; e.g., see Figure \ref{fig:pointer-doubling}) \cite{leighton2014introduction}, and make a connection to them. Once these edges are added, each node can \emph{locally verify} that it has at most 2 edges, and if it has 2 edges, it is the ``middle'' node id (in sorted order). But a bad supervisor can give advice so that the paths $(u_1, \dots, u_{n/2})$ and $(u_{n/2}, \dots, u_{n})$ are formed, without any node detecting any fault with the advice. In other words, the ``advised'' sorted path (i.e., connections suggested by the supervisor via virtual ids) do not form a \emph{single connected} component. Being able to (distributively) detect if any two components of the overlay are connected, is a non-trivial task with the best-known technique (i.e., linear-probing), requiring $O(n)$ rounds \cite{feldmann2020survey}.

\begin{algorithm}
\caption{Supervisor Algorithm}
\label{alg:cloud-alg}
\begin{algorithmic}[1]
\REQUIRE Snapshot (connected and undirected graph) $G_\mathcal{S} = (V, E_\mathcal{S})$.
\STATE Let $\mathit{Adv}(u)$ denote the advice message sent to a node $u$. Moreover, if $\mathit{Adv}(u).x = y$, then ``$x = y$'' appears in the advice message for the node $u$; Let $\mathrm{dist}_T(u, v)$ denote the distance between any nodes $u$ and $v$ in a tree $T$;
\STATE \COMMENT{Relevant graphs (definitions)}
\STATE Let $T_{\mathcal{S}}$ be a labelled rooted spanning tree in $G_\mathcal{S}$ with any node $r$ as root, and $l(\cdot)\text{: } V \rightarrow \{0,1\}$ as the labeling where $l(r) = 0 \land (\forall v \in V, l(v) = l(r) + \mathrm{dist}_{T_{\mathcal{S}}}(v, r)\mod 2)$; 
\STATE Let $P_{\mathcal{S}}$ be (undirected version of) the path returned by Tree-to-Path algorithm (see, Section \ref{subsec:tree-to-path}) over input $T_{\mathcal{S}}$; 
\STATE Let $G^* = (V, E^*)$ be the (undirected) sorted path where  $\forall u, v \in V, u = \mathit{succ}(v) \lor v = \mathit{succ}(u) \iff (u, v) \in E^*$;
\STATE Let $T^*$ be the rooted spanning tree, called the \emph{sorted-path tree}, in the sorted path $G^*$ with node $r$ as the root;
\FOR{each node $u \in V$}
\STATE \COMMENT{Virtual id (i.e., position on the path; cf. Section \ref{subsec:global-struct})}
\STATE $\mathit{Adv}(u).\mathit{vID} \coloneqq$ $\mathrm{dist}_{P_{\mathcal{S}}}(u, r) + 1$;
\STATE \COMMENT{To verify the ``advised'' sorted path (cf. Section \ref{subsec:advised-sorted-path})}
\STATE $u \neq r \rightarrow \mathit{Adv}(u).\text{\emph{c-par}} \coloneqq$ $\mathrm{dist}_{P_{\mathcal{S}}}(u', r) + 1$, where $u'$ is the parent of $u$ in $T^*$; \COMMENT{Virtual id of $u$'s parent in $T^*$}
\STATE $\mathit{Adv}(u).\text{\emph{c-dist}} \coloneqq \mathrm{dist}_{T^*}(u, r)$; 
\STATE \COMMENT{For the Tree-to-Path Algorithm (cf. Section \ref{subsec:tree-to-path})}
\STATE $u \neq r \rightarrow \mathit{Adv}(u).\mathit{par} \coloneqq$ node id of $u$'s parent in $T_{\mathcal{S}}$;
\STATE $\mathit{Adv}(u).\mathit{dist} \coloneqq$ $\mathrm{dist}_{T_\mathcal{S}}(u, r)$; 
\ENDFOR
\end{algorithmic}
\end{algorithm}

\begin{figure*}[t]
    \centering \includegraphics[width=\textwidth]{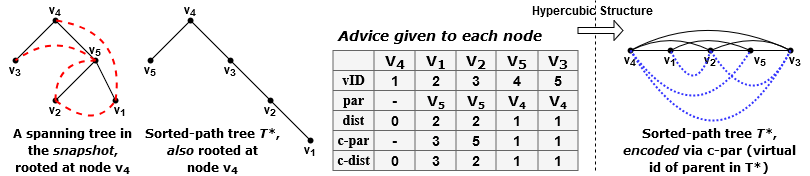}
    \caption{An example of the advice and its outcome (cf. Algorithm \ref{alg:cloud-alg}), where the sorted path $G^*$ is $(v_1, \dots, v_5)$. Here, the (red) dashed edges form the output of the Tree-to-Path Algorithm, and the (blue) dotted edges depict the encoded tree $T^*$. Note that those dotted edges are formed via routing over the flyover (hypercubic structure).}
    \label{fig:example5node}
\end{figure*}

To fix such \emph{connectivity-related} issues in the advised sorted path, we make use of \emph{local certification} of a spanning tree (cf. Section \ref{subsec:local-cert}). Specifically, the sorted path $G^*$ is \emph{encoded} as a rooted spanning tree called ``sorted-path tree $T^*$,'' where the root is the node whose virtual id is 1. As part of advice, each node is given the virtual id of its parent, and the distance to the root, in the tree $T^*$. See Algorithm \ref{alg:cloud-alg}, and Figure \ref{fig:example5node} for a pictorial example. In local certification, the verification of the certificates is carried out over a \emph{fixed} network, but in our case, the nodes in $T^*$ may not even be connected. Thus, each node uses the hypercubic structure for routing and connecting to its parent in $T^*$. Moreover, that structure is also used to quickly disseminate the node id of the root in $T^*$ to all the nodes. In Section \ref{subsec:conn-cert}, we explain how the formation and verification of the tree $T^*$ is accomplished via routing over the hypercubic structure.

\subsection{Tree-to-Path Algorithm} \label{subsec:tree-to-path}

Let $T = (l(\cdot), r, V, E)$ be a labelled rooted undirected tree, where $V$ is the set of vertices, $r \in V$ is the root, $l(\cdot)\text{: }V \rightarrow \{ 0, 1\}$ is a \emph{label} function on vertices where $l(v) = (l(r) + \mathrm{dist}(r, v)) \mod 2$, and $E$ is the set of edges. Let $C_T(v) = \{ u \mid \mathrm{dist}(u, v) = 1 \wedge \mathrm{dist}(v, r) < \mathrm{dist}(u, r) \}$ be the \emph{children} of $v \in V$ in the tree $T$, and there is a \emph{total order} (for e.g., based on ids) on $C_T(v) = \{ u_1, \dots, u_q\}$ such that $u_1 < \dots < u_q$ for any $q \geq 1$. Let $\mathrm{Beg}(P)$ and $\mathrm{End}(P)$ of a directed path $P$ of size at least 2, denote the vertex with no incoming edge and the vertex with no outgoing edge, respectively.


We briefly describe the \emph{Tree-to-Path} algorithm that takes a labelled rooted tree as input, and outputs a directed path. See Figure \ref{fig:TtP} for the pictorial description. Specifically, the algorithm takes in a tree $T = (l(\cdot), r, V, E)$ as input, and outputs a directed path $P$ where $\mathrm{Beg}(P) = r$ and $\mathrm{End}(P) = \min(C_T(r))$ if $l(r) = 0$, whereas it outputs a directed path $P$ where $\mathrm{Beg}(P) = \max(C_T(r))$ and $\mathrm{End}(P) = r$ if $l(r) = 1$. We incorporate this \emph{alternation} based on labels for maintaining the invariant that $\mathrm{Beg}(P)$ and $\mathrm{End}(P)$ are at a distance of 1. Such a simple, fast and local algorithm is amenable to be made self-stabilizing. In Section \ref{subsec:distrib-TtP}, we give the self-stabilizing version of the algorithm that converts any (connected) overlay to a path overlay, using the given advice, in $O(1)$ rounds.

\begin{figure*}[t]
    \centering \includegraphics[width=0.9\textwidth]{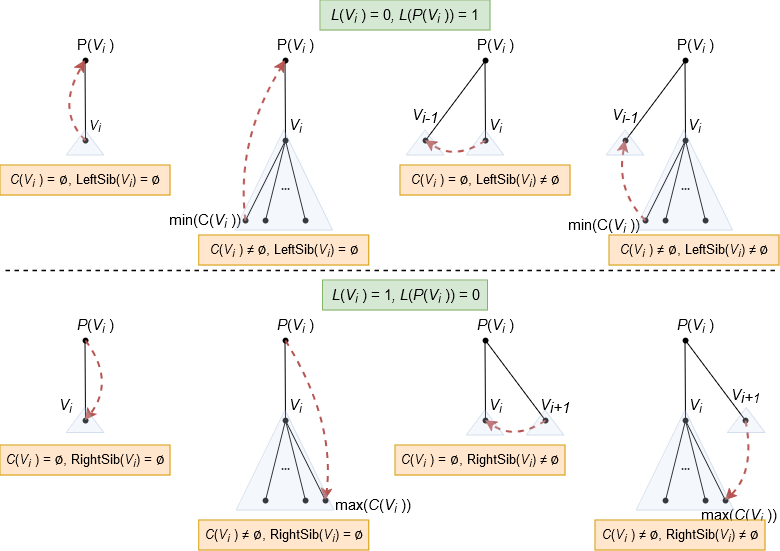}
    \caption{Cases in Tree-to-Path Algorithm for any node $v_{i}$ where $i$ is its order (e.g., via total order on ids) in set $C(P(v_i))$, i.e., $\mathrm{LeftSib}(v_i) = \{ v_j \mid j < i \}$ and $\mathrm{RightSib}(v_i) = \{ v_j \mid j > i\}$. $L(v),P(v)$ and $C(v)$ are label, parent and children of a node $v$ (resp.). Dashed arrow depicts an edge in the final path.}
    \label{fig:TtP}
\end{figure*}
\section{Flyovers: Fast Stabilization via Routing} \label{sec:flyover}

We rely on the dual-state approach (cf. Section \ref{subsec:dual-state}), where a node executes a set of algorithms (for handling the advice) for helping the base algorithm to quickly form the sorted path.

In this section, we give the intuition and pseudocodes for \emph{three} self-stabilizing processes, that are concurrently executed to help the nodes form the sorted path in $O(\log n)$ rounds, if the advice is correct; otherwise, every node ignores the advice in $O(\log n)$ rounds, whilst still executing the base algorithm.  Before delving into the details, we outline the design of our solution, and also describe the roles of all the variables for it.

\subsection{Variables, High-Level Design, and Initialization}

We refer to the hypercubic structure that a node belongs to as a \emph{flyover}, as the nodes ``bypass'' the current network to find ``short paths'' to other nodes. Recall that a flyover is built on some path (over all the nodes) formed due to the advice (cf. Section \ref{subsec:tree-to-path}).

\pg{Variables.} First, we explain all the variables of any node $u$ that make up a flyover.
\begin{enumerate}
    \item $u.L$ and $u.R$ denote the sets of left and right \emph{shortcuts} (resp.); recall, shortcuts are node ids that are at distance $1, 2, 4, \dots$ on the path. $u.S = u.L \cup u.R$ is the set of all shortcuts. Moreover, we refer to $u.S_l(i)$ and $u.S_r(i)$ as the ``$i$-level'' left and right shortcut (resp.).
    \item $u.\mathit{vID}$ is used for storing the \emph{virtual id}; recall, the virtual id is provided in the advice, which is used to determine its \emph{position} on the path. (This is not to be confused with the unique id of a node, see Section \ref{sec:model}; when we just refer to ``id,'' we mean the unique id of the node.) By design, the node with \emph{no} left shortcut has virtual id 1. 
    \item $u.\mathit{flyID}$ is used for storing a ``unique identifier'' for the flyover, called the \emph{flyover id}. By design, the node id of the node with \emph{no} left shortcut, determines the flyover id.
    \item $u.\mathit{exit}$ is a binary variable indicating if the node should \emph{exit} the flyover. If any node exits the flyover, it runs only the base algorithm (until it receives a new advice).
\end{enumerate}

We also use three variables for storing and verifying what we refer to as \emph{connectivity certificate}, based on the local certification of spanning tree; these variables are used for certifying the sorted-path tree $T^*$. (Algorithm \ref{alg:cloud-alg} has the definition of $T^*$.)
\begin{enumerate}
    \item $u.\textit{c-par}$ is used for storing the \emph{virtual id} of the parent of node $u$ in the sorted-path tree $T^*$.
    \item $u.\textit{c-dist}$ is used for storing the \emph{distance} from the root to node $u$ in the sorted-path tree $T^*$.
    \item $u.\textit{c-ids}$ is used for storing the \emph{node ids} of node $u$'s neighbors in the sorted-path topology $G^*$.
\end{enumerate}

\pg{Our Design.} A node $u$ updates the variables $u.\textit{vID}, u.\textit{c-par}$ and $u.\textit{c-dist}$ (if it's not part of a flyover) when it receives the advice. (See Algorithm \ref{alg:cloud-alg} for how those values are determined.) The Tree-to-Path algorithm is then executed, and based on its position, the \emph{first} left and right shortcuts, i.e., $u.S_l(1)$ and $u.S_r(1)$, are updated to its left and right neighbors on the path (resp.). Subsequently, the nodes build the flyover over the path via pointer-doubling, where the $(i+1)$-level shortcuts are formed from the $i$-level shortcuts; see Figure \ref{fig:pointer-doubling}. By design, the node id of the ``leftmost'' node (in the path), say, node $v$ with $v.L = \emptyset \land v.R \neq \emptyset$, is the \emph{flyover id}, and is propagated to the rest of the nodes in flyover, i.e., for any node $u$, $u.\mathit{flyID}$ is initially assigned its own node id (as a default value), but is eventually assigned the flyover id. Moreover, node $v$ is the \emph{only} node with $v.\mathit{vID} = 1$ and $v.\textit{c-dist} = 0$, as it is the root in the sorted-path tree $T^*$. See Figure \ref{fig:example5node} for an illustration.

To verify the connectivity certificate, i.e., execute distance-checks (cf. Section \ref{subsec:advised-sorted-path}), each node $u \neq v$ must send the distance $u.\textit{c-dist}$ to the node with virtual id $u.\textit{c-par}$, say node $w$, on the flyover. However, since node $u$ may not have a direct connection to node $w$, it sends $u.\textit{c-dist}$ to the \emph{closest} left or right shortcut (depending on the virtual ids, i.e., $u.\textit{vID}$ and $u.\textit{c-par}$), and so on, until this message eventually reaches node $w$ whose virtual id is $u.\textit{c-par}$. This is a standard method for routing a message between any pair of nodes in a hypercubic structure \cite{leighton2014introduction}, on which we further provide details below. Finally, once the node $w$ receives this message, it adds node id $u$ to $w.\textit{c-ids}$ if the distances match, i.e., $w.\textit{c-dist} = u.\textit{c-dist} - 1$ (to form the sorted path). Moreover, every node $u$ checks that it is sorted with respect to node ids in $u.\textit{c-ids}$ (to verify the sorted path).

\begin{algorithm}[ht]
\caption{Init}
\label{alg:conf-init}
\begin{algorithmic}[1]
\STATE $\mathbf{BasicChecks}$
\begin{ALC@g}
\STATE $(\exists \langle \text{RejFlyover} \rangle \in u.\mathit{Ch}) \rightarrow u.\mathit{exit} \coloneqq 1$;
\STATE \COMMENT{Flyover structure}
\STATE $u.S = \emptyset \land (u.\textit{c-ids} \neq \emptyset \lor u.\textit{flyID} \neq u.\mathit{id}) \rightarrow u.\mathit{exit} \coloneqq 1$;
\STATE $(u.L \neq \emptyset \land u.\mathit{vID} \leq 1) \rightarrow u.\mathit{exit} \coloneqq 1$;
\STATE $(u.L = \emptyset \land u.R \neq \emptyset) \land (u.\mathit{vID} \neq 1 \lor u.\mathit{flyID} \neq u.id) \rightarrow  u.\mathit{exit} \coloneqq 1$;
\STATE \COMMENT{Connectivity certificate} 
\STATE $(u.\mathit{vID} = 1 \land u.\text{\emph{c-dist}} \neq 0) \rightarrow u.\mathit{exit} \coloneqq 1$; 
\STATE $(u.\mathit{vID} > 1 \land u.\text{\emph{c-dist}} \leq 0) \rightarrow u.\mathit{exit} \coloneqq 1$;
\STATE $(u.S \neq \emptyset \land u.\mathit{vID} > 1) \land (\mathbf{NextStop}(u.\text{\emph{c-par}}) = \varnothing) \rightarrow u.\mathit{exit} \coloneqq 1$; 
\STATE \COMMENT{Sorted-path topology}
\STATE $|u.\text{\emph{c-ids}}| > 2 \rightarrow u.\mathit{exit} \coloneqq 1$;
\STATE $|u.\text{\emph{c-ids}}| = 2 \land (u.\mathit{id} > \max(u.\text{\emph{c-ids}})) \rightarrow u.\mathit{exit} \coloneqq 1$;
\STATE $|u.\text{\emph{c-ids}}| = 2 \land (u.\mathit{id} < \min(u.\text{\emph{c-ids}})) \rightarrow u.\mathit{exit} \coloneqq 1$;
\end{ALC@g}
\STATE
\STATE $\mathbf{RejectFlyover} \text{: } u.\mathit{exit} = 1 \rightarrow$
\begin{ALC@g}
\STATE \COMMENT{Inform and flush the node ids part of node $u$'s flyover}
\STATE $\mathbf{SendRejFly}(( u.S \cup \{u.\mathit{flyID}\} \cup u.\text{\emph{c-ids}} ) \setminus u.\mathit{id})$;
\STATE \COMMENT{Set the variables to their default values}
\STATE $u.L \coloneqq \emptyset; u.R \coloneqq \emptyset; u.\mathit{vID} \coloneqq 0; u.\mathit{flyID} \coloneqq u.\mathit{id};$
\STATE $u.\mathit{exit} \coloneqq 0; u.\text{\emph{c-par}} \coloneqq 0; u.\text{\emph{c-dist}} \coloneqq -1; u.\text{\emph{c-ids}} \coloneqq \emptyset;$
\end{ALC@g}
\end{algorithmic}
\end{algorithm}

\begin{algorithm}
\caption{Helper Functions \textit{\small (Executed Only When Called)}}
\begin{algorithmic}[1]
\label{alg:conf-helper}
\STATE $\mathbf{NextStop(\mathit{val})}$
\begin{ALC@g}
\STATE $(\mathit{val} < 1 \lor \mathit{val} = u.\mathit{vID})\rightarrow $ Return ``$\varnothing$'';
\STATE $(u.S = \emptyset \lor u.\mathit{vID} < 1) \rightarrow $ Return ``$\varnothing$'';
\STATE $(\mathit{val} > u.\mathit{vID} \land u.R = \emptyset) \rightarrow $ Return ``$\varnothing$'';
\STATE $(\mathit{val} < u.\mathit{vID} \land u.L = \emptyset) \rightarrow $ Return ``$\varnothing$'';
\IF{$(\mathit{val} > u.\mathit{vID})$}
\STATE Return $u.S_r(\mathrm{argmin}_{i \in |u.R|}(\lvert u.\mathit{vID} + 2^{(i-1)} - \mathit{vID}\rvert))$;
\ELSE
\STATE Return $u.S_l(\mathrm{argmin}_{i \in |u.L|}(\lvert u.\mathit{vID} - 2^{(i-1)} - \mathit{vID}\rvert))$;
\ENDIF
\end{ALC@g}
\STATE 
\STATE $\mathbf{SendRejFly(\mathit{nodes})}$
\begin{ALC@g}
\STATE SEND($ \langle \text{RejFlyover} \rangle$) to ids in $\mathit{nodes}$; $\mathbf{Flush}(\mathit{nodes})$;
\end{ALC@g}
\end{algorithmic}
\end{algorithm}

\pg{Init Function.} Our design results in a few simple tests that are executed in every round; see Algorithm \ref{alg:conf-init}. Here, the $\mathbf{NextStop}(x)$ function returns the $i$-level left or right shortcut, whose virtual id, $u.\mathit{vID} \pm 2^{i}$, is closest to $x$; if it is not possible to do so (e.g., $u.\mathit{vID} < x$ and a right shortcut does not exist), then the function returns ``$\varnothing$''. see Algorithm \ref{alg:conf-helper} for its description. Observe that if any of the tests fail, $u.\mathit{exit}$ is updated to 1. In that case, node $u$ rejects the flyover, i.e., it sends $\langle \text{RejFlyover} \rangle$ message to all the node ids, \emph{flushes} those ids (see Section \ref{subsec:dual-state} for the ``flush'' operation), and sets all the variables to the default values.

\subsection{Construction and Verification of Flyover} \label{subsec:c-and-v-flyover}

Here, we explain the first two self-stabilizing processes concerned with the flyover structure.  As the flyover is \emph{symmetrically} built over some path, for readability, we provide the explanations with respect to right shortcuts (if they exist), which similarly apply for left shortcuts. Recall that if a node $v$ is a $i$-level (right) shortcut of node $u$, then node $v$ should be of distance $2^i$ to (the right of) node $u$ on the path.

Our algorithms consist of a precise set of conditions for both, sending a particular message, and for specific actions to be taken (for e.g., update the set of shortcuts, or set the exit variable to 1) after receiving a message. Moreover, if any node $u$ is not part of any flyover (i.e., $u.S = \emptyset$), and receives a message regarding a flyover, then node $u$ sends $\langle \text{RejFlyover} \rangle$ message to the node ids part of that message, and flushes them.

Algorithm \ref{alg:conf-hypercube} provides the sequence of rules that govern the \emph{construction} of the flyover.

A node $u$ sends its own node id to its first right shortcut, $u.S_r(1)$, using the $\langle \text{TestLine-R} \rangle$ message, to verify whether that information is correct. Consequently, the node that receives the message, sends $\langle \text{RejFlyover} \rangle$ message to node $u$, and flushes it, if node $u$ is not its left neighbor.

A node $u$ sends its $i$-level left shortcut $u.S_l(i)$, including its own id, to its $i$-level right shortcut $u.S_r(i)$, using $\langle \text{FlyConst-R} \rangle$ message, so that $u.S_r(i)$ can add or verify its $(i+1)$-level left shortcut. This set of rules facilitates a pointer-doubling style, bottom-up flyover construction over the path (regardless of any shortcuts that may already exist). Consequently, the node that receives the message, say node $w$, sets $w.S_l(i+1) = u.S_l(i)$ only if $|w.L| = i$ and $w.S_l(i) = u$; otherwise, it sends $\langle \text{RejFlyover} \rangle$ to nodes $u$ and $u.S_l(i)$, and flushes them.

\begin{algorithm}[ht]
\caption{Flyover Construction}
\label{alg:conf-hypercube}
\begin{algorithmic}[1]
\STATE $\mathbf{TestFlyoverConstruction}$
\begin{ALC@g}
\STATE $u.R \neq \emptyset \rightarrow$ SEND($\langle \text{TestLine-R}, u.\mathit{id} \rangle$) to $u.S_r(1)$;
\STATE $u.L \neq \emptyset \rightarrow$ SEND($\langle \text{TestLine-L}, u.\mathit{id} \rangle$) to $u.S_l(1)$;
\IF{$(u.R \neq \emptyset \land u.L \neq \emptyset)$}
\FOR{each $i$ in $\{1, \dots,  \min(|u.R|, |u.L|)\}$}
\STATE SEND($\langle \text{FlyConst-R}, u.S_l(i), i, u.\mathit{id} \rangle$) to $u.S_r(i)$;
\STATE SEND($\langle \text{FlyConst-L}, u.S_r(i), i, u.\mathit{id} \rangle$) to $u.S_l(i)$;
\ENDFOR
\ENDIF
\end{ALC@g}
\STATE
\STATE \COMMENT{Here, we omit code for a ``\dots-L'' msg due to similarity in handling a ``\dots-R'' msg}
\STATE $\mathbf{R\_TestFlyoverConstruction}$
\begin{ALC@g}
\FOR{each received message $\langle \text{TestLine-R},\mathit{sen} \rangle$}
\STATE $(u.L \neq \emptyset \land u.S_l(1) \neq \mathit{sen} \lor u.L = \emptyset) \rightarrow u.\mathit{exit} \coloneqq 1$;
\STATE $u.S = \emptyset \lor u.\mathit{exit} = 1  \rightarrow \mathbf{SendRejFly}(\{\mathit{sen}\});$ 
\ENDFOR
\FOR{each received message $\langle \text{FlyConst-R}, w, i, \mathit{sen} \rangle$}
\STATE $u.L = \emptyset \lor (|u.L| \geq i \land u.S_l(i) \neq \mathit{sen}) \rightarrow u.\mathit{exit} \coloneqq 1$;
\STATE $|u.L| \geq (i+1) \land u.S_l(i+1) \neq w \rightarrow u.\mathit{exit} \coloneqq 1$;
\STATE $(u.\mathit{exit} = 0) \land (1 < |u.L| < i) \rightarrow \mathbf{Flush}(\{\mathit{sen}, w\})$; 
\STATE $(u.\mathit{exit} = 0) \land (|u.L| = i \land u.S_l(i) = \mathit{sen}) \rightarrow u.S_l(|u.L| + 1) \coloneqq w$; \COMMENT{Update shortcut!}
\STATE $u.S = \emptyset \lor u.\mathit{exit} = 1  \rightarrow \mathbf{SendRejFly}(\{\mathit{sen, w}\});$
\ENDFOR
\end{ALC@g}
\end{algorithmic}
\end{algorithm}

Algorithm \ref{alg:conf-flymetadata} gives the sequence of rules for the \emph{verification} of virtual id and flyover id.

A node $u$ sends the value $2^{(i-1)} + u.\mathit{vID}$ to node $u.S_r(i)$, using $\langle \text{TestvID} \rangle$ message, to verify the distance of its $i$-level right shortcut (over the path that the flyover is constructed on). Consequently, the node that receives the message, say node $w$, sets $w.\mathit{exit} = 1$ if its virtual id does not match the value.

A node $u$ sends its $u.\mathit{flyID}$ (flyover id), using $\langle \text{TestFlyID} \rangle$ message, to all node ids, except $u.\mathit{flyID}$, in memory (incl. $\mathrm{M}_u(\mathcal{A}_0)$, memory of base algorithm), if either node $u$ is the leftmost node, or $u.\mathit{flyID}$ has been updated (i.e., $u.\mathit{flyID} \neq u.\mathit{id}$). Furthermore, a node $u$, when it is ready to receive advice from the supervisor (i.e., $u.S = \emptyset \land u.\mathit{vID} = 0$), sends ``$\bot$'' to all the node ids using $\langle \text{TestFlyID} \rangle$ message. Crucially, this \emph{exchange} of information (i.e., flyover id) acts as a simple mechanism for handling a supervisor that concurrently sends bad advice to nodes in different parts of the network, in that it helps the nodes quickly detect if there is a flyover constructed over any \emph{subset} of nodes. Consequently, the node that receives the message, say node $w$, sets $w.\mathit{flyID}$ to $u.\mathit{flyID}$, only if it is not the leftmost node, and $w.\mathit{flyID}$ has not been updated yet (i.e., $w.\mathit{flyID}$ is assigned to $w.\mathit{id}$). Finally, if $w.\mathit{flyID}$ does not match $u.\mathit{flyID}$, node $w$ sets $w.\mathit{exit} = 1$, and sends $\langle \text{RejFlyover} \rangle$ message to node $u.\mathit{flyID}$, and flushes it.

\begin{algorithm}[ht]
\caption{Flyover Metadata}
\label{alg:conf-flymetadata}
\begin{algorithmic}[1]
\STATE $\mathbf{TestFlyoverMetadata}$
\begin{ALC@g}
\STATE \COMMENT{Propagate the flyover id if it is updated}
\STATE $\textit{prop-flyID} \coloneqq (u.\mathit{vID} = 1) \lor (u.\mathit{vID} > 1 \land u.\mathit{flyID} \neq u.\mathit{id})$;
\STATE $\textit{prop-flyID} \rightarrow$ SEND($\langle \text{TestFlyID}, u.\mathit{flyID} \rangle$) to all ids, except $u.\mathit{flyID}$, in memory;
\STATE \COMMENT{Ready to receive advice; inform other nodes about it}
\STATE $(u.S = \emptyset \land u.\mathit{vID} = 0) \rightarrow$ SEND($\langle \text{TestFlyID}, \bot \rangle$) to all ids in memory;
\STATE \COMMENT{Check that shortcuts are exponentially spaced-out}
\STATE $|u.R| \geq 1 \rightarrow$ SEND($\langle \text{TestvID},(u.\mathit{vID} + 2^{(i-1)})\rangle$) to $u.S_r(i)$ for all $i \in [|u.R|]$;
\STATE $|u.L| \geq 1 \rightarrow$ SEND($\langle \text{TestvID},(u.\mathit{vID} - 2^{(i-1)})\rangle$) to $u.S_l(i)$ for all $i \in [|u.L|]$;
\end{ALC@g}
\STATE 
\STATE $\mathbf{R\_TestFlyoverMetadata}$
\begin{ALC@g}
\FOR{each received message $\langle \text{TestvID}, \mathit{vID} \rangle$}
\STATE $(u.S = \emptyset) \lor (u.S \neq \emptyset \land  u.\mathit{vID} \neq \mathit{vID}) \rightarrow u.\mathit{exit} = 1$;
\ENDFOR
\FOR{each received message $\langle \text{TestFlyID}, \mathit{flyID} \rangle$}
\IF{$(u.L \neq \emptyset \land u.\mathit{exit} = 0)$}
\STATE $u.\mathit{flyID} = u.\mathit{id} \land \mathit{flyID} \neq \bot \rightarrow u.\mathit{flyID} \coloneqq \mathit{flyID}$; \COMMENT{Update flyID!}
\ENDIF
\STATE $(u.S \neq \emptyset \land u.\mathit{flyID} \neq \mathit{flyID}) \rightarrow u.\mathit{exit} = 1$; 
\STATE $(u.S = \emptyset \land \mathit{flyID} \neq \bot) \rightarrow u.\mathit{exit} = 1$;
\STATE $u.\mathit{exit} = 1 \rightarrow  \mathbf{SendRejFly}(\{\mathit{flyID}\});$
\ENDFOR
\end{ALC@g}
\end{algorithmic}
\end{algorithm}

\subsection{Verification of Connectivity Certificate} \label{subsec:conn-cert}

Here, we explain the self-stabilizing process for verifying and forming the sorted-path topology.

Algorithm \ref{alg:conf-cert} provides the sequence of rules for \emph{routing} and \emph{verifying} connectivity certificates.

First, each node $u$ begins this process after its flyover id $u.\mathit{flyID}$ has been updated (i.e., $u.\mathit{flyID} \neq u.\mathit{id}$). Recall that each node $u$ sends $u.\mathit{flyID}$, via $\langle \text{TestFlyID} \rangle$ message, to all node ids in memory, to ensure that a \emph{single} flyover exists. Thus, the node id of the root in tree $T^*$ (i.e., node with virtual id 1; see Algorithm \ref{alg:cloud-alg}) gets stored and verified by each node.

A node $u$ needs to send $u.\textit{c-dist}$ (i.e., distance to the root in $T^*$), to the node with virtual id $u.\textit{c-par}$, say node $v$. For doing so, the node $u$ \emph{routes} both $u.\textit{c-dist}$ and $u.\mathit{id}$, using $\langle \text{TestCert} \rangle$ message, via an appropriate left or right shortcut (based on the virtual ids), and so on, until the message reaches node $v$. Recall, $\mathbf{NextStop}(x)$ returns the $i$-level left or right shortcut, whose virtual id, $u.\mathit{vID} \pm 2^{i}$, is closest to $x$. In a hypercubic structure, a message can be routed this way in $O(\log n)$ rounds, as the distance to the intended destination is halved each time.

Consequently, the node that receives the message, say node $w \neq v$, assigns $w.\mathit{exit} = 1$, and sends $\langle \text{RejFlyover} \rangle$ message to node $u$, if the message cannot be further forwarded (i.e., $\mathbf{NextStop}(u.\textit{c-par})$ function returns ``$\varnothing$''). Finally, when the node $v$ receives it, if the virtual ids and distances match (i.e., $v.\mathit{vID} = u.\textit{c-par}$ and $v.\textit{c-dist} = u.\textit{c-dist} - 1$), node $v$ adds the node id $u$ to $v.\textit{c-ids}$, and sends its own node id to node $u$, so that node $u$ stores it in $u.\textit{c-ids}$; otherwise, the node $v$ sets $v.\mathit{exit} = 1$, and sends $\langle \text{RejFlyover} \rangle$ to node $u$, and flushes it.

\begin{algorithm}[ht]
\caption{Connectivity Certificate}
\label{alg:conf-cert}
\begin{algorithmic}[1]
\STATE $\mathbf{TestConnCert}\text{: } (u.\mathit{vID} > 1 \land u.\mathit{flyID} \neq u.\mathit{id}) \rightarrow$
\begin{ALC@g}
\STATE SEND($\mathit{msg}$) to $\mathbf{NextStop}(u.\text{\emph{c-par}})$ where $\mathit{msg}$ is $\langle \text{TestCert}, u.\mathit{id}, u.\text{\emph{c-par}}, u.\text{\emph{c-dist}}\rangle$; \COMMENT{Route}
\end{ALC@g}
\STATE 
\STATE $\mathbf{R\_TestConnCert}$
\begin{ALC@g}
\STATE $\textit{prop-flyID} \coloneqq (u.\mathit{vID} = 1) \lor (u.\mathit{vID} > 1 \land u.\mathit{flyID} \neq u.\mathit{id});$
\FOR{each received message $\langle \text{TestCert}, w, \mathit{vID}, \mathit{dist}\rangle$}
\STATE \COMMENT{Distances don't match / Unable to route}
\STATE $(u.\mathit{vID} = \mathit{vID}) \land (\mathit{dist}-1 \neq u.\textit{c-dist}) \rightarrow u.\mathit{exit \coloneqq} 1$;
\STATE $(u.\mathit{vID} \neq \mathit{vID}) \land (\mathbf{NextStop}(\mathit{vID}) = \varnothing) \rightarrow u.\mathit{exit \coloneqq} 1$;
\IF{$(u.S = \emptyset \lor u.\mathit{exit} = 1)$}
\STATE $\mathbf{SendRejFly}(\{\mathit{w}\})$;
\ELSE
\IF{$u.\mathit{vID} = \mathit{vID}$}
\STATE $u.\textit{c-ids} \coloneqq u.\textit{c-ids} \cup \{ w \}$; \COMMENT{Accept id!}
\STATE SEND($\langle \text{IntroCert}, u.\mathit{id} \rangle$) to $w$; \COMMENT{Bidirected edge}
\ELSE
\STATE $\neg \textit{prop-flyID} \rightarrow \mathbf{Flush}(\{w\})$; 
\STATE $\textit{prop-flyID} \rightarrow$ SEND($\langle \text{TestCert}, w, \mathit{vID}, \mathit{dist}\rangle$) to $\mathbf{NextStop}(\mathit{vID})$; \COMMENT{Route, if flyID is updated}
\ENDIF
\ENDIF
\ENDFOR 
\FOR{each received message $\langle \text{IntroCert}, w \rangle$}
\STATE $u.\textit{c-ids} \coloneqq u.\textit{c-ids} \cup \{ w \}$; \COMMENT{Bidirected edge}
\ENDFOR 
\end{ALC@g}
\end{algorithmic}
\end{algorithm}




\section{Analysis Sketch} \label{sec:analysis-sketch}

In this section, we describe the analysis sketch for all the critical aspects of our algorithms. The pseudocodes are given in Section \ref{sec:pseudocodes}, and the full analyses are given in Section \ref{sec:TtP-analysis} and \ref{sec:ss-analysis}. 

Firstly, we show that the edges added by the Tree-to-Path algorithm (c.f. Figure \ref{fig:TtP}; see Algorithm \ref{alg:tree-to-line} for the pseudocode.) form a single (directed) path over the nodes. 
We use a proof by induction over the height of a labbelled rooted tree. We consider two base cases for the trees of height equal to 1, with the label of the root equal to either 0 or 1, where the height of a tree $T = (l(\cdot), r, V, E)$ is equal to $\max_{v\in V}(\mathrm{dist}_T(r, v))$. For the inductive case, we consider any tree $T = (l(\cdot), r, V, E)$ of height greater than 1, and then show that the algorithm outputs a directed path $P$ where $\mathrm{Beg}(P) = r$ and $\mathrm{End}(P)$ is $\min(C_T(r))$ if $l(r) = 0$; whereas it outputs a directed path $P$ where $\mathrm{Beg}(P)$ is $\max(C_T(r))$ and $\mathrm{End}(P) = r$ if $l(r) = 1$. 
\begin{theorem}
The Tree-to-Path Algorithm takes a labelled rooted tree $T = (l(\cdot), r, V, E)$ where $l(r) = 0$ as input, and outputs a directed path $P = (V, E')$ where $\mathrm{Beg}(P) = r$ and $\mathrm{End}(P) = \min(C_T(r))$.
\end{theorem}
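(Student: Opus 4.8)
The plan is to prove, by strong induction on the height $h$ of the input tree, a statement covering both root-labels simultaneously (the theorem is then the $l(r)=0$ instance): for every labelled rooted tree $T=(l(\cdot),r,V,E)$ with $h\ge 1$, the algorithm outputs a directed path on $V$ with $\mathrm{Beg}(P)=r$ and $\mathrm{End}(P)=\min(C_T(r))$ if $l(r)=0$, and with $\mathrm{Beg}(P)=\max(C_T(r))$ and $\mathrm{End}(P)=r$ if $l(r)=1$; for the degenerate case $h=0$ the output is the single vertex $r$ with no edges. Carrying both parities along is forced, since the subtrees hanging off $r$ have root-label $1-l(r)$, so the induction on subtrees necessarily invokes the other case.

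The first ingredient I would isolate is a locality fact that makes the induction possible: reading off the case analysis of Figure~\ref{fig:TtP}, the set of directed edges the algorithm commits at a vertex $v$ is a function only of $l(v)$, $v$'s parent, and the totally ordered lists of $v$'s left siblings, right siblings, and children. Since $l$ is the parity of the root-distance and $\mathrm{dist}_T(r,\cdot)=1+\mathrm{dist}_{T_j}(u_j,\cdot)$ on the subtree $T_j$ rooted at a child $u_j$ of $r$, the induced label of every vertex is consistent between $T$ and $T_j$; hence every vertex of $T_j$ other than $u_j$ itself sees exactly the same local data in $T$ as in the standalone tree $T_j$, and therefore commits the same edges. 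Only $r$ and the children $u_1<\dots<u_q$ of $r$ need fresh analysis in $T$, because in $T$ each $u_j$ acquires a parent ($r$) and siblings it did not have while it was the root of $T_j$.

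For the base cases $h=1$, the children $u_1<\dots<u_q$ are all leaves with $l(u_j)=1-l(r)$, and I would check directly from the rules that the committed edges form $r\to u_q\to u_{q-1}\to\dots\to u_1$ when $l(r)=0$ and the mirror path $u_q\to u_{q-1}\to\dots\to u_1\to r$ when $l(r)=1$: every vertex gets in- and out-degree at most one, the graph is connected and acyclic, the endpoints are as claimed, and $\mathrm{Beg}(P),\mathrm{End}(P)$ sit at $T$-distance one (the invariant the alternation-by-label is designed to preserve). For the inductive step $h\ge 2$, I apply the induction hypothesis to each $T_j$ of height $\ge 1$ to obtain a path $P_j$, and set $P_j$ to the single vertex $u_j$ when $u_j$ is a leaf. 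Using the locality fact, the edges $P$ induces strictly inside $T_j$ coincide with $P_j$ away from $u_j$; I then re-derive the few edges $u_j$ commits in its true context in $T$ together with the edges $r$ commits, and verify that their union is exactly the vertex-disjoint paths $P_q,P_{q-1},\dots,P_1$ concatenated in that order — for $l(r)=0$ with the link $r\to\mathrm{Beg}(P_q)$ prepended, using $\mathrm{End}(P_{j+1})=u_{j+1}$ and $\mathrm{Beg}(P_j)=\max(C_{T_j}(u_j))$ (or $u_j$ if $u_j$ is a leaf), so the result is one directed path from $r$ to $u_1=\min(C_T(r))$; the $l(r)=1$ case is the mirror image ending at $r$. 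The concatenation is a path because the $P_j$ are vertex-disjoint paths and each link joins the end of one to the start of the next, creating no second in- or out-edge anywhere and closing no cycle.

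I expect the real work, and the only place the argument can break, to be this gluing step: precisely accounting for which directed edges are created at the boundary vertices $r$ and $u_1,\dots,u_q$ once the $u_j$ stop being roots, arguing that this never produces a vertex of in- or out-degree two and never closes a cycle, nailing the orientation and the two endpoints under both root-labels, and uniformly absorbing the height-$0$ subtrees (where $P_j$ is a lone vertex and the induction hypothesis does not literally apply). A secondary induction on the number of children $q$ is probably the cleanest way to discharge the chain of $q$ linking edges, both in the base case and in the inductive step.
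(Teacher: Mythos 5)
Your proposal is correct and follows essentially the same route as the paper: induction on the height of the tree, proving the two root-label cases ($l(r)=0$ and $l(r)=1$) simultaneously, with height-$1$ base cases and an inductive step that glues the subtree paths $P_q,\dots,P_1$ in descending child order (treating leaf children as degenerate single-vertex paths, which the paper instead handles by explicit case analysis on whether $p_r\in C'(p)$). Your explicitly stated locality fact is a slight strengthening in rigor over the paper, which uses it implicitly, but the argument is otherwise the same.
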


Next, we show that the construction of shortcuts over that path happens in $O(\log n)$ rounds, implying that information (e.g., $\langle \text{RejFlyover} \rangle$ messages) can be quickly disseminated in any flyover. Thus, if the flyover is not correctly built, i.e., virtual ids are not consecutive, or shortcuts are not exponentially spaced-out, or flyover id is not properly assigned, then due to Algorithms \ref{alg:conf-init}, \ref{alg:conf-hypercube} and \ref{alg:conf-flymetadata}, some node detects and propagates the fault in the flyover, cueing other nodes to also exit the flyover.

However, if a flyover is ``internally correct,'' and is formed over a subset of nodes, then some key information needs to be exchanged between the nodes inside and outside the flyover. To that end, we rely on Algorithm \ref{alg:conf-flymetadata} where every node sends its flyover id to all node ids in the memory. As all our algorithms rely on the universal primitives (cf. Section \ref{subsec:univ-prim}) that preserve (weak) connectivity, there must exist an (incoming/outgoing) edge to that flyover in any round. By carefully carrying out a case-by-case analysis of how that edge can show up in a node's memory, we prove the following lemma.

\begin{lemma}
Consider any set of nodes $B = \{ v_1, \dots, v_{|B|} \}$ in any round $r$, where $|B| < n$ and $(v_1.L = \emptyset \land v_{|B|}.R = \emptyset)$ and $(\forall i \in [|B| - 1], v_{i}.S_r(1) = v_{i+1} \land v_{i+1}.S_l(1) = v_{i})$. By round $r + O(\log n)$, at least one node $u \in B$ assigns $u.\mathit{exit} = 1$.
\end{lemma}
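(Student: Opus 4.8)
The plan is to argue by contradiction: suppose no $u\in B$ ever assigns $u.\mathit{exit}=1$ during rounds $r,\dots,r+c\log n$, for a suitable constant $c$, and derive a contradiction from the weak-connectivity invariant that the universal primitives (Section~\ref{subsec:univ-prim}) maintain. Fix $B$ as it looks in round~$r$; since $|B|\ge 2$, the chain hypothesis gives every $v_i\in B$ a nonempty shortcut set, and among $v_1,\dots,v_{|B|}$ the unique node with no left shortcut is $v_1$. Because $|B|<n$, the sets $B$ and $V\setminus B$ are both nonempty; since \emph{every} rule of our algorithms (and of the base algorithm) only uses the universal primitives, the communication graph stays weakly connected in every round, so in every round there is an explicit edge crossing the cut $(B,V\setminus B)$.

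The first step turns ``no exit in $B$'' into ``$B$ is a clean flyover''. One pass of the $\mathbf{Init}$ routine (Algorithm~\ref{alg:conf-init}) already forces $v_1.\mathit{exit}=1$ unless $v_1.\mathit{vID}=1$ and $v_1.\mathit{flyID}=v_1.\mathit{id}$, and $v_i.\mathit{exit}=1$ for $i\ge 2$ unless $v_i.\mathit{vID}>1$, so these hold. Using the $O(\log n)$-round bound on shortcut construction established just above, over the next $O(\log n)$ rounds the pointer-doubling rules of Algorithm~\ref{alg:conf-hypercube} build all higher shortcuts along the chain $v_1,\dots,v_{|B|}$, each one checked by its recipient; concurrently the $\langle\text{TestvID}\rangle$ rules force the virtual ids to be $1,\dots,|B|$, the $\langle\text{TestFlyID}\rangle$ rules propagate $v_1.\mathit{id}$ as the flyover id of every node of $B$ and verify agreement, and the $\langle\text{TestCert}\rangle$ rules of Algorithm~\ref{alg:conf-cert} route the connectivity certificates. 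Were any of these checks to fail, some $v_i\in B$ would set $v_i.\mathit{exit}=1$, against the hypothesis; hence after $O(\log n)$ rounds $B$ is a full hypercube over $v_1,\dots,v_{|B|}$ with $v_i.\mathit{vID}=i$ and $v_i.\mathit{flyID}=v_1.\mathit{id}$ for all $i$, and (no exit having fired) its level-$1$ pointers and flyover id are stable for the rest of the argument.

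The heart of the proof is that $v_1.\mathit{id}$, now carried by every node of $B$, behaves like a \emph{return address}: whenever a node sets $\mathit{exit}=1$ in the handler for a message $\langle\text{TestFlyID},f\rangle$ it sends $\langle\text{RejFlyover}\rangle$ to the node with id $f$, and whenever a node whose flyover id is $f$ (and $f$ is not its own id) sets $\mathit{exit}=1$ for \emph{any} reason, $\mathbf{RejectFlyover}$ sends $\langle\text{RejFlyover}\rangle$ to $f$ (among others). Now take a crossing edge. In the clean state every $v_i\in B$ has $\textit{prop-flyID}$ true and broadcasts $\langle\text{TestFlyID},v_1.\mathit{id}\rangle$ to every id in its memory, including base-algorithm memory. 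If the edge is $v_i\to w$ with $w\notin B$, then $w$ receives this message and either sets $w.\mathit{exit}=1$ in the handler --- sending $\langle\text{RejFlyover}\rangle$ to $v_1$, whence $v_1.\mathit{exit}=1$, a contradiction --- or it \emph{adopts} $v_1.\mathit{id}$ as its own flyover id (which requires $w.L\ne\emptyset$ and $w.\mathit{flyID}=w.\mathit{id}$); but then the chain through $w$ has a leftmost node $w'_1\ne v_1$ (otherwise the stable level-$1$ pointers would force that chain to equal $B$, so $w\in B$), so within $O(\log n)$ more rounds $w$ either learns $w'_1$'s differing flyover id and sets $\mathit{exit}=1$, or is hit by a $\langle\text{RejFlyover}\rangle$ from its own ill-formed chain, and in either case $\mathbf{RejectFlyover}$ then sends $\langle\text{RejFlyover}\rangle$ to $w.\mathit{flyID}=v_1.\mathit{id}$, forcing $v_1.\mathit{exit}=1$ --- again a contradiction. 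If instead the edge is $w\to v_i$ with $w\notin B$, then $w$ sends $v_i$ a $\langle\text{TestFlyID}\rangle$ message (with payload $\bot$ if $w$ is ``ready for advice''; with payload $w.\mathit{flyID}$ if $\textit{prop-flyID}$ holds at $w$; otherwise $w$ fails one of its own $\mathbf{Init}$/construction checks within $O(1)$ rounds, becomes ``ready for advice'', and then sends payload $\bot$), and since $v_i.S\ne\emptyset$ while $v_i.\mathit{flyID}=v_1.\mathit{id}\ne\bot$, the handler sets $v_i.\mathit{exit}=1$ unless $w.\mathit{flyID}=v_1.\mathit{id}$ --- but that again puts a non-$B$ node in the ``adopted $v_1.\mathit{id}$'' situation just analyzed. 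Every case yields a contradiction, so some $u\in B$ assigns $u.\mathit{exit}=1$ within $r+O(\log n)$ rounds.

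I expect the main obstacle to be two pieces of bookkeeping. First, making ``clean flyover'' precise enough that every deviation --- wrong virtual ids, mis-spaced or missing shortcuts, a failed pointer-doubling step, disagreeing flyover ids, an un-routable or mismatched connectivity certificate --- is provably caught within $O(\log n)$ rounds by one of the rules of Algorithms~\ref{alg:conf-init}, \ref{alg:conf-hypercube}, \ref{alg:conf-flymetadata}, and~\ref{alg:conf-cert} (this is exactly where the shortcut-construction bound and the spanning-tree certification of Section~\ref{subsec:local-cert} do the work), and that cleanness genuinely implies the stability and uniqueness facts used above. Second, tracking the adversarial neighbor $w\notin B$ over the $O(\log n)$-round window: the crossing edge is only guaranteed to exist each round, need not be the same edge from round to round, and need not carry a flyover message immediately, so one must follow $w$ through its possible transitions (stabilizing into a chain of its own, exiting and flushing into base-algorithm memory, or becoming ``ready for advice'') and verify that in each case a $\langle\text{RejFlyover}\rangle$ or a $\langle\text{TestFlyID},\cdot\rangle$ ultimately lands on a node of $B$ within the budget.
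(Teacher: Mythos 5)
Your overall strategy matches the paper's: assume no node of $B$ exits for $\Theta(\log n)$ rounds, conclude via the construction/metadata/certificate checks that $B$ becomes a correctly configured flyover with $v_1.\mathit{id}$ propagated as everyone's flyover id (this is the paper's Lemma on correct configuration), then use weak connectivity to exhibit a crossing edge and argue that a $\langle\text{TestFlyID}\rangle$ exchange across that edge must eventually deliver a $\langle\text{RejFlyover}\rangle$ back to $v_1$ (the ``return address''). Your treatment of the $v_i\to w$ direction, including the adoption sub-case where $w$ takes on $v_1.\mathit{id}$ and later forwards the rejection to it, is essentially the paper's Case~2 and Case~1.3.

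There is, however, a genuine gap in the $w\to v_i$ direction, and it is exactly the one you flag at the end without resolving. When $v_i$'s id sits in the \emph{base-algorithm} memory $\mathrm{M}_w(\mathcal{A}_0)$ of an outside node $w$, nothing forces $w$ to ever send a flyover-related message to $v_i$: the base algorithm may \emph{delegate} $v_i$'s id onward to another outside node before $w$'s $\textit{prop-flyID}$ becomes true or $w$ becomes attentive, and this can repeat every round, so the crossing edge keeps pointing \emph{into} $B$ from a fresh outside node that has held the id for only one round. Your list of possible transitions for $w$ (stabilizing into its own chain, exiting, becoming ready for advice) omits precisely this delegation transition, and without handling it the contradiction does not close. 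The paper closes it with the Delegate-after-Reversal (DR) primitive (Definition~\ref{def:bidirected} and Claim~\ref{claim:dr-prim}): any delegation of a $B$-node's id by an outside node creates, within two rounds, an explicit edge from some $p'\in B$ to some $q'\notin B$, which flips the situation into the $v_i\to w$ direction where the $\langle\text{TestFlyID}\rangle$ broadcast catches $q'$. This primitive is a required hypothesis on the base algorithm, not something derivable from the flyover rules, so the argument cannot be completed without invoking it. A secondary, non-fatal inaccuracy: a non-clean outside node with $\textit{prop-flyID}$ false need not fail an $\mathbf{Init}$ check in $O(1)$ rounds --- it may sit in a well-formed backbone of its own for $\Theta(\log n)$ rounds before either stabilizing (and then sending its differing flyover id) or collapsing; the budget still works out, but the $O(1)$ claim as stated is wrong.
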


Moreover, we prove that if all the nodes indeed form a \emph{single} flyover, but if the connections suggested by the supervisor (via virtual ids) do not form $G^*$ (see Section \ref{subsec:advised-sorted-path} for a discussion), then due to Algorithms \ref{alg:conf-init} and \ref{alg:conf-cert} (i.e., routing the connectivity certificates), at least one node quickly detects the bad advice.

\begin{claim}
Let $V = \{ v_1, \dots, v_n \}$ and $(v_1.L = \emptyset \land v_{n}.R = \emptyset)$ and $(\forall i \in [n - 1], v_{i}.S_r(1) = v_{i+1} \land v_{i+1}.S_l(1) = v_{i})$ and $(\forall i \in [n], v_i.\mathit{vID} = i \land v_i.\mathit{flyID} = v_1)$. Let $G' = (V, E')$ such that $(u, v) \in E' \iff (u.\mathit{vID} = v.\textit{c-par} \land v.\textit{c-dist} = u.\textit{c-dist} + 1) \lor (v.\mathit{vID} = u.\textit{c-par} \land v.\textit{c-dist} = u.\textit{c-dist} - 1)$. If $\forall i \in [n] \setminus \{ 1 \}, \exists u_i \in V, v_i.\textit{c-par} = u_i.\mathit{vID}$ where $(u_i.\textit{c-dist} = v_i.\textit{c-dist} - 1)$, then $G'$ is same as $G^*$.
\end{claim}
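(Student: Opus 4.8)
The plan is to show that $G'$ — the graph whose edges are the ``parent'' links encoded by the fields $\textit{c-par}$ and $\textit{c-dist}$ — is first a spanning tree of $V$ rooted at $v_1$, then a graph of maximum degree at most $2$ (hence a Hamiltonian path), and finally that the sortedness tests in Algorithm~\ref{alg:conf-init} leave $G^*$ as the only possibility. I read the claim as situated in a configuration in which none of the guards of Algorithm~\ref{alg:conf-init} fire (no node would set $\mathit{exit}$) and the connectivity-certificate exchange of Algorithm~\ref{alg:conf-cert} has already settled, since that is the regime in which the surrounding analysis invokes it (``if no node detects the bad advice'').

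\emph{$G'$ is a spanning tree rooted at $v_1$.} Because the virtual ids are exactly $1,\dots,n$, the map $i\mapsto v_i$ inverts $v\mapsto v.\mathit{vID}$, so the node $u_i$ supplied by the hypothesis is uniquely $v_{v_i.\textit{c-par}}$ and $E'$ is exactly the set of parent edges $\{\,\{u_i,v_i\}: i\ne 1\,\}$: no pair can arise both as ``$u$ is the parent of $v$'' and ``$v$ is the parent of $u$'' (that would force $v.\textit{c-dist}=u.\textit{c-dist}\pm 1$ simultaneously), and the stray value $v_1.\textit{c-par}$ cannot create an edge, since that would require a node of $\textit{c-dist}=-1$, forbidden by the guard $(\mathit{vID}>1\land\textit{c-dist}\le 0)$. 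These $n-1$ pairs are distinct, so $|E'|=n-1$. By the guards $(\mathit{vID}=1\land\textit{c-dist}\ne 0)$ and $(\mathit{vID}>1\land\textit{c-dist}\le 0)$, $v_1$ is the unique node with $\textit{c-dist}=0$ and every other node has $\textit{c-dist}\ge 1$; since $u_i.\textit{c-dist}=v_i.\textit{c-dist}-1$, iterating the parent map from any $v_i$ strictly decreases $\textit{c-dist}$ and hence terminates at $v_1$, proving $G'$ connected. A connected graph on $n$ vertices with $n-1$ edges is a tree.

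\emph{$G'$ is the sorted path.} The exchange of Algorithm~\ref{alg:conf-cert} inserts into $u.\textit{c-ids}$ exactly the parent and the children of $u$ in $G'$, so in the settled configuration $u.\textit{c-ids}$ is the $G'$-neighbourhood of $u$ and $\deg_{G'}(u)=|u.\textit{c-ids}|$. The guard $|u.\textit{c-ids}|>2$ being false gives $\deg_{G'}(u)\le 2$ for all $u$, so the tree $G'$ is a simple path $w_1-w_2-\cdots-w_n$ on $V$. Each internal $w_i$ has $w_i.\textit{c-ids}=\{w_{i-1},w_{i+1}\}$ of size $2$, and since the guards $(|u.\textit{c-ids}|=2\land u.\mathit{id}>\max(u.\textit{c-ids}))$ and $(|u.\textit{c-ids}|=2\land u.\mathit{id}<\min(u.\textit{c-ids}))$ are false and ids are distinct, $w_i.\mathit{id}$ lies strictly between $w_{i-1}.\mathit{id}$ and $w_{i+1}.\mathit{id}$; equivalently the consecutive id-differences along the path all share one sign, so the id-sequence is strictly monotone. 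A strictly monotone Hamiltonian path joins each node to its id-successor, hence $E'=E^*$ and $G'=G^*$.

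I expect the degree step to be the crux: everything hinges on $u.\textit{c-ids}$ being exactly the $G'$-neighbourhood of $u$. The inclusion ``$\supseteq$'' uses the hypothesis that every parent pointer is honoured, so that each $\langle\text{TestCert}\rangle$ message actually reaches and is accepted by the parent — which needs the flyover shortcuts to route correctly, guaranteed here because Algorithms~\ref{alg:conf-hypercube} and~\ref{alg:conf-flymetadata} trigger no exit; the inclusion ``$\subseteq$'' (no spurious id survives in $\textit{c-ids}$) is where the closure part of the stabilization argument is used. The remaining deductions are elementary graph theory.
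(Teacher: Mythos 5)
Your proof is correct and follows essentially the same route as the paper's: establish that the parent pointers with matching distances yield a spanning tree rooted at the unique node with $\textit{c-dist}=0$, then use the $|u.\textit{c-ids}|\le 2$ guard to force a Hamiltonian path, and the locally-sorted guard to force that path to be $G^*$. The only cosmetic difference is that you prove the tree property directly by counting the $n-1$ distinct parent edges and noting that the parent map strictly decreases $\textit{c-dist}$, whereas the paper argues by contradiction (a cycle would violate some distance check, a forest would expose a second root via the flyover id); both are standard instantiations of the spanning-tree proof-labeling argument and you correctly identify the same crux, namely that $u.\textit{c-ids}$ coincides with the $G'$-neighbourhood.
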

\begin{proof}[Proof Sketch]
Due to Algorithm \ref{alg:conf-init}, the following invariants are satisfied for each node $u \in V$.
\begin{enumerate}
    \item (Tree Distance) $(u.L = \emptyset \implies u.\textit{c-dist} = 0)$ and $(u.L \neq \emptyset \implies u.\textit{c-dist} > 0)$.
    \item (Degree Constraint; Locally Sorted) $E''(u) \neq \emptyset \implies (|E''(u)| = 1) \lor (|E''(u)| = 2 \land \min(|E''(u)|) < u.\mathit{id} < \max(|E''(u)|))$, where $E''(u) = \{ v \mid (u, v) \in E' \}$.
\end{enumerate}
The above invariants ensure that there can be only one node $u$ with $u.\textit{c-dist} = 0$, and that node is $v_1$. Moreover, the degree of any node in $G'$ is at most 2; if a node's degree is 2, then its own id is neither greater nor less than both the node ids of its neighbors. Given the premise of the lemma, for any node $u \in V \setminus \{ v_1 \}$, there is a node $p(u)$ such that $u.\textit{c-par} = p(u).\mathit{vID}$ and $p(u).\textit{c-dist} = u.\textit{c-dist} - 1$.

We use proof by contradiction to show that $G'$ is indeed $G^*$. First, there cannot be a cycle in $G'$, as the distance checks (i.e., from a node $u$ to its parent $p(u)$) cannot be satisfied at all the nodes (cf. Section \ref{subsec:local-cert}). Moreover, $G'$ cannot be a spanning forest because all the nodes, via flyover id, can verify that there is only one ``root node,'' which is node $v_1$ with its distance $v_1.\textit{c-dist} = 0$. Thus, $G'$ forms a spanning tree. Combining the second invariant mentioned earlier, i.e., each node has degree at most 2, the tree must be a path. Finally, that path has to be the sorted path $G^*$ due to the second invariant because every node checks that its locally sorted with its neighbors in $G'$.
\end{proof}
Finally, we give a lower bound on the convergence time given that the nodes are restricted to make edge manipulations in their current neighborhood. The key observation is that if the distance between any two nodes $u$ and $v$ is $D$ in round $r$, the distance between them is at least $D/2^t$ in round $r+t$, even if all nodes introduce all the node ids in their memory to each other in every round. Thus, if the nodes $u$ and $v$ should be adjacent in the target topology, it takes $\Omega(\log n)$ rounds for node $u$ to obtain the node id $v$ (or vice versa) in the worst-case.

\begin{theorem}
For any Sybil-resistant network, there exists an initial configuration such that the convergence time for any pair of supervisor and overlay algorithms is $\Omega(\log n)$ rounds.
\end{theorem}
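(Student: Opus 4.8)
The plan is to pin down one adversarial initial configuration in which two nodes that are forced to be adjacent in the sorted path start out $\Theta(n)$ apart, and then to show that an identifier can move only ``two hops per round'' through the system (even with the supervisor's help), so it takes $\Omega(\log n)$ rounds before one of the two nodes can so much as learn the other's id --- which a legal configuration requires it to \emph{store}. Concretely, let $G_{C_0}$ be the line $w_1-w_2-\dots-w_n$ with no implicit edges, and assign ids so that $w_1$ has the smallest id, $w_n$ the second smallest, and $w_2,\dots,w_{n-1}$ the rest in any order. Then $\mathit{succ}(w_1)=w_n$, so $(w_1,w_n)\in E^*$; and since $w_1$ has the smallest id it has no predecessor, so $N_C(w_1)=\{w_n\}$ for every $C\in\mathcal{T}$. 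Hence in \emph{any} legal configuration $C'\in\mathcal{L}$ the id of $w_n$ is held in an address variable of $w_1$. The configuration $C_0$ is weakly connected, so it is a legitimate initial configuration, and $\mathrm{dist}_{G_{C_0}}(w_1,w_n)=n-1$.

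The heart of the argument is a ``slow information-spread'' lemma. Define the \emph{working memory} of a node $u$ in a configuration to be the set of ids appearing in $u$'s address variables together with those appearing in node-to-node messages in $u.\mathit{Ch}$, but \emph{excluding} ids that occur only inside an advice message. I would prove, by induction on $t$, that if $z$ is in $u$'s working memory after round $t$, then $\mathrm{dist}_{G_{C_0}}(u,z)\le 2^t$. The base case $t=0$ is immediate from the definition of $G_{C_0}$. For the inductive step, an id $z$ that is freshly in an address variable of $u$ at round $t+1$ must have been in $u$'s working memory at round $t$ --- copying an id out of an advice message into a variable is forbidden by Sybil-resistance unless the id was already present; and an id $z$ arriving in a node-to-node message at round $t+1$ was sent during round $t$ by some node $y$ whose working memory at round $t$ contained both $u$ (so $y$ could address the message to $u$; here too Sybil-resistance prevents $y$ from contacting an id it knows only from advice) and $z$ (the message content). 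By the induction hypothesis both $\mathrm{dist}_{G_{C_0}}(y,u)$ and $\mathrm{dist}_{G_{C_0}}(y,z)$ are at most $2^t$, so the triangle inequality in the undirected graph $G_{C_0}$ gives $\mathrm{dist}_{G_{C_0}}(u,z)\le 2^{t+1}$.

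Putting the pieces together: if, starting from $C_0$, the system first reaches a legal configuration at round $t^*$, then $w_n$ sits in an address variable of $w_1$, so $n-1=\mathrm{dist}_{G_{C_0}}(w_1,w_n)\le 2^{t^*}$ and $t^*\ge\lceil\log_2(n-1)\rceil=\Omega(\log n)$. Because $C_0$ (the line and the id assignment) is fixed before any algorithm is chosen, this bound holds uniformly over all pairs of (Sybil-resistant) supervisor and overlay algorithms. The step I expect to be the real obstacle is making the lemma watertight against the supervisor: the supervisor knows the whole network and could name an arbitrarily distant node $z$ in a single message to $u$, so the entire argument rests on the precise formulation of the Sybil-resistance property --- namely that a node never stores, never relays, and (in the reading required here) never initiates contact using an id that reaches it only through advice. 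I would therefore state that property carefully and verify that the ``working memory'' abstraction is preserved by both the store action and the send action; the remaining points (implicit versus explicit edges in $C_0$, the integer rounding of $t^*$) are routine.
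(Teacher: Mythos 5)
Your proposal is correct and follows essentially the same route as the paper: both arguments rest on the observation that the set of ids a node can acquire grows by at most a factor of two in graph distance per round (since Sybil-resistance confines new connections to ids already in memory or channel), so two nodes at distance $\Theta(n)$ that must be adjacent in $G^*$ need $\Omega(\log n)$ rounds to learn each other's ids. Your forward induction on ``working memory'' with the triangle inequality is if anything a cleaner formalization than the paper's, which traces the trajectory of the target id through the sequence of nodes holding it and restricts attention to algorithms that only add edges before extending the bound to the general case.
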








\bibliographystyle{plainurl}
\bibliography{main}

\appendix 
\newpage

\section{Pseudocodes} \label{sec:pseudocodes}

In every round, each node executes the flyover algorithms (cf. Section \ref{sec:flyover-code}), distributed Tree-to-Path algorithms (cf. Section \ref{subsec:distrib-TtP}), and then the base algorithm (in that order). In Section \ref{sec:helpful-figs}, we outline the list of all variables used by our algorithms.

\pg{Transferring the ``advised'' neighbors to base algorithm.} For any node $u$, the variable $u.\textit{c-id}$ is used to store its neighbors on the ``advised'' topology (i.e., connections made by routing over the hypercubic structure due to the advice). Thus, in every round, the ids stored in that variable (if it is nonempty) are added to the address variable(s) of the base algorithm, used to store the neighbors of the target topology, before the algorithm is executed. If the supervisor is honest, this helps the nodes obtain the explicit edges required for the base algorithm to stabilize. However, if the advice is corrupted in any way, then by design, all nodes reject their advice (i.e. execute $\mathbf{RejectFlyover}$ function) in $O(\log n)$ rounds, and for any node $u$, $u.\textit{c-id}$ becomes empty.

\pg{Choice of base algorithm.} In general, any base algorithm that self-stabilizes to the sorted line using the overlay primitives (cf. Section \ref{subsec:univ-prim}) can be chosen. We only require that the algorithm use the Delegate-after-Reversal or DR primitive (cf. Definition \ref{def:bidirected}), in place of the Delegate primitive, to facilitate information exchange in both directions. (Note that this mechanism is already part of many existing algorithms, including \cite{onus2007linearization, jacob2014skip+}.) The DR primitive first delegates an edge in the reverse direction, and then it is immediately set to the intended direction. This primitive can either be implemented in a single round with a ``preprocessing'' phase (where the edge-reversal happens; see, e.g., Skip+ algorithm \cite{jacob2014skip+} where nodes create bidirected links in such a phase), or simply in two rounds in our model.

As examples, if the base algorithm is the classic Linearization algorithm \cite{onus2007linearization}, the network self-stabilizes in $O(n)$ rounds if the supervisor is malicious; if the base algorithm is the Skip+ algorithm \cite{jacob2014skip+} (that converges to a topology where the sorted-path is part of it, and the degree of each node is $O(\log n)$), the network self-stabilizes in $O(\log^2 n)$ rounds if the supervisor is malicious. Both of those algorithms can be directly used in our setting.

\subsection{Flyover Pseudocode} \label{sec:flyover-code}

\begin{algorithm} \label{alg:init}
\caption{Init}
\begin{algorithmic}[1]
\STATE $\mathbf{BasicChecks}$
\begin{ALC@g}
\STATE $(\exists \langle \text{RejFlyover} \rangle \in u.\mathit{Ch}) \rightarrow u.\mathit{exit} \coloneqq 1$;
\STATE $(u.S = \emptyset) \land (u.\textit{c-ids} \neq \emptyset \lor u.\textit{flyID} \neq u.\mathit{id}) \rightarrow u.\mathit{exit} \coloneqq 1$;
\STATE $(u.L = \emptyset \land u.R \neq \emptyset) \land (u.\mathit{vID} \neq 1 \lor u.\mathit{flyID} \neq u.id) \rightarrow  u.\mathit{exit} \coloneqq 1$;
\STATE $(u.L \neq \emptyset) \land (u.\mathit{vID} \leq 1) \rightarrow u.\mathit{exit} \coloneqq 1$;
\STATE $(u.\mathit{vID} = 1 \land u.\text{\emph{c-dist}} \neq 0) \lor (u.\mathit{vID} > 1 \land u.\text{\emph{c-dist}} = 0) \rightarrow u.\mathit{exit} \coloneqq 1$;
\STATE $(u.S \neq \emptyset \land u.\mathit{vID} > 1) \land (\mathbf{NextStop}(u.\text{\emph{c-par}}) = \varnothing) \rightarrow u.\mathit{exit} \coloneqq 1$;
\STATE $|u.\text{\emph{c-ids}}| > 2 \rightarrow u.\mathit{exit} \coloneqq 1$;
\STATE $(|u.\text{\emph{c-ids}}| = 2) \land (u.\mathit{id} > \max(u.\text{\emph{c-ids}}) \lor u.\mathit{id} < \min(u.\text{\emph{c-ids}})) \rightarrow u.\mathit{exit} \coloneqq 1$;
\end{ALC@g}
\STATE
\STATE $\mathbf{RejectFlyover} \text{: } u.\mathit{exit} = 1 \rightarrow$
\begin{ALC@g}
\STATE $\mathit{flyids} \coloneqq ( u.S \cup \{u.\mathit{flyID}\} \cup u.\text{\emph{c-ids}} ) \setminus u.\mathit{id}$;
\STATE SEND($ \langle \text{RejFlyover} \rangle$) to ids in $\mathit{flyids}$; $\mathbf{Flush}(\mathit{flyids})$; 
\STATE $u.L \coloneqq \emptyset; u.R \coloneqq \emptyset; u.\mathit{vID} \coloneqq 0; u.\mathit{flyID} \coloneqq u.\mathit{id};$
\STATE $u.\mathit{exit} \coloneqq 0; u.\text{\emph{c-par}} \coloneqq 0; u.\text{\emph{c-dist}} \coloneqq -1; u.\text{\emph{c-ids}} \coloneqq \emptyset;$
\end{ALC@g}
\end{algorithmic}
\end{algorithm}

\begin{algorithm} \label{alg:tests}
\caption{Test Flyover and Connectivity Certificate}
\begin{algorithmic}[1]
\STATE $\mathbf{TestFlyoverConstruction}$
\begin{ALC@g}
\STATE $u.R \neq \emptyset \rightarrow$ SEND($\langle \text{TestLine-R}, u.\mathit{id} \rangle$) to $u.S_r(1)$;
\STATE $u.L \neq \emptyset \rightarrow$ SEND($\langle \text{TestLine-L}, u.\mathit{id} \rangle$) to $u.S_l(1)$;
\IF{$(u.R \neq \emptyset \land u.L \neq \emptyset)$}
\STATE $\forall i \in [\min(|u.R|, |u.L|)]$, SEND($\langle \text{FlyConst-R}, u.S_l(i), i, u.\mathit{id} \rangle$) to $u.S_r(i)$;
\STATE $\forall i \in [\min(|u.R|, |u.L|)]$, SEND($\langle \text{FlyConst-L}, u.S_r(i), i, u.\mathit{id} \rangle$) to $u.S_l(i)$;
\ENDIF
\end{ALC@g}
\STATE
\STATE $\mathbf{TestConnCertificate}\text{: } (u.\mathit{vID} > 1 \land u.\mathit{flyID} \neq u.\mathit{id}) \rightarrow$
\begin{ALC@g}
\STATE SEND($\langle \text{TestCert}, u.\mathit{id}, u.\text{\emph{c-par}}, u.\text{\emph{c-dist}}\rangle$) to $\mathbf{NextStop}(u.\text{\emph{c-par}})$;
\end{ALC@g}
\STATE 
\STATE $\mathbf{TestFlyoverMetadata}$
\begin{ALC@g}
\STATE $\textit{prop-flyID} \coloneqq (u.\mathit{vID} = 1) \lor (u.\mathit{vID} > 1 \land u.\mathit{flyID} \neq u.\mathit{id});$
\STATE $|u.R| \geq 1 \rightarrow$ SEND($\langle \text{TestvID},(u.\mathit{vID} + 2^{(i-1)})\rangle$) to $u.S_r(i)$ for all $i \in [|u.R|]$;
\STATE $|u.L| \geq 1 \rightarrow$ SEND($\langle \text{TestvID},(u.\mathit{vID} - 2^{(i-1)})\rangle$) to $u.S_l(i)$ for all $i \in [|u.L|]$;
\STATE $(u.S = \emptyset \land u.\mathit{vID} = 0) \rightarrow$ SEND($\langle \text{TestFlyID}, \bot \rangle$) to all ids in memory;
\STATE $\textit{prop-flyID} \rightarrow$ SEND($\langle \text{TestFlyID}, u.\mathit{flyID} \rangle$) to all ids, except $u.\mathit{flyID}$, in memory;
\end{ALC@g}
\end{algorithmic}
\end{algorithm}

\begin{algorithm} \label{alg:resp}
\caption{Response Functions (Flyover Related)}
\begin{algorithmic}[1]
\STATE $\mathbf{R\_TestFlyoverConstruction}$
\begin{ALC@g}
\STATE \COMMENT{Here, we omit code for a ``\dots-L'' msg due to similarity in handling a ``\dots-R'' msg}
\FOR{each received message $\langle \mathit{msg},\mathit{sen} \rangle$ where $\mathit{msg} \in \{ \text{TestLine-R}, \text{TestLine-L} \}$}
\STATE $(\mathit{msg} = \text{TestLine-R}) \land ((u.L \neq \emptyset \land u.S_l(1) \neq \mathit{sen}) \lor u.L = \emptyset) \rightarrow u.\mathit{exit} \coloneqq 1$;
\STATE $u.S = \emptyset \lor u.\mathit{exit} = 1  \rightarrow$  SEND($\langle \text{RejFlyover} \rangle$) to $\mathit{sen}$; $\mathbf{Flush}(\{\mathit{sen}\})$; 
\ENDFOR
\FOR{each received message $\langle \mathit{msg}, w, i, \mathit{sen} \rangle$ where $\mathit{msg} \in \{ \text{FlyConst-R}, \text{FlyConst-L} \}$}
\IF{$\mathit{msg} = \text{FlyConst-R}$}
\STATE $(|u.L| \geq i \land u.S_l(i) \neq \mathit{sen}) \lor u.L = \emptyset \rightarrow u.\mathit{exit} \coloneqq 1$;
\STATE $|u.L| \geq (i+1) \land u.S_l(i+1) \neq w \rightarrow u.\mathit{exit} \coloneqq 1$;
\STATE $u.\mathit{exit} = 0 \land (1 < |u.L| < i) \rightarrow \mathbf{Flush}(\{\mathit{sen}, w\})$;
\STATE $u.\mathit{exit} = 0 \land (|u.L| = i \land u.S_l(i) = \mathit{sen}) \rightarrow u.S_l(|u.L| + 1) \coloneqq w$;
\ENDIF
\STATE $u.S = \emptyset \lor u.\mathit{exit} = 1   \rightarrow$ SEND($\langle \text{RejFlyover} \rangle$) to $\{\mathit{sen, w}\}$; $\mathbf{Flush}(\{\mathit{sen, w}\})$; 
\ENDFOR
\end{ALC@g}
\STATE
\STATE $\mathbf{R\_TestConnCertificate}$
\begin{ALC@g}
\FOR{each received message $\langle \text{TestCert}, w, \mathit{vID}, \mathit{dist}\rangle$}
\STATE $\textit{prop-flyID} \coloneqq (u.\mathit{vID} = 1) \lor (u.\mathit{vID} > 1 \land u.\mathit{flyID} \neq u.\mathit{id});$
\STATE $(u.\mathit{vID} = \mathit{vID}) \land (\mathit{dist}-1 \neq u.\textit{c-dist}) \rightarrow u.\mathit{exit \coloneqq} 1$;
\STATE $(u.\mathit{vID} \neq \mathit{vID}) \land (\mathbf{NextStop}(\mathit{vID}) = \varnothing) \rightarrow u.\mathit{exit \coloneqq} 1$;
\IF{$(u.S = \emptyset \lor u.\mathit{exit} = 1)$}
\STATE SEND($\langle \text{RejFlyover} \rangle$) to $\mathit{w}$; $\mathbf{Flush}(\{\mathit{w}\})$; 
\ELSE
\STATE $u.\mathit{vID} \neq \mathit{vID} \land \neg \textit{prop-flyID} \rightarrow \mathbf{Flush}(\{w\})$; 
\STATE $u.\mathit{vID} \neq \mathit{vID} \land \textit{prop-flyID} \rightarrow$ SEND($\langle \text{TestCert}, w, \mathit{vID}, \mathit{dist}\rangle$) to $\mathbf{NextStop}(\mathit{vID})$;
\STATE $u.\mathit{vID} = \mathit{vID} \rightarrow u.\textit{c-ids} \coloneqq u.\textit{c-ids} \cup \{ w \}$; SEND($\langle \text{IntroCert}, u.\mathit{id} \rangle$ to $w$;
\ENDIF
\ENDFOR 
\FOR{each received message $\langle \text{IntroCert}, w \rangle$}
\STATE $u.\textit{c-ids} \coloneqq u.\textit{c-ids} \cup \{ w \}$; 
\ENDFOR
\end{ALC@g}
\STATE 
\STATE $\mathbf{R\_TestFlyoverMetadata}$
\begin{ALC@g}
\FOR{each received message $\langle \text{TestvID}, \mathit{vID} \rangle$}
\STATE $(u.S = \emptyset) \lor (u.S \neq \emptyset \land  u.\mathit{vID} \neq \mathit{vID}) \rightarrow u.\mathit{exit} = 1$;
\ENDFOR
\FOR{each received message $\langle \text{TestFlyID}, \mathit{flyID} \rangle$}
\IF{$(u.L = \emptyset \land u.R \neq \emptyset) \land (u.\mathit{exit} = 0)$}
\STATE $(u.\mathit{flyID} = u.\mathit{id} \land \mathit{flyID} \neq \bot) \rightarrow u.\mathit{flyID} \coloneqq \mathit{flyID}$; 
\ENDIF
\STATE $(u.S \neq \emptyset \land u.\mathit{flyID} \neq \mathit{flyID}) \lor (u.S = \emptyset \land \mathit{flyID} \neq \bot) \rightarrow u.\mathit{exit} = 1$; 
\STATE $u.\mathit{exit} = 1 \rightarrow$  SEND($\langle \text{RejFlyover} \rangle$) to $\mathit{flyID}$; $\mathbf{Flush}(\{\mathit{flyID}\})$; 
\ENDFOR
\end{ALC@g}
\end{algorithmic}
\end{algorithm}
\newpage

\subsection{Tree-to-Path Pseudocode} \label{sec:TtP-codes}

In this subsection, we first provide the pseudocode of the (centralized) Tree-to-Path algorithm described in Section \ref{subsec:tree-to-path}. Then, we give the pseudocode for the distributed implementation of the same algorithm, which also includes the interaction with the supervisor.

\subsubsection{Sequential Tree-to-Path} \label{subsec:seq-TtP}

\begin{algorithm}
\caption{Tree-to-Path}
\label{alg:tree-to-line}
\begin{algorithmic}[1]
\REQUIRE Let the input to the algorithm be a (labelled) rooted tree $T = (l(\cdot), r, V, E)$ where $l(r) = 0$ and $|V| \geq 2$. For any $v \in V$, let $C(v) = \{ u \mid \mathrm{dist}(u, v) = 1 \wedge \mathrm{dist}(v, r) < \mathrm{dist}(u, r) \}$.
\ENSURE Output of the algorithm is a directed path graph $P$ where $\mathrm{Beg}(P) = r$.
\FOR{$v \in V$ and $C(v) \neq \emptyset$}
\STATE Assign a total ordering on $C(v) = \{ u_1, \dots, u_q \}$ such that $u_1 < \dots < u_q$ for $q \geq 1$;
\ENDFOR
\STATE Let $E' \leftarrow \emptyset$;
\FOR{$v \in V$ and $v \neq r$ and $l(v) = 1$}
\STATE Let $p_v \in V$ such that $v \in C(p_v)$; \COMMENT{Parent of $v$}
\STATE Let $\mathit{RSib} = \{ w \mid w > v \wedge w \in C(p_v)\}$; \COMMENT{Right siblings (via the total ordering)}
\IF{$\mathit{RSib} = \emptyset$ and $C(v)  = \emptyset$}
\STATE Add (directed) edge $(p_v, v)$ to $E'$;
\ELSIF{$\mathit{RSib} = \emptyset$ and $C(v)  \neq \emptyset$}
\STATE Add (directed) edge $(p_v, \max(C(v)))$ to $E'$;
\ELSIF{$\mathit{RSib} \neq \emptyset$ and $C(v)  = \emptyset$}
\STATE Add (directed) edge $(\min(\mathit{RSib}), v)$ to $E'$;
\ELSE
\STATE Add (directed) edge $(\min(\mathit{RSib}), \max(C(v)))$ to $E'$;
\ENDIF
\ENDFOR
\FOR{$v \in V$ and $v \neq r$ and $l(v) = 0$}
\STATE Let $p_v \in V$ such that $v \in C(p_v)$; \COMMENT{Parent of $v$}
\STATE Let $\mathit{LSib} = \{ w \mid w < v \wedge w \in C(u)\}$; \COMMENT{Left siblings (via the total ordering)}
\IF{$\mathit{LSib} = \emptyset$ and $C(v)  = \emptyset$}
\STATE Add (directed) edge  $(u, p_v)$ to $E'$;
\ELSIF{$\mathit{LSib} = \emptyset$ and $C(v)  \neq \emptyset$}
\STATE Add (directed) edge $(\min(C(v)), p_v
)$ to $E'$;
\ELSIF{$\mathit{LSib} \neq \emptyset$ and $C(v)  = \emptyset$}
\STATE Add (directed) edge $(v, \max(\mathit{LSib}))$ to $E'$;
\ELSE
\STATE Add (directed) edge $(\min(C(v)), \max(\mathit{LSib}))$ to $E'$;
\ENDIF
\ENDFOR
\STATE Output $P = (V, E')$;
\end{algorithmic}
\end{algorithm}


\subsubsection{Distributed Tree-to-Path} \label{subsec:distrib-TtP}

Here, we describe the self-stabilizing algorithms for receiving and processing the advice message. We explicitly address the interaction between the supervisor and the nodes, and also show that it can be combined with the rest of the algorithms.

From the (honest) supervisor's perceptive, if there is any node that is not part of any flyover, and has its exit variable set to 1, that node would request for advice. Once the supervisor receives the first request, it \emph{waits} for $O(\log n)$ rounds until \emph{all} the nodes are not part of any flyover, and have their exit variables set to 1. (Note that by design, if the advice or flyover is not correct, every node executes $\mathbf{RejectFlyover}$ function of Algorithm \ref{alg:conf-init} in $O(\log n)$ rounds; see Section \ref{anal:det-bad-advice} in the analysis.) At this point, the supervisor learns the ``network snapshot,'' by requesting each node to send its neighborhood, that is, the ids stored in its memory or channel.

From the perspective of any node $u$, to avoid being unnecessarily contacted by the supervisor, it should not be part of any flyover (that is, $u.S \neq \emptyset$), nor should its $u.\mathit{exit}$ variable be set to 1, while processing an advice message (i.e., during the interaction with the supervisor). After being requested for the neighborhood by the supervisor, every node can store and maintain its current neighborhood for $O(1)$ rounds to process the advice given by the supervisor. This is achieved by setting and tracking a ``timer'' variable that indicates that the advice is being processed.

We explain the additional variables of a node $u$ that are used in the following algorithms.
\begin{enumerate}
    \item $u.t$ is used as a timer variable which is set to $5$ when the supervisor requests for its current neighborhood; $u.t$ is decremented by 1 in every round, until it becomes 0 again.
    \item $u.\mathit{dist}$ is used for storing the distance from the root, in a rooted spanning tree of the current snapshot, given by the supervisor; see Algorithm \ref{alg:cloud-alg} for the details. This distance is primarily used by nodes for calculating the label in the Tree-to-Path algorithm (cf. Section \ref{subsec:tree-to-path}), and is also used for quickly checking that any subset of ``parent'' ids (in the advice) do not form a cycle, using the distance-checks in local certification (cf. Section \ref{sec:advice}).
\end{enumerate}

\begin{algorithm}
\caption{Init2}
\label{alg:init2}
\begin{algorithmic}[1]
\STATE $\mathbf{BasicChecks2}$
\begin{ALC@g}
\STATE $u.t > 5 \rightarrow u.t \coloneqq 5$; $u.t < 0 \rightarrow u.t \coloneqq 0$;
\STATE $u.t > 0 \rightarrow u.t \coloneqq u.t - 1$; \COMMENT{Decrement $u.r$ (until 0)}
\STATE $u.t \leq 1 \land (u.S = \emptyset \lor u.\mathit{exit} = 1) \rightarrow u.\mathit{vID} \coloneqq 0$; \COMMENT{Reset \emph{vID} if flyover is not formed} 
\end{ALC@g}
\STATE
\STATE $\mathbf{SnapshotReq}\text{: } u.S = \emptyset \land u.\mathit{exit} = 0 \land u.t = 0 \rightarrow$
\begin{ALC@g}
\IF{there exists a message $\langle \text{RequestSnapshot}\rangle$}
\STATE $u.t \coloneqq 5$; $\mathit{snap} \coloneqq $ Set of all ids in $\mathrm{M}_u(\mathcal{A}_0)$;  \COMMENT{Set up timer and neighborhood}
\STATE SEND($\langle \text{Intro}, u.\mathit{id} \rangle$) to all nodes in $\mathit{snap}$; $\forall v \in \mathit{snap}, $ SEND($\langle \text{Intro}, v \rangle$) to thyself;
\STATE SEND($\langle \text{Neighborhood}, \mathit{snap} \rangle$) to supervisor;
\ENDIF
\end{ALC@g}
\end{algorithmic}
\end{algorithm}

\begin{algorithm}
\caption{Local Advice Verification}
\label{alg:local-cert}
\begin{algorithmic}[1]
\STATE $\mathbf{GetAdvice}$
\begin{ALC@g}
\STATE $busy \coloneqq \neg (u.S = \emptyset \land u.\mathit{exit} = 0 \land u.t > 1)$;
\STATE $\mathit{snap} \coloneqq \emptyset; (\forall \langle \text{Intro}, v \rangle \in u.\mathit{Ch}) \rightarrow \mathit{snap} \coloneqq \mathit{snap} \cup \{ v \}$;
\IF{$(\langle \text{Advice}, \cdots \rangle \in u.\mathit{Ch}) \land (\neg \mathit{busy}) \land (u.t = 4)$}
\IF{$\langle \text{Advice}, \cdots \rangle$ is well-formed (e.g., $\mathit{Adv}(u).\mathit{par} \in \mathit{snap}$, etc)} 
\STATE Update relevant variables (e.g., $u.\mathit{vID}, u.\textit{c-dist}, u.\mathit{dist}$);
\STATE SEND($\langle \text{TestAdvice}, u.\mathit{dist}, u.\mathit{id} \rangle$) to $\mathit{Adv}(u).\mathit{par}$;
\ENDIF
\ENDIF
\STATE $\mathbf{Flush}(\mathit{snap})$;
\end{ALC@g}
\STATE 
\STATE $\mathbf{CertifyTree}$
\begin{ALC@g}
\STATE $\mathit{ignore\_msg} \coloneqq \neg (u.S = \emptyset \land u.\mathit{exit} = 0 \land u.t > 1); \mathit{children} \coloneqq \emptyset$;
\FOR{each received message $\langle \text{TestAdvice}, \mathit{dist}, v\rangle$}
\STATE $\mathit{children} \coloneqq v \cup \mathit{children}$; $(u.\mathit{dist} \neq \mathit{dist} - 1) \rightarrow \mathit{ignore\_msg} \coloneqq 1$;
\ENDFOR
\STATE $(\neg \mathit{ignore\_msg} \land  \mathit{children} \neq \emptyset) \rightarrow \mathbf{SetupLocalTransform}(\mathit{children})$;
\STATE $\mathbf{Flush}(\mathit{children})$; 
\end{ALC@g}
\end{algorithmic}
\end{algorithm}

\newpage

\begin{algorithm}
\caption{Distributed Tree-to-Path}
\label{alg:distrib-tree-to-path}
\begin{algorithmic}[1]
\STATE $\mathbf{LocalTransform}$
\begin{ALC@g}
\STATE $\mathit{ignore\_msg} \coloneqq \neg (u.S = \emptyset \land u.\mathit{exit} = 0 \land u.t > 1);$
\STATE $\mathit{children} \coloneqq \emptyset; \mathit{parent} \coloneqq \bot; \textit{r-sib} \coloneqq \bot; \textit{l-sib} \coloneqq \bot$;
\FOR{each received message $\langle \text{Verified}, \mathit{msg}, v \rangle$ where $\mathit{msg} \in \{\text{parent},\text{sib+},\text{sib-},\text{child}\}$}
\STATE $(\mathit{msg} = \text{parent} \land \mathit{parent} \neq \bot) \rightarrow  \mathit{ignore\_msg} \coloneqq 1$;
\STATE $(\mathit{msg} = \text{sib-} \land \textit{l-sib} \neq \bot) \rightarrow \mathit{ignore\_msg} \coloneqq 1$;
\STATE $(\mathit{msg} = \text{sib+} \land \textit{r-sib} \neq \bot) \rightarrow \mathit{ignore\_msg} \coloneqq 1$;
\STATE $(\mathit{msg} = \text{parent} \land \neg \mathit{ignore\_msg}) \rightarrow \mathit{parent} \coloneqq v$;
\STATE $(\mathit{msg} = \text{sib-} \land \neg \mathit{ignore\_msg}) \rightarrow \textit{l-sib} \coloneqq v$; 
\STATE $(\mathit{msg} = \text{sib+} \land \neg \mathit{ignore\_msg}) \rightarrow \textit{r-sib} \coloneqq v$;
\STATE $(\mathit{msg} = \text{child} \lor \mathit{ignore\_msg} = 1) \rightarrow \mathit{children} \coloneqq v \cup \mathit{children};$
\ENDFOR
\STATE $(\mathit{parent} = \bot \land u.\mathit{dist} \neq 0) \lor (\mathit{parent} \neq \bot \land u.\mathit{dist} < 1) \rightarrow \mathit{ignore\_msg} \coloneqq 1;$
\STATE $(\mathit{ignore\_msg} = 0 \land \mathit{parent} \neq \bot) \rightarrow \mathbf{ExecuteTransform}(\mathit{parent}, \textit{r-sib}, \textit{l-sib}, \mathit{children})$;
\STATE $\mathbf{Flush}(\{\mathit{parent}, \textit{r-sib}, \textit{l-sib}\} \cup \mathit{children})$;
\end{ALC@g}
\STATE
\STATE $\mathbf{JoinPath}$
\begin{ALC@g}
\STATE $\mathit{ignore\_msg} \coloneqq \neg (u.S = \emptyset \land u.\mathit{exit} = 0 \land u.t > 1); \mathit{flyL} \coloneqq \bot; \mathit{flyR} \coloneqq \bot;$
\FOR{each received message $\langle \mathit{msg}, v \rangle$ where $\mathit{msg} \in \{\text{Path+},\text{Path-}\}$}
\STATE $(\mathit{msg} = \text{Path-} \land (u.\mathit{vID} = 1 \lor \mathit{flyL} \neq \bot)) \rightarrow \mathit{ignore\_msg} \coloneqq 1$;
\STATE $(\mathit{msg} = \text{Path+} \land \mathit{flyR} \neq \bot) \rightarrow \mathit{ignore\_msg} \coloneqq 1$;
\STATE $(\mathit{msg} = \text{Path-} \land \neg \mathit{ignore\_msg}) \rightarrow \mathit{flyL} \coloneqq v$;
\STATE $(\mathit{msg} = \text{Path+} \land \neg \mathit{ignore\_msg}) \rightarrow \mathit{flyR} \coloneqq v$;
\STATE $\mathit{ignore\_msg} = 1 \rightarrow \mathbf{Flush}(\{v\})$;
\ENDFOR
\STATE \COMMENT{First left and right shortcuts in the flyover}
\STATE $\mathit{ignore\_msg} = 0 \rightarrow u.S_l(1) \coloneqq \mathit{flyL}; u.S_r(1) \coloneqq \mathit{flyR};$
\STATE $\mathbf{Flush}(\{\mathit{flyL}, \mathit{flyR}\})$;
\end{ALC@g}
\end{algorithmic}
\end{algorithm}
\subsection{Helper Functions} \label{subsec:helpers}

\begin{algorithm}
\caption{Helper Functions (Flyover Related)}
\begin{algorithmic}[1]
\label{alg:helper}
\STATE $\mathbf{NextStop(\mathit{val})}$
\begin{ALC@g}
\STATE $(\mathit{val} < 1 \lor \mathit{val} = u.\mathit{vID}) \lor (u.S = \emptyset \lor u.\mathit{vID} < 1) \rightarrow $ Return ``$\varnothing$'';
\STATE $(\mathit{val} > u.\mathit{vID} \land (u.R = \emptyset)) \lor (\mathit{val} < u.\mathit{vID} \land (u.L = \emptyset)) \rightarrow $ Return ``$\varnothing$'';
\IF{$(\mathit{val} > u.\mathit{vID})$}
\STATE Return $u.S_r(\mathrm{argmin}_{i \in |u.R|}(\lvert u.\mathit{vID} + 2^{(i-1)} - \mathit{vID}\rvert))$;
\ELSE
\STATE Return $u.S_l(\mathrm{argmin}_{i \in |u.L|}(\lvert u.\mathit{vID} - 2^{(i-1)} - \mathit{vID}\rvert))$;
\ENDIF
\end{ALC@g}
\end{algorithmic}
\end{algorithm}

\begin{algorithm}
\caption{Helper Functions (Advice Related)}
\begin{algorithmic}[1]
\STATE $\mathbf{SetupLocalTransform(\mathit{children})}$
\begin{ALC@g}
\STATE Consider a total order on $\mathit{children} = \{ u_1, \dots, u_q \}$ such that $u_1 < \dots < u_q$ for $q > 1$;
\STATE SEND($\langle \text{Verified}, \text{parent}, u.\mathit{id} \rangle$) to all ids in $\mathit{children}$;
\STATE SEND($\langle \text{Verified}, \text{sib+}, u_j \rangle $) to $u_{j-1}$ for $q \geq j > 1$;
\STATE SEND($\langle \text{Verified}, \text{sib-}, u_j \rangle$) to $u_{j+1}$ for $q > j \geq 1$;
\STATE SEND($\langle \text{Verified}, \text{child}, w \rangle$) to self, for each $w \in \mathit{children}$;
\STATE
\end{ALC@g}
\STATE $\mathbf{ExecuteTransform(\mathit{parent}, \mathit{right\_sib}, \mathit{left\_sib}, \mathit{children})}$
\begin{ALC@g}
\STATE Let the total ordering on $\mathit{children} = \{ u_1, \dots, u_q \}$ such that $u_1 < \dots < u_q$ for $q > 1$; 
\STATE \COMMENT{Path+ messages (i.e., right shortcut) follow the orientation as in Algorithm \ref{alg:tree-to-line}.}
\FOR{$(u.\mathit{dist} \neq 0) \land ((u.\mathit{dist}\mod 2) = 1)$}
\IF{$\mathit{right\_sib} = \bot \land \mathit{children}  = \emptyset$}
\STATE SEND($\langle \text{Path+}, u.\mathit{id} \rangle$) to $\mathit{parent}$; SEND($\langle \text{Path-}, \mathit{parent}\rangle$) to self;
\ELSIF{$\mathit{right\_sib} = \bot \land \mathit{children}  \neq \emptyset$}
\STATE SEND($\langle \text{Path+}, \max(\mathit{children}) \rangle$) to $\mathit{parent}$; SEND($\langle \text{Path-}, \mathit{parent} \rangle $) to $\max(\mathit{children})$;
\ELSIF{$\mathit{right\_sib} \neq \bot \land \mathit{children}  = \emptyset$}
\STATE SEND($\langle \text{Path+}, u.\mathit{id} \rangle$) to $\mathit{right\_sib}$; SEND($\langle \text{Path-}, \mathit{right\_sib} \rangle$) to self;
\ELSE
\STATE SEND($\langle \text{Path+}, \max(\mathit{children}) \rangle$) to $\mathit{right\_sib}$; SEND($\langle \text{Path-}, \mathit{right\_sib} \rangle$) to $\max(\mathit{children})$;
\ENDIF
\ENDFOR
\FOR{$(u.\mathit{dist} \neq 0) \land ((u.\mathit{dist}\mod 2) = 0)$}
\IF{$\mathit{left\_sib} = \bot \land \mathit{children}  = \emptyset$}
\STATE SEND($\langle \text{Path-}, u.\mathit{id} \rangle$) to $\mathit{parent}$; SEND($\langle \text{Path+}, \mathit{parent} \rangle$) to self;
\ELSIF{$\mathit{left\_sib} = \bot \land \mathit{children} \neq \emptyset$}
\STATE SEND($\langle \text{Path-}, \min(\mathit{children}) \rangle$) to $\mathit{parent}$; SEND($\langle \text{Path+}, \mathit{parent} \rangle$) to $\min(\mathit{children})$;
\ELSIF{$\mathit{left\_sib} \neq \bot \land \mathit{children}  = \emptyset$}
\STATE SEND($\langle \text{Path-}, u.\mathit{id} \rangle$) to $\mathit{left\_sib}$; SEND($\langle \text{Path+}, \mathit{left\_sib} \rangle$) to self;
\ELSE
\STATE SEND($\langle \text{Path-}, \min(\mathit{children}) \rangle$) to $\mathit{left\_sib}$; SEND($\langle \text{Path+}, \mathit{left\_sib} \rangle$) to $\min(\mathit{children})$;
\ENDIF
\ENDFOR
\end{ALC@g}
\end{algorithmic}
\end{algorithm}
\newpage
\section{Tree-to-Path Analysis} \label{sec:TtP-analysis}

\begin{theorem}
Algorithm \ref{alg:tree-to-line} takes a (labelled) rooted tree $T = (l(\cdot), r, V, E)$ where $l(r) = 0$ as input, and outputs a directed path graph $P = (V, E')$, where $\mathrm{Beg}(P) = r$ and $\mathrm{End}(P) = \min(C_T(r))$.
\end{theorem}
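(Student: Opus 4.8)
The plan is to prove, by induction on the height $h$ of the input tree (we only need $|V|\ge 2$, so $h\ge 1$), the following two statements \emph{together}: \emph{(S0)} if $l(r)=0$ then Algorithm~\ref{alg:tree-to-line} outputs a directed path $P=(V,E')$ with $\mathrm{Beg}(P)=r$ and $\mathrm{End}(P)=\min(C_T(r))$; and \emph{(S1)} if $l(r)=1$ then it outputs a directed path $P$ with $\mathrm{Beg}(P)=\max(C_T(r))$ and $\mathrm{End}(P)=r$. The theorem is exactly (S0), but (S1) is needed because the children of a label-$0$ root have label $1$, and we will recurse on the subtrees hanging off them; the ``alternation'' built into the rules is precisely what makes the two statements dovetail.

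For the base case $h=1$ with $l(r)=0$, the children $u_1<\dots<u_q$ of $r$ are all leaves of label $1$, and the $l(v)=1$ branch adds exactly the edges $(r,u_q)$ and $(u_{i+1},u_i)$ for $i<q$, i.e.\ the path $r\to u_q\to\dots\to u_1$, so $\mathrm{End}(P)=u_1=\min(C_T(r))$; the case $l(r)=1$, $h=1$ is the mirror image via the $l(v)=0$ branch, giving $u_q\to\dots\to u_1\to r$. For the inductive step the crucial preliminary is a \emph{restriction} observation: for $p_i\in C_T(r)$ and any $w\in V(T_{p_i})\setminus\{p_i\}$, the parent of $w$, its ordered sibling set, its child set, and — using $\mathrm{dist}_T(r,w)=\mathrm{dist}_T(r,p_i)+\mathrm{dist}_{T_{p_i}}(p_i,w)$ together with the definition of the labelling — its label, all agree whether computed in $T$ or in $T_{p_i}$. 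Hence the edges the algorithm adds for the nodes of $V(T_{p_i})\setminus\{p_i\}$ when run on $T$ coincide with the edges it adds when run on $T_{p_i}$ alone, which by the induction hypothesis applied to $T_{p_i}$ (its root $p_i$ has label $1-l(r)$ and its height is $\le h$) form a directed path $B_i$; for a leaf $p_i$ we take $B_i$ to be the one-vertex path $\{p_i\}$.

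Now take $l(r)=0$ with children $p_1<\dots<p_q$, so by induction each $B_i$ has $\mathrm{End}(B_i)=p_i$ and $\mathrm{Beg}(B_i)=\max(C_T(p_i))$ if $p_i$ is internal, $\mathrm{Beg}(B_i)=p_i$ if $p_i$ is a leaf — in either case $\mathrm{Beg}(B_i)$ is exactly the endpoint targeted by the $l(v)=1$ rule at $p_i$ (which points to $\max(C(p_i))$ or to $p_i$). A one-line check of that rule over its four cases (``$p_i$ has a right sibling or not'' $\times$ ``$p_i$ has children or not'') then shows that the only edges not already lying inside some $B_i$ are $(r,\mathrm{Beg}(B_q))$ and $(\mathrm{End}(B_{i+1}),\mathrm{Beg}(B_i))$ for $i<q$. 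Since the vertex sets $V(B_q),\dots,V(B_1)$ are pairwise disjoint and, with $\{r\}$, cover $V$, these connector edges splice the blocks into the single directed path $r\to B_q\to B_{q-1}\to\dots\to B_1$, whose endpoints are $r$ and $\mathrm{End}(B_1)=p_1=\min(C_T(r))$; this is (S0). The argument for $l(r)=1$ is symmetric, using the $l(v)=0$ rule, left siblings, and minima, yielding $p_q\to B_q\to\dots\to B_1\to r$, which is (S1).

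The step I expect to be the main obstacle is the bookkeeping of the inductive step rather than any single deep idea: one must (i) state and justify the restriction observation carefully enough that ``run the algorithm on $T$'' genuinely factors as ``run it on each $T_{p_i}$, then add the connectors,'' and (ii) verify \emph{uniformly}, across the leaf/internal and right-sibling/no-right-sibling cases, that each connector leaves exactly $\mathrm{End}(B_{i+1})$ and enters exactly $\mathrm{Beg}(B_i)$ — this uniformity is what forbids a branch point or a cycle and forces the output to be a single path. Disjointness and coverage of the $V(B_i)$, the degree/edge count, and the two base cases are then routine.
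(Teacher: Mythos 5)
Your proposal is correct and follows essentially the same route as the paper's proof: induction on tree height, proving the $l(r)=0$ and $l(r)=1$ statements simultaneously, with the same base cases and the same case analysis (leaf vs.\ internal child, rightmost child vs.\ right sibling present) to show the connector edges splice the recursively obtained subtree paths into a single directed path. Your explicit ``restriction observation'' is a point the paper leaves implicit, but the argument is otherwise the same.
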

\begin{proof}
The height of a rooted tree $T$, denoted by $H(T)$, is equal to $\max_{v\in V}(\mathrm{dist}(r, v))$. For each node $v$, let $C(v) = \{ u \mid \mathrm{dist}(u, v) = 1 \wedge \mathrm{dist}(v, r) < \mathrm{dist}(u, r) \}$. For each node $v \in V$ where $C(v) \neq \emptyset$, there is a total ordering (e.g., based on ids) on $C(v) = \{ u_1, \dots, u_k \}$ such that $u_1 < \dots < u_k$.

We inductively prove the following statements over the height of a tree, with the base cases as the set of trees of height 1 (and the label of root can either be 0 or 1).
\begin{enumerate}
    \item Algorithm $\ref{alg:tree-to-line}$ takes in any rooted tree $T_b = (l(\cdot), p, V, E)$ where $l(p) = 0$, and outputs a (directed) path graph $P = (V, E')$ where $\mathrm{Beg}(P) = p$ and $\mathrm{End}(P)$ is $\min(C(p))$ for $|V| \geq 2$.
    \item Algorithm $\ref{alg:tree-to-line}$ takes in any rooted tree $T_b = (l(\cdot), p, V, E)$ where $l(p) = 1$, and outputs a (directed) path graph $P = (V, E')$ where $\mathrm{Beg}(P)$ is $\max(C(p))$ and $\mathrm{End}(P) = p$ for $|V| \geq 2$.
\end{enumerate}
Consider the base case of any rooted tree $T_b = (l(\cdot), p, V, E)$ where $l(p) = 0$ and $|V| \geq 2$ and $H(T_b) = 1$. Let $C(p) = \{ p_1, \dots, p_q \}$ such that $p_1 < \dots < p_q$ for any $q \geq 1$. Algorithm $\ref{alg:tree-to-line}$ adds the following edges, resulting in a path $P$, where $\mathrm{Beg}(P) = p$ and $\mathrm{End}(P)$ is $\min(C(p))$.
\begin{enumerate}
    \item The (directed) edge $(p, p_q)$ is added.
    \item For $1 \leq r < q$, the (directed) edge $(p_{r+1}, p_r)$ is added.
\end{enumerate}

Similarly, consider the base case of any rooted tree $T_b = (l(\cdot), p, V, E)$ where $l(p) = 1$ and $|V| \geq 2$ and $H(T_b) = 1$. Let $C(p) = \{ p_1, \dots, p_q \}$ such that $p_1 < \dots < p_q$ for any $q \geq 1$. Algorithm $\ref{alg:tree-to-line}$ adds the following edges, resulting in a path $P$, where $\mathrm{Beg}(P) = \max(C(p))$ and $\mathrm{End}(P)$ is $p$.
\begin{enumerate}
    \item The (directed) edge $(p_1, p)$ is added.
    \item For $1 \leq r < q$, the (directed) edge $(p_{r+1}, p_r)$ is added.
\end{enumerate}


We introduce some notation for the inductive cases, where $T = (l(\cdot), p, V, E)$ is any rooted tree of $H(T) = h+1$ for any $h \geq 1$. For any node $p$, let $C'(p) = \{ u \mid u \in C(p) \wedge C(u) \neq \emptyset\}$. Let $T_w = (l(\cdot), w, V_d(w), E_d(w))$ be the subtree rooted at $w$ and $l(w) = l(p) + \mathrm{dist}(p, w)\mod 2$, where $V_d(w) = \{ u \mid \mathrm{dist}(u, w) < \mathrm{dist}(u, p) \}$ and $E_d(w) = \{ (u, v) \mid (u, v) \in E \wedge \{u, v\} \in V_d(w) \}$. By induction, Algorithm $\ref{alg:tree-to-line}$ outputs a path, denoted by $\mathcal{A}(T_v)$, for each subtree $T_v$ rooted at $v \in C'(p)$.

Consider any rooted tree $T$ where $l(p) = 0$. Algorithm $\ref{alg:tree-to-line}$ adds the following (directed) edges, resulting in a (directed) path $P = (V, E')$ where $\mathrm{Beg}(P) = p$ and $\mathrm{End}(P)$ is $\min(C(p))$. 
\begin{enumerate}
    \item Either $(p, \mathrm{Beg}(\mathcal{A}(T_{p_q}))$ or $(p, p_q)$ is added based on whether $p_q \in C'(p)$ or $p_q \notin C'(p)$. By induction, $\mathrm{Beg}(\mathcal{A}(T_{p_q}))$ is $\max(C(p_q))$ and $\mathrm{End}(\mathcal{A}(T_{p_q}))$ is $p_q$, if $p_q \in C'(p)$, as $l(p_q) = 1$.
    \item Similarly, if there exists an $r$ such that, $1 \leq r < q$, we consider these exhaustive set of cases.
    \begin{enumerate}
        \item If $p_r, p_{r+1} \in C'(p)$, then the edge $(\mathrm{End}(\mathcal{A}(T_{p_{r+1}}), \mathrm{Beg}(\mathcal{A}(T_{p_r}))$ is added. By induction, $\mathrm{Beg}(\mathcal{A}(T_{p_r}))$ is $\max(C(p_r))$ and $\mathrm{End}(\mathcal{A}(T_{p_{r+1}}))$ is $p_{r+1}$, if $p_r, p_{r+1} \in C'(p)$, as $l(p_{r}) = 1$ and $l(p_{r+1}) = 1$.
        \item If $p_{r+1} \in C'(p)$ and $p_{r} \notin C'(p)$, then the edge $(\mathrm{End}(\mathcal{A}(T_{p_{r+1}}), p_r)$ is added. By induction, $\mathrm{End}(\mathcal{A}(T_{p_{r+1}}))$ is $p_{r+1}$, if $p_{r+1} \in C'(p)$, as $l(p_{r+1}) = 1$.
        \item If $p_{r+1} \notin C'(p)$ and $p_{r} \in C'(p)$, then edge $(p_{r+1}, \mathrm{Beg}(\mathcal{A}(T_{p_r}))$ is added. By induction, $\mathrm{Beg}(\mathcal{A}(T_{p_r}))$ is $\max(C(p_r))$, if $p_r \in C'(p)$, as $l(p_{r}) = 1$.
        \item If $p_r, p_{r+1} \notin C'(p)$, then edge $(p_{r+1}, p_r)$ is added. (Same as base case where $l(p) = 0$.)
    \end{enumerate}
\end{enumerate}
Similarly, consider any rooted tree $T$ where $l(p) = 1$. Algorithm $\ref{alg:tree-to-line}$ adds the following (directed) edges, resulting in a (directed) path $P = (V, E')$ where $\mathrm{Beg}(P) = \max(C(p))$ and $\mathrm{End}(P)$ is $p$. 
\begin{enumerate}
    \item Either $(\mathrm{End}(\mathcal{A}(T_{p_1}), p)$ or $(p_1, p)$ is added based on whether $p_1 \in C'(p)$ or $p_1 \notin C'(p)$. By induction, $\mathrm{End}(\mathcal{A}(T_{p_1}))$ is $\min(C(p_1))$ and $\mathrm{Beg}(\mathcal{A}(T_{p_1}))$ is $p_1$, if $p_1 \in C'(p)$, as $l(p_1) = 0$.
    \item Similarly, if there exists an $r$ such that, $1 \leq r < q$, we consider these exhaustive set of cases.
    \begin{enumerate}
        \item If $p_r, p_{r+1} \in C'(p)$, then the edge $(\mathrm{End}(\mathcal{A}(T_{p_{r+1}}), \mathrm{Beg}(\mathcal{A}(T_{p_r}))$ is added. By induction, $\mathrm{Beg}(\mathcal{A}(T_{p_r}))$ is $p_r$ and $\mathrm{End}(\mathcal{A}(T_{p_{r+1}}))$ is $\min(C(p_{r+1}))$, if $p_r, p_{r+1} \in C'(p)$, as $l(p_{r}) = 0$ and $l(p_{r+1}) = 0$.
        \item If $p_{r+1} \in C'(p)$ and $p_{r} \notin C'(p)$, then the edge $(\mathrm{End}(\mathcal{A}(T_{p_{r+1}}), p_r)$ is added. By induction, $\mathrm{End}(\mathcal{A}(T_{p_{r+1}}))$ is $\min(C(p_{r+1}))$, if $p_{r+1} \in C'(p)$, as $l(p_{r+1}) = 0$.
        \item If $p_{r+1} \notin C'(p)$ and $p_{r} \in C'(p)$, then edge $(p_{r+1}, \mathrm{Beg}(\mathcal{A}(T_{p_r}))$ is added. By induction, $\mathrm{Beg}(\mathcal{A}(T_{p_r}))$ is $p_r$, if $p_r \in C'(p)$, as $l(p_{r}) = 0$.
        \item If $p_r, p_{r+1} \notin C'(p)$, then edge $(p_{r+1}, p_r)$ is added. (Same as base case where $l(p) = 1$.)
    \end{enumerate}
\end{enumerate}
\end{proof}
\section{Self-Stabilization Analysis} \label{sec:ss-analysis}

\subsection{Basic Definitions}

\begin{definition} \label{def:orig-alg}
We use $\mathrm{M}_u(\mathcal{A}_0)$ to represent the (internal) memory (including the channel buffer) that a node $u$ uses for executing algorithm $\mathcal{A}_0$ respectively. Moreovoer, we use $\mathrm{R}(\mathcal{A}_0)$ for the (best-known) upper bound for the (self-stabilization) convergence of the algorithm $\mathcal{A}_0$ (when it is independently executed).
\end{definition}

\begin{definition} \label{def:diff-addr-vars}
We use the term \emph{flyover-related variables} to refer to the set of variables in Section \ref{sec:flyover-code}. Similarly, we use the term \emph{advice-related variables} to refer to the set of variables in Section \ref{subsec:distrib-TtP}.
\end{definition}

\begin{definition} \label{def:flush}
We say that a node $u$ executes a \emph{flush operation} on a set of node ids, if it stores them in any set of address variables in $\mathrm{M}_u(\mathcal{A}_0)$ (i.e., memory allocated to algorithm $\mathcal{A}_0$).
\end{definition}

\begin{definition} \label{def:well-formed-adv}
Let $\mathit{Adv}(u)$ be an advice message at node $u$. Moreover, when we refer to $\mathit{Adv}(u).x = y$, then $x = y$ has appeared in the advice. $\mathit{Adv}(u)$ is said to be \emph{well-formed} if the following conditions hold.
\begin{enumerate}
    \item $\mathit{Adv}(u).\mathit{par} \in \{ \bot, \mathrm{id} \}$, where $\mathrm{id}$ is present in its memory.
    \item If $\mathit{Adv}(u).\mathit{par} = \bot$, then $\mathit{Adv}(u).\mathit{dist} = 0, \mathit{Adv}(u).\mathit{vID} = 1, \mathit{Adv}(u).\textit{c-dist} = 0$; else $\mathit{Adv}(u).\mathit{dist} > 0, \mathit{Adv}(u).\mathit{vID} > 1$ and $\mathit{Adv}(u).\textit{c-dist} > 0$.
    \item $\mathit{Adv}(u).\mathit{vID} \in \mathbb{N}$ and $\mathit{Adv}(u).\textit{c-par} \in \mathbb{N}$.
\end{enumerate}
\end{definition}

\begin{definition} \label{def:dual-state}
We say that a node $u$ is in \emph{dual-state} if $u.S \neq \emptyset$. Furthermore, if node $u$ updates a previously non-empty $u.S$ to $\emptyset$, then we say that node $u$ has \emph{exited from the dual-state}.
\end{definition}

\begin{definition} 
A node $u$ is said to \emph{process} an advice message if $u$ enters Line 6 in Algorithm \ref{alg:local-cert}. An advice message, for any node $u$, is said to be \emph{effective} if node $u$ enters dual-state before $u.t \leq 1$. We say that a node $u$ is \emph{attentive}, or \emph{ready to receive advice}, if $u.S = \emptyset \land u.\mathit{exit} = 0 \land u.t = 0$.
\end{definition}

\begin{definition}\label{def:bidirected}
\emph{(Delegate-after-Reversal; DR Primitive)} A node $u$ is said to \emph{use DR primitive} for algorithm $\mathcal{A}_0$ in round $r$, if the following conditions hold during the execution of algorithm $\mathcal{A}_0$. 
\begin{enumerate}
    \item If node $u$ is required to \emph{delegate} an edge $(u, w)$ to node $v$, i.e., send the id of node $w$ to node $v$, for e.g., via some message $\langle \mathrm{msg}, w \rangle$, then (1) node $u$ first sends $\langle \mathrm{Rev}, v, \mathrm{msg}, w \rangle$ to node $w$, and (2) once node $w$ receives $\langle \mathrm{Rev}, v \rangle$, node $w$ sends its own id to node $v$ (for e.g., by message $\langle \mathrm{msg}, w \rangle$).
    \item If node $u$ is required to send some message $\langle \mathrm{msg}, w_1, \dots, w_2, \dots, w_t \rangle$ for some $t \geq 1$ to node $v$, then (1) node $u$ first sends $\langle \mathrm{Rev}, v, \mathrm{msg}, w_1, \dots, w_2, \dots, w_t \rangle$ to node $w_1$, and for all $1 < j \leq t$, node $u$ sends $\langle \mathrm{Rev}, v \rangle$ to node $w_j$, and (2) once node $w_1$ receives $\langle \mathrm{Rev}, v, \mathrm{msg}, w_1, \dots, w_2, \dots, w_t \rangle$, node $w_1$ sends $\langle \mathrm{msg}, w_1, \dots, w_2, \dots, w_t \rangle$ to node $v$, and for all $1 < j \leq t$, once node $w_j$ receives $\langle \mathrm{Rev}, v \rangle$, node $w_j$ sends its own id to node $v$.
\end{enumerate}
The message complexity of DR primitive is $O(1)$ factor to that of ``Delegate'' primitive (see, e.g., \cite{KOUTSOPOULOS2017408}). 
\end{definition}

\begin{definition} \label{def:s-robust}
The network is \emph{robust against a supervisor}, or \emph{s-robust}, if these two conditions hold.
\begin{enumerate}
    \item \emph{(Sybil Resistance)} For each node $u$, and for any node id $v$ present in node $u$'s advice message but \emph{not} present in node $u$'s memory or channel, node $u$ neither assigns $v$ to any of its variables (i.e. store in its memory) nor sends $v$ to any other id present in its memory or channel (i.e., send it to other nodes).
    \item \emph{(Robustness)} If all nodes do not form a single, correctly configured flyover, then all nodes exit the dual-state (thereby rejecting any ``supervisor-induced'' overlay) in $O(\log n)$ rounds.
    %
\end{enumerate}
\end{definition}



\subsection{Flyover-Related Definitions}

\begin{definition} \label{def:backbone}
We say that a set of nodes $B \subseteq V$ is a \emph{backbone}, where $|B| \geq 2$, if the following conditions hold. For ease of exposition, wlog, we denote the set $B$ as $\{ v_1, v_2, \dots \}$.
\begin{enumerate}
    \item \emph{(Wings)} $(v_1.L = \emptyset \lor v_1.S_l(1) \notin B)$ and $(v_{|B|}.R = \emptyset \lor v_{|B|}.S_r(1) \notin B)$.
    \item \emph{(Spine)} $\forall i \in [|B| - 1], v_i.S_r(1) = v_{i+1} \land v_{i+1}.S_l(1) = v_{i}$.
\end{enumerate}
If for all $ u \in V \setminus B, u \cup B$ is not a backbone, then $B$ is said to be a \emph{maximal backbone}. A maximal backbone is said to be \emph{winged maximal backbone} if $v_1.L \neq \emptyset \lor v_{|B|}.R \neq \emptyset$.
\end{definition}

\begin{definition} \label{def:ouroboros}
We say that a set of nodes $B \subseteq V$ is an \emph{ouroboros} if any of the following conditions hold. For ease of exposition, wlog, we denote the set $B$ as $\{ v_1, v_2, \dots \}$.
\begin{enumerate}
    \item \emph{(Perfect Ring)} $(v_1.S_l(1) = v_{|B|} \land v_{|B|}.S_r(1) = v_1)$ and $\forall i, v_i.S_r(1) = v_{i+1} \land v_{i+1}.S_l(1) = v_{i}$.
    \item \emph{(Stylish Ring)} $(v_1.S_l(1) \in B \lor v_{|B|}.S_r(1) \in B)$ and $\forall i, v_i.S_r(1) = v_{i+1} \land v_{i+1}.S_l(1) = v_{i}$.
\end{enumerate}
If for all $ u \in V \setminus B, u \cup B$ is not an  ouroboros, then $B$ is said to be a \emph{maximal ouroboros}.
\end{definition}

\begin{definition} \label{def:lost-nodes}
An node $u$ in a dual-state, is said to be \emph{lost}, if it is not in any backbone or ouroboros.
\end{definition}

\begin{definition} \label{def:flyover}
We say that a backbone $B \subseteq V$ is a \emph{flyover} if the following conditions hold. Let $v_1 \in B$ such that $(v_1.L = \emptyset \lor v_1.S_l(1) \in V \setminus B)$. For all $i \in [|B|-1]$, let $v_{i+1} \in B$ such that $v_i.S_r(1) = v_{i+1}$.
\begin{enumerate}
    \item \emph{(Virtual IDs)} $\forall i \in [|B|-1]$, $v_{i+1}.\mathit{vID} = v_{i}.\mathit{vID} + 1$.
    \item \emph{(Hypercube)} $\forall i \in [|B| - 1], \forall j \in \left[\lfloor \log_2(|B|-i) \rfloor + 1\right], (v_i.S_r(j) = v_{i+2^{j-1}} \land v_{i+2^{j-1}}.S_l(j) = v_i)$.
\end{enumerate}
\end{definition}

\begin{definition} \label{def:corr-config-backbone}
Consider a maximal backbone $B = \{ v_1, v_2, \dots \} \subseteq V$ where $(v_1.L = \emptyset \lor v_1.S_l(1) \in V \setminus B)$ and $\forall i \in [|B|-1], v_i.S_r(1) = v_{i+1}$. Let $G^*_B = (B, E^*_B)$ such that  $(u, v) \in E^*_B \iff (u = \mathit{succ}_B(v) \lor v = \mathit{succ}_B(u))$, where $\mathit{succ}_B(u) = \min(\{ w.\mathit{id} \mid (w \in B) \land (w.\mathit{id} > u.\mathit{id}) \})$. Let $T^*_B$ be the rooted spanning tree in the graph $G^*_B$ with node $v_1$ as the root. Then, $B$ is said to be \emph{correctly configured} if the following conditions hold.
\begin{enumerate}
    \item $\forall i \in [|B|], v_{i}.\mathit{vID} = i$ and $v_{i}.\mathit{flyID} = v_1.\mathit{id}$. 
    \item $\forall i \in [|B| - 1], v_i.\textit{c-par} = p_i.\mathit{vID}$ where $p_i$ denotes the parent of $v_i$ in $T^*_B$.
    \item $\forall i \in [|B|]$, $E^*_B(v_i) \subseteq v_i.\textit{c-ids}$ where $E^*(v_i)$ denotes the edge-set of $v_i$ in $G^*_B$. 
    \item $\forall i \in [|B|], v_i.\textit{c-dist} = \mathit{dist}_{T^*_B}(v_i, v_1)$ where $\mathit{dist}_{T^*_B}(u, v)$ is the distance between $v_1$ and $v_i$ in $T^*_B$.
\end{enumerate}
\end{definition}

\begin{definition} \label{def:prec-dual-state}
A node $u$, with $u.S \neq \emptyset$, in round $r$, is said to be in a \emph{precarious} dual-state if any of the following conditions are satisfied.
\begin{enumerate}
    \item $u$ is a lost node.
    \item $u$ belongs to a maximal ouroboros.
    \item $u$ belongs to either a maximal backbone of size less than $n$, or maximal backbone of size equal to $n$ but some node sets its $\mathtt{exit}$ variable to 1 in any round $r' \in \{r, r+1, \dots r+3\lceil \log n \rceil \}$.
\end{enumerate}
\end{definition}

\begin{definition} \label{def:stable-network}
The network is said to be \emph{stable} in any round $r$ if for each node $u \in V, u.S \neq \emptyset$ but $u$ is not in a precarious dual-state.
\end{definition}

\subsection{Helpful Observations}

\begin{remark} \label{flyover-inv}
For any node $u$, the following invariants are ensured (via $\mathbf{BasicChecks}$ function).
\begin{enumerate}
    \item $u.S = \emptyset \implies u.\mathit{flyID} = u.\mathit{id} \land u.\textit{c-ids} = \emptyset$.
    \item $u.L = \emptyset \land u.R \neq \emptyset \implies u.\mathit{vID} = 1 \land u.\mathit{flyID} = u.\mathit{id} \land u.\textit{c-dist} = 0$.
    \item $u.L \neq \emptyset \implies u.\textit{c-dist} > 0 \land u.\textit{vID} > 1$.
    \item $(u.S \neq \emptyset \land u.\mathit{vID} > 1) \implies u.\textit{c-par}$ can be ``reached,'' i.e., via left or right neighbors.
    \item $u.\text{\emph{c-ids}} \neq \emptyset \implies (|u.\text{\emph{c-ids}}| = 1) \lor (|u.\text{\emph{c-ids}}| = 2 \land \min(u.\text{\emph{c-ids}}) < u.\mathit{id} < \max(u.\text{\emph{c-ids}}))$.
\end{enumerate}
\end{remark}


\begin{remark} \label{rem:flushflyover}
Due to $\mathbf{FlushFlyover}$ function, whenever any node $u$ exits from dual-state, all ids part of node $u$'s flyover variables (including $u.\mathit{flyID}$) receive $\langle \text{RejFlyover} \rangle$ message. For any node $u$, if some node id $v \neq u$ is stored in the set $\{ u.\mathit{flyID} \} \cup u.\mathit{S} \cup u.\textit{c-ids}$, it can only be removed from that set in the $\mathbf{FlushFlyover}$ function.
\end{remark}



\begin{remark} \label{rem:attentive-bot}
If a node $u$ executes $\mathbf{SnapshotReq}$ function (or, node $u$ is \emph{attentive}) in round $r > 1$, then $u.S = \emptyset \land u.\mathit{exit} = 0 \land u.t = 0$ in that round. Since $u.S = \emptyset \land u.\mathit{exit} = 0$ condition is satisfied, and $u.S$ and $u.\mathit{exit}$ can be set to $\emptyset$ and $0$ (resp.) only in $\mathbf{FlushFlyover}$ function, node $u$ sends $\bot$ to all ids in its memory (and channel) in round $r$ due to $\mathbf{TestFlyoverMetadata}$ function. This is because either, $\mathbf{FlushFlyover}$ function was executed in round $r$ (in which case, $u.\mathit{vID}$ is set to 0), or, both $u.S$ and $u.\mathit{exit}$ were already equal to $\emptyset$ and $0$ (resp.) at the beginning of round $r$, in which case, $u.\mathit{vID}$ was set to $0$ in $\mathbf{BasicChecks2}$ function (in Algorithm \ref{alg:init2}) in round $r-1$.
\end{remark}

\begin{claim}
Let $B = \{ v_1, \dots, v_{|B|} \}$ be a maximal backbone where $(v_1.L = \emptyset \land v_{|B|}.R = \emptyset)$ and $(\forall i \in [|B| - 1], v_{i}.S_r(1) = v_{i+1})$ and $(\forall i \in [|B|], v_i.\mathit{vID} = i \land v_i.\mathit{flyID} = v_1)$. Consider the following graphs.
\begin{itemize}
    \item Let $G^*_B = (B, E^*_B)$ such that $(u, v) \in E^*_B \iff (u = \mathit{succ}_B(v) \lor v = \mathit{succ}_B(u))$, where $\mathit{succ}_B(u) = \min(\{ w.\mathit{id} \mid (w \in B) \land (w.\mathit{id} > u.\mathit{id}) \})$.
    \item Let $G'_B = (B, E'_B)$ such that $(u, v) \in E'_B \iff (u.\mathit{vID} = v.\textit{c-par} \land v.\textit{c-dist} = u.\textit{c-dist} + 1) \lor (v.\mathit{vID} = u.\textit{c-par} \land v.\textit{c-dist} = u.\textit{c-dist} - 1)$.
\end{itemize}
If for all $i \in [|B|] \setminus \{ 1 \}, \exists u_i \in B, v_i.\textit{c-par} = u_i.\mathit{vID}$ where $(u_i.\textit{c-dist} = v_i.\textit{c-dist} - 1)$, then $G'_B$ is $G^*_B$.
\end{claim}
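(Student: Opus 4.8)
The statement is the backbone-restricted version of the claim proved for the whole vertex set in Section~\ref{sec:analysis-sketch}, so the plan is to run the same argument with everything relativized to $B$ and all structural facts taken from the $\mathbf{BasicChecks}$ invariants of Algorithm~\ref{alg:conf-init} (collected in Remark~\ref{flyover-inv}). The three ingredients are: the $\textit{c-dist}$ invariants single out $v_1$ as the unique root; the hypothesis turns the $\textit{c-par}$ data into a well-founded parent map whose (undirected) forest is exactly $G'_B$ and is a spanning tree; and the degree/local-sortedness invariant then forces that tree to be the sorted path.

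\textbf{Step 1: the parent map.} First I would record that $v_1.\textit{c-dist}=0$ — since $v_1.L=\emptyset$ and $v_1.R\neq\emptyset$ (because $v_1.S_r(1)=v_2$), Remark~\ref{flyover-inv}(2) applies — while $v_i.\textit{c-dist}>0$ for $i\ge 2$, since the spine of a backbone (Definition~\ref{def:backbone}) gives $v_i.S_l(1)=v_{i-1}$, so $v_i.L\neq\emptyset$, and Remark~\ref{flyover-inv}(3) applies; thus $v_1$ is the unique node of $B$ at $\textit{c-dist}$ $0$. Next, the two disjuncts defining $E'_B$ are the same relation with its endpoints swapped, so $\{u,v\}\in E'_B$ iff one of them, say $u$, has $u.\mathit{vID}=v.\textit{c-par}$ and $v.\textit{c-dist}=u.\textit{c-dist}+1$. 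Since the virtual ids $v_1.\mathit{vID}=1,\dots,v_{|B|}.\mathit{vID}=|B|$ are pairwise distinct, the hypothesis furnishes for each $i\ge 2$ a \emph{unique} node $p(v_i):=u_i\in B$ with $v_i.\textit{c-par}=p(v_i).\mathit{vID}$ and $p(v_i).\textit{c-dist}=v_i.\textit{c-dist}-1$, and a routine check then gives $E'_B=\{\{v_i,p(v_i)\}:2\le i\le|B|\}$, a set of exactly $|B|-1$ (distinct) edges.

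\textbf{Step 2: spanning tree, then sorted path.} Iterating $p$ is well-founded: $p$ has no fixed point and no directed cycle, because $\textit{c-dist}$ strictly decreases by $1$ along every $p$-edge; since $B$ is finite and every node of $B\setminus\{v_1\}$ has an outgoing $p$-edge, the chain $v_i,p(v_i),p^2(v_i),\dots$ cannot repeat a vertex and can only stop at $v_1$, so it reaches $v_1$. Hence $(B,E'_B)$ is connected, and having $|B|-1$ edges it is a spanning tree $H$ rooted at $v_1$. To upgrade $H$ to a path I would identify the $H$-degree of a node $v$, namely $|E''_B(v)|$ with $E''_B(v)=\{w:\{v,w\}\in E'_B\}$ = ($p$-children of $v$) $\cup$ ($p$-parent of $v$, if $v\neq v_1$), with the set $v.\textit{c-ids}$ that the $\langle\text{TestCert}\rangle/\langle\text{IntroCert}\rangle$ exchange of Algorithm~\ref{alg:conf-cert} writes into both endpoints of each $H$-edge, and then invoke Remark~\ref{flyover-inv}(5): $|E''_B(v)|\le 2$, and if it equals $2$ then $v.\mathit{id}$ lies strictly between the ids of the two $H$-neighbours of $v$. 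A tree of maximum degree $\le 2$ is a path $w_1-w_2-\dots-w_{|B|}$, and strict betweenness at every internal $w_j$ forces $(w_j.\mathit{id})_j$ to be strictly monotone by an immediate induction; hence $H$ is the sorted path on $B$, i.e.\ $H=G^*_B$, and therefore $G'_B=H=G^*_B$.

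\textbf{Main obstacle.} The one genuinely delicate point is the degree bound in Step 2: Remark~\ref{flyover-inv}(5) bounds $|u.\textit{c-ids}|$, not the $G'_B$-degree $|E''_B(u)|$, so one must certify that in the configuration under consideration these two sets coincide. This is where the full flyover (hypercube) structure is used implicitly — one needs that the $\langle\text{TestCert}\rangle$ messages built from the virtual ids $v_i.\mathit{vID}=i$ and the given spine are delivered by $\mathbf{NextStop}$ to exactly the intended destinations, so that $\textit{c-ids}$ ends up holding precisely the $G'_B$-edges at each node. The cleanest write-up will either thread the assumption that $B$ is a correctly configured flyover (Definition~\ref{def:corr-config-backbone}) through the routing argument, or first prove separately that $\mathbf{NextStop}$-routing is correct under the given spine and virtual ids. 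Everything else — the distance invariants, the definition and acyclicity of $p$, connectivity, and the monotonicity induction — is routine bookkeeping.
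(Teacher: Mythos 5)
Your proposal is correct and follows essentially the same route as the paper's proof: the \textit{c-dist} invariants single out $v_1$ as the unique root, the strictly decreasing distances along the parent map rule out cycles and disconnected components (the paper phrases this step as a proof by contradiction rather than via well-foundedness and an edge count, but the substance is identical), and the degree-two/locally-sorted invariant then forces the spanning tree to be the sorted path. The obstacle you flag --- that Remark~\ref{flyover-inv} bounds $|u.\textit{c-ids}|$ rather than the $G'_B$-degree, so one must argue that the routed $\langle\text{TestCert}\rangle/\langle\text{IntroCert}\rangle$ exchange makes the two coincide --- is a genuine subtlety that the paper silently elides by restating the degree invariant directly in terms of $E'(u)=\{v \mid (u,v)\in E'_B\}$; your explicit handling of it is, if anything, more careful than the paper's.
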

\begin{proof}
Due to Remark \ref{flyover-inv}, the following invariants are (locally) maintained. For each node $u \in B$,
\begin{enumerate}
    \item (Tree Distance) $(u.L = \emptyset \implies u.\textit{c-dist} = 0)$ and $(u.L \neq \emptyset \implies u.\textit{c-dist} > 0)$.
    \item (Degree Constraint; Locally Sorted) $E'(u) \neq \emptyset \implies (|E'(u)| = 1) \lor (|E'(u)| = 2 \land \min(|E'(u)|) < u.\mathit{id} < \max(|E'(u)|))$, where $E'(u) = \{ v \mid (u, v) \in E'_B \}$.
\end{enumerate}
These invariants ensure there is only one node $u$ with $u.\textit{c-dist} = 0$, and that node is $v_1$. Moreover, the degree of any node in $G'$ is at most 2; if a node's degree is 2, then its own id is neither greater nor less than both the node ids of its neighbors. Given the premise of the lemma, for any node $u \in V \setminus \{ v_1 \}$, there is a node $p(u)$ such that $u.\textit{c-par} = p(u).\mathit{vID}$ and $p(u).\textit{c-dist} = u.\textit{c-dist} - 1$.

First, we show that $G'_B$ forms a spanning tree (with degree at most 2, i.e., a path) using a key idea behind the classic $O(\log n)$-bits proof-labelling scheme of a \emph{spanning tree} (see, e.g., \cite{AfekWDAG90, KormanDC10, Feuilloley-DMTCS21}). In that scheme, each node is given a ``certificate'' consisting of the root node id, a parent node id (among its neighbors) and the distance from the root (in a rooted spanning tree), where the ``root'' is the (only) node with its distance equal to 0 (and has no parent). Then, each node examines the certificates of its neighbors in the graph, and checks the distance (i.e., from its parent) and the root ids (i.e., whether they are same). If any certificate is incorrectly assigned, then there exists (at least) one node that ``rejects'' its certificate. Otherwise, if each certificate is correctly assigned (with respect to some rooted spanning tree), then each node ``accepts'' its certificate.

We use proof by contradiction to show that $G'$ is indeed $G^*$. First, there cannot be a cycle in $G'$, as the distance checks (i.e., from a node $u$ to its parent $p(u)$) cannot be satisfied at all nodes (cf. Section \ref{subsec:local-cert}). Moreover, $G'$ cannot be a spanning forest because all the nodes, via flyover id, can verify that there is only one ``root node,'' i.e., node $v_1$ with distance $v_1.\textit{c-dist} = 0$. Thus, $G'$ forms a spanning tree. Combining the second invariant mentioned earlier, i.e., each node has degree at most 2, the tree must be a path. Finally, that path has to be the sorted-path $G^*$ due to the second invariant because every node checks that its locally sorted with its neighbors in $G'$.
\end{proof}

\begin{claim} \label{claim:dr-prim}
Let $p \in B$, $q \notin B$ and $p \in \mathrm{M}_q(\mathcal{A}_0)$ in any round $r$, where $B$ is a backbone. If node $q$ uses the DR primitive (as in Definition \ref{def:bidirected}) in round $r$, then at least one of the following statements is true:
\begin{itemize}
    \item either edge $(q, p)$ exists in round $r+1$,
    \item or edge $(p', q')$ where $p' \in B$ and $q' \notin B$ exists, by round $r+2$.
\end{itemize}
\end{claim}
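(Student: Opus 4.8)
The plan is to track a single quantity: whether the identifier $p$ still lies in $\mathrm{M}_q(\mathcal{A}_0)$ at the start of round $r+1$. If it does, then the explicit edge $(q,p)$ is present in round $r+1$ and the first alternative holds, so the whole argument reduces to analyzing the one way this can fail, namely that $q$ deletes $p$ from $\mathrm{M}_q(\mathcal{A}_0)$ during round $r$.

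First I would pin down exactly how $p$ can leave $\mathrm{M}_q(\mathcal{A}_0)$. Node $q$'s register is modified only by $q$ itself, and the flyover- and advice-related functions never delete ids from $\mathrm{M}_q(\mathcal{A}_0)$ (they only add ids to it, via $\mathbf{Flush}$); hence any deletion of $p$ is performed by $\mathcal{A}_0$. By assumption $\mathcal{A}_0$ manipulates edges solely through the universal primitives of Section \ref{subsec:univ-prim}, with the DR primitive used in place of Delegate, and among these only the Reversal primitive and the DR replacement of Delegation can remove an edge incident to another node. So, during round $r$, the deletion of $p$ occurs either (i) because $q$ reverses the edge $(q,p)$, or (ii) because $\mathcal{A}_0$ at $q$ performs a DR-delegation of $(q,p)$ to some node $v$ that is a neighbor of $q$ in round $r$.

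Next I would dispatch the two cases. In case (ii) I would invoke the two-round structure of the DR primitive (Definition \ref{def:bidirected} and its accompanying discussion, where ``Delegate-after-Reversal'' is implemented in two rounds): in round $r$, node $q$ only transmits the reversal message $\langle \mathrm{Rev}, v, \mathrm{msg}, p \rangle$ to $p$, whereas the actual hand-off (removing $p$ from $q$, and $p$ sending its own id to $v$) is performed in round $r+1$. Consequently $p$ still lies in $\mathrm{M}_q(\mathcal{A}_0)$ at the start of round $r+1$, so the first alternative holds. (As a by-product, that reversal message also deposits $v$'s id in $p.\mathit{Ch}$, so whenever $v \notin B$ the implicit edge $(p,v)$ already establishes the second alternative; this is exactly the ``both directions'' feature of DR that preserves a bridge leaving $B$.) In case (i), the Reversal primitive makes $q$ send its own id to $p$ during round $r$, so $q \in p.\mathit{Ch}$ at the start of round $r+1$; this is an implicit edge $(p,q)$ with $p \in B$ and $q \notin B$, which is the second alternative, and it already holds in round $r+1$, hence certainly by round $r+2$.

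The main obstacle I anticipate is the bookkeeping of \emph{when} an address leaves a register during a multi-round primitive: the claim would fail if a DR-delegation could strip $p$ from $q$ already in round $r$ while the neighbor of $p$ it introduces happens to lie inside $B$. Making the two-round semantics of DR explicit, so that the removal of the delegated endpoint is postponed to the ``after'' round, is what closes this gap, together with the routine but necessary check that no other function executed at $q$ (in particular none of the flyover/advice functions) can delete an id from $\mathrm{M}_q(\mathcal{A}_0)$.
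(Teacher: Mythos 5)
Your strategy---a local case analysis of what $q$ does to the single id $p$ during round $r$---is genuinely different from the paper's argument, and it develops a gap in your case (ii). The paper does not classify the deletion mechanism at all. It instead observes that, because $\mathcal{A}_0$ manipulates edges only through the connectivity-preserving universal primitives, in round $r+1$ either $(q,p)$ survives or there is a replacement (undirected) path $P=(q,v_0,\dots,v_t)$ with $v_0\in \mathrm{M}_q(\mathcal{A}_0)$ and $v_t=p$; it then colors the nodes of $P$ by membership in $B$, notes that since $q\notin B$ and $p\in B$ some adjacent pair on $P$ must be colored differently, and argues that the DR primitive---which makes every delegated edge materialize first in the reverse and then in the forward direction---yields the crossing edge oriented from $B$ to $V\setminus B$ in round $r+1$ or $r+2$. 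The point of routing the argument through the replacement path is precisely that one never has to commit to \emph{which} edge manipulation at $q$ removed $p$, nor to which hop of the path supplies the correctly oriented crossing edge.

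The concrete gap: in case (ii) you assert that the removal of $p$ from $\mathrm{M}_q(\mathcal{A}_0)$ is postponed to round $r+1$, so that the first alternative holds. Definition \ref{def:bidirected} does not support this; it prescribes only the messages, and the only round-$(r+1)$ action it prescribes is performed by $p$ (sending its own id to $v$), not by $q$. Under the standard semantics of delegation, $q$ relinquishes $p$ in round $r$, when it emits $\langle \mathrm{Rev}, v, \mathrm{msg}, p\rangle$ (connectivity is preserved through $v$, not through $q$ retaining $p$). In that reading the first alternative fails, and your parenthetical by-product rescues the second alternative only when the delegation target satisfies $v\notin B$: if $v\in B$, the DR exchange creates the edges $(p,v)$ and $(v,p)$, both internal to $B$, while $q$ retains only $(q,v)$, which points \emph{into} $B$---the wrong orientation. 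Your local analysis therefore does not produce the required edge from $B$ to $V\setminus B$ in this subcase. Two smaller issues: your dichotomy ``reversal or DR-delegation of $(q,p)$'' silently assumes the single-edge form of DR and does not address the batched form in clause 2 of Definition \ref{def:bidirected}, and it does not account for $p$ being removed as part of a compound manipulation involving other ids, both of which the paper's path-based argument absorbs automatically.
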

\begin{proof}
Let us first consider the ``normal'' execution of algorithm $\mathcal{A}_0$ (i.e, without DR primitive) in round $r$. As it preserves weak connectivity, there can be two (exhaustive) cases in round $r+1$. 
\begin{enumerate}
    \item $(q, p)$ still exists.
    \item Some path $P = (q, v_0, \dots, v_t)$ exists, where $v_0 \in M_{q}(\mathcal{A}_0)$, $v_t = p$ and $\forall i \in [t], v_i \notin M_{q}(\mathcal{A}_0)$. 
\end{enumerate}
In the path $P$, color a node blue if it is present in $B$; otherwise, color it red. Consider the edges (regardless of their directions): $(q, v_0), \dots, (v_{t-1}, v_{t})$. There is at least one pair of adjacent nodes colored differently. If node $q$ runs algorithm $\mathcal{A}_0$ \emph{with} the DR primitive in round $r$, due to reversal of edges, in either round $r+1$ or $r+2$, there exists an edge $(p', q')$ where $p' \in B$ and $q' \notin B$.
\end{proof}

\subsection{Connectivity Preservation}

\begin{lemma} \label{lemma:conn-global}
If the initial communication graph is weakly connected, then the communication graph in any round $r \geq 1$ is also weakly connected.
\end{lemma}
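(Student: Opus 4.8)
The plan is to prove the statement by induction on the round index $r$, the base case $r=1$ being exactly the hypothesis. For the inductive step I would assume the communication graph $G_C$ at the start of round $r$ is weakly connected and derive the same for the graph $G_{C'}$ at the start of round $r+1$ (recall that the messages sent during round $r$ are precisely the implicit edges of $G_{C'}$). The reduction I would use is the standard edge‑covering one: it suffices to show that for every edge $(u,v)\in\mathcal{E}_C$ the nodes $u$ and $v$ lie in the same weakly connected component of $G_{C'}$. Given this, any two nodes are joined by a path in $G_C$ (inductive hypothesis) each of whose edges stays inside one component of $G_{C'}$, and since no node is ever deleted, $G_{C'}$ is weakly connected. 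This localizes the whole argument to tracking a single edge across a single round.

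The core is then a per‑edge case analysis showing that no node ever "loses" an id without keeping it reachable. Fix $(u,v)\in\mathcal{E}_C$, so $v$'s id occurs either in an address variable of $u$ or in a message in $u.\mathit{Ch}$. By inspecting every function of Algorithms~\ref{alg:conf-init}, \ref{alg:conf-hypercube}, \ref{alg:conf-flymetadata}, \ref{alg:conf-cert}, \ref{alg:init2}, \ref{alg:distrib-tree-to-path} and the helper functions, I would verify that whenever a node overwrites an address variable holding $v$ or consumes a channel message containing $v$, one of the following holds: (i) $v=u.\mathit{id}$, an irrelevant self‑loop; (ii) $v$ is written into another address variable of $u$ — in particular every $\mathbf{Flush}(\cdot)$ moves $v$ into an address variable of $\mathrm{M}_u(\mathcal{A}_0)$, and the $\mathbf{RejectFlyover}$/$\mathbf{SendRejFly}$ idiom flushes all flyover ids \emph{before} resetting the flyover variables, so $(u,v)$ survives as an explicit edge; (iii) $v$ is forwarded in a message to a node $w$ that $u$ still holds in memory (the Delegate pattern, used e.g.\ when routing $\langle\text{TestCert},\dots\rangle$ over shortcuts and, crucially, throughout the base algorithm $\mathcal{A}_0$), so a path $u$–$w$–$v$ remains in $G_{C'}$; or (iv) $v$ is sent to $u$ itself, i.e.\ retained. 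Symmetrically, every new explicit edge a node creates — a shortcut $u.S_l(\cdot)$ or $u.S_r(\cdot)$, the value $u.\mathit{flyID}$, an entry of $u.\textit{c-ids}$, or a first shortcut set in $\mathbf{JoinPath}$ — is set to an id the node already held (an incoming channel id or an already‑stored id), so these are mere Introduce operations that cannot disconnect anything. For supervisor‑to‑node advice messages I would invoke the design constraint that the ids appearing in an advice message are confined to the recipient's reported neighbourhood: any id outside it is treated as a Sybil id and, by the Sybil‑resistance property, is neither stored nor circulated, so discarding an unprocessed advice message never deletes an edge of $G_C$ between two nodes of $V$ that matters for connectivity. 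Connectivity preservation for the base algorithm $\mathcal{A}_0$ in isolation is inherited from the assumption that $\mathcal{A}_0$ uses only the universal primitives of Section~\ref{subsec:univ-prim} together with the $\mathrm{DR}$ primitive of Definition~\ref{def:bidirected}, whose two sub‑steps are a reversal followed by a delegation. Combining all of this, for every edge $(u,v)$ of $G_C$ the final configuration contains a $u$–$v$ walk, so the reduction of the first paragraph closes the induction.

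I expect the main obstacle to be the bookkeeping rather than any conceptual difficulty. Functions execute in a fixed order within a round and all mutate the same state, so a given id can be read, moved, re‑sent, and re‑flushed several times in one round; the cleanest way around this is to argue at the level of the \emph{final} configuration — exhibiting a $u$–$v$ walk in $G_{C'}$ — instead of following the id step by step. The remaining work is the genuinely exhaustive check that no conditional branch anywhere in the flyover or advice algorithms drops an id without an accompanying flush, delegation, or self‑send; this is tedious but routine, since every $\text{SEND}$ targets an id the node currently holds and every destructive read is paired with $\mathbf{Flush}$, a re‑send, or the flush‑first‑then‑reset pattern of $\mathbf{RejectFlyover}$.
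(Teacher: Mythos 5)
Your proposal is correct and follows essentially the same route as the paper's proof: an (implicit) induction over rounds reduced to tracking each stored or in-transit id, with an exhaustive per-function case analysis showing that every id is either retained, flushed into $\mathrm{M}_u(\mathcal{A}_0)$, or delegated to a node the sender still holds, while the base algorithm's connectivity preservation is inherited from the universal primitives and the DR primitive. The only difference is presentational — you make the edge-covering reduction and the induction explicit, which the paper leaves implicit — but the substance of the argument is the same.
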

\begin{proof}
Consider a node id $v$ in some node $u$ (whether in its channel or address variables, i.e., $v$ is present in a message sent to $u$, or $v$ is assigned to some variable in $u$). If $v$ is present in $u$, then $u$ and $v$ are (weakly) connected. On the other hand, if $v$ is ``delegated,'' i.e., $u$ sends the id of $v$ to another node $w$, then we need to show that $u$ and $v$ are (weakly) connected via some path $(u, v_1, \dots, v_t)$ where $v_t = v$ for some $t \geq 1$. Thus, we focus on all the possible cases in our algorithms where a node id could be delegated, and a node id is present in a received message, to show that the network remains weakly connected in the next round.

\begin{itemize}
    \item First, by design, every node runs an algorithm $\mathcal{A}_0$ (in parallel, for converging to target topology) that preserves connectivity (i.e., following the universal overlay primitives for edge manipulations). Moreover, if nodes use the DR primitive (instead of ``Delegate'' primitive) while running algorithm $\mathcal{A}_0$, connectivity is still preserved (as the direction of delegated edge is only opposite to the intended direction, but is subsequently reversed; see Definition \ref{def:bidirected}) after edge manipulations by algorithm $\mathcal{A}_0$ in any round.
    \item In the $\mathbf{TestFlyoverConstruction}, \mathbf{TestFlyoverMetadata}$ and $\mathbf{TestConnCertificate}$ functions, if node $u$ sends $v$ to some other node, then $v$ also remains in node $u$.
    \item We analyze the response functions that process the received messages.
    
    In $\mathbf{R\_TestFlyoverConstruction}$ function, node $u$ processes two types of messages. For any message $\langle \mathit{msg}, \mathit{sen} \rangle$, $\mathit{sen}$ is stored in, either $u.S$ or $\mathrm{M}_u(\mathcal{A}_0)$. For any message $\langle \mathit{msg}, w, i, \mathit{sen} \rangle$, $w$ and $\mathit{sen}$ are stored in either $u.S$ or $\mathrm{M}_u(\mathcal{A}_0)$.
    
    In $\mathbf{R\_TestFlyoverMetadata}$ function, for any message $\langle \text{TestFlyID}, \mathit{flyID} \rangle$, there are two cases: $\mathit{flyID}$ is stored in either $u.\mathit{flyID}$ or $\mathrm{M}_u(\mathcal{A}_0)$.
    
    In $\mathbf{R\_TestConnCertificate}$ function, for any message $\langle \text{TestCert}, w, \mathit{vID}, \mathit{dist} \rangle$, there are three cases: (1) $w$ is sent to node $u' \in u.S$, in which case, there is a path $(u, u', w)$, or (2) $u$ stores $w$ in $u.\textit{c-ids}$, or (3) $u$ stores $w$ in $\mathrm{M}_u(\mathcal{A}_0)$; for any message $\langle \text{IntroCert}, w \rangle$, $u$ stores $w$ in $u.\textit{c-ids}$.
    \item We turn our attention to distributed tree-to-path algorithms (cf. Section \ref{subsec:distrib-TtP}). In $\mathbf{GetAdvice}$ function, for any message $\langle \text{Intro}, v\rangle$, node $u$ stores $v$ in $\mathrm{M}_u(\mathcal{A}_0)$. In $\mathbf{CertifyTree}$ function, for any message $\langle \text{TestAdvice}, \mathit{dist}, v\rangle$, node $u$ stores $v$ in $\mathrm{M}_u(\mathcal{A}_0)$. Similarly, in $\mathbf{LocalTransform}$ function, for any message $\langle \text{Verified}, \mathit{msg}, v \rangle$, node $u$ stores $v$ in $\mathrm{M}_u(\mathcal{A}_0)$. In $\mathbf{JoinPath}$ function, for any message $\langle \mathit{msg}, v \rangle$, there are two cases: node $u$ stores $v$ either in $u.S$ or $\mathrm{M}_u(\mathcal{A}_0)$.
\end{itemize}
\end{proof}

\subsection{Flyover Construction}

\begin{lemma} \label{lemma:fly-const}
Consider any backbone $B = \{ v_1, \dots, v_{|B|} \}$, where $|B| \geq 3$ and $(v_1.L = \emptyset \lor v_1.S_l(1) \notin B)$ and $\forall i \in [|B|-1], v_i.S_r(1) = v_{i+1}$, in any round $r$. For any $i \in [k]$ and $k = \lfloor \log_2(|B| - 1) \rfloor$, if all nodes in $B$ have their $\mathtt{exit}$ variable set to $0$ in all rounds $r, r+1, \dots, r+i'$ where $i' \geq i$, then in any round $r+i'$, the following statements hold true.
\begin{enumerate}
    \item \emph{(Virtual IDs)} $\forall j \in [|B| - 1], v_{j+1}.\mathit{vID} = v_{j}.\mathit{vID} + 1$.
    \item \emph{(Endpoints)} $|v_1.R| \geq (i+1)$ and $|v_{|B|}.L| \geq (i+1)$.
    \item \emph{(Hypercube Edges)} $\forall j \in [i], v_1.S_r(j) = v_{1+2^j}$ and $\forall j \in [i], v_{|B|}.S_l(j) = v_{|B|-2^j}$.
\end{enumerate}
\end{lemma}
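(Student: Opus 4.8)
The plan is to prove the three items essentially independently. Item~1 (consecutive virtual IDs) is a one-round argument using the $\langle\mathrm{TestvID}\rangle$ messages, whereas items~2 and~3 (growth of the endpoint shortcut sets and the hypercube links at the endpoints) follow from an induction on the round number that tracks the bottom-up pointer-doubling performed by the $\langle\mathrm{FlyConst\text{-}R}\rangle$ and $\langle\mathrm{FlyConst\text{-}L}\rangle$ messages of Algorithm~\ref{alg:conf-hypercube}. I will use the hypothesis ``all nodes of $B$ keep $\mathtt{exit}=0$ in rounds $r,\dots,r+i'$'' in two ways. First, it makes the node behaviour rigid: while a node $v_j\in B$ is in the dual-state with $\mathtt{exit}=0$, neither $\mathbf{RejectFlyover}$, nor the resets in $\mathbf{BasicChecks2}$, nor advice processing (which needs $v_j.S=\emptyset$) can fire, so $v_j$ never alters $v_j.\mathit{vID}$, $v_j.S_l(1)$ or $v_j.S_r(1)$ and can only \emph{append} higher-level shortcuts. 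Second, whenever a node would detect an inconsistency in a received $\langle\mathrm{TestvID}\rangle$ or $\langle\mathrm{FlyConst\text{-}\cdot}\rangle$ message it would set $\mathtt{exit}:=1$, which is impossible by hypothesis; this lets me rule out ``bad'' intermediate configurations. The spine condition also gives $v_j.R\neq\emptyset$ for $j<|B|$ and $v_j.L\neq\emptyset$ for $j>1$, so every internal node does send the relevant messages on both sides.

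For item~1: in round $r$ each $v_j$ with $j\le|B|-1$ sends $\langle\mathrm{TestvID},v_j.\mathit{vID}+1\rangle$ to $v_j.S_r(1)=v_{j+1}$ (the $i=1$ instance in $\mathbf{TestFlyoverMetadata}$). In round $r+1$, $v_{j+1}$ has $v_{j+1}.S\neq\emptyset$, so the response rule in $\mathbf{R\_TestFlyoverMetadata}$ forces $v_{j+1}.\mathit{vID}=v_j.\mathit{vID}+1$ or else $v_{j+1}.\mathtt{exit}:=1$; since $\mathtt{exit}$ stays $0$ and the $\mathit{vID}$'s are stable, consecutivity holds from round $r+1$ on, in particular in round $r+i'$ (as $i'\ge i\ge1$).

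For items~2 and~3, I will prove by induction on $t$ the stronger claim: if $\mathtt{exit}$ stays $0$, then after round $r+t$, for every level $j\le t+2$ with $2^{j-1}\le|B|-1$ and every index $p$ with $1\le p\le|B|-2^{j-1}$, one has $v_p.S_r(j)=v_{p+2^{j-1}}$ and $v_{p+2^{j-1}}.S_l(j)=v_p$. The base case $t=0$ is exactly the spine hypothesis (level $j=1$), and the first doubling wave (round $r$, creating level~$2$) is checked directly. For the step, each internal $v_p$ that by the induction hypothesis already holds both its level-$(t+1)$ shortcuts, i.e.\ $2^t<p\le|B|-2^t$, sends in round $r+t+1$ the message $\langle\mathrm{FlyConst\text{-}L},v_p.S_r(t+1),t+1,v_p\rangle=\langle\mathrm{FlyConst\text{-}L},v_{p+2^t},t+1,v_p\rangle$ to $v_p.S_l(t+1)=v_{p-2^t}$ (and the mirror $\langle\mathrm{FlyConst\text{-}R}\rangle$ to $v_{p+2^t}$); the receiver $v_{p-2^t}$ then uses $v_{p-2^t}.S_r(t+1)=v_p$ to match the sender, and uses $\mathtt{exit}=0$ to conclude that any pre-existing level-$(t+2)$ right shortcut already equals $v_{p+2^t}$, so it ends the next round with $v_{p-2^t}.S_r(t+2)=v_{p+2^t}$; writing $q=p-2^t$ this is $v_q.S_r(t+2)=v_{q+2^{t+1}}$ for $1\le q\le|B|-2^{t+1}$, and the mirror yields the matching $S_l(t+2)$ links. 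Instantiating $p=1$ and $p=|B|$ and running $t$ up to $i'$ ($\ge i$) gives $|v_1.R|\ge i+1$, $|v_{|B|}.L|\ge i+1$, and the claimed hypercube links at the two endpoints (the level-$j$ endpoint shortcut reaching the node $2^{j-1}$ positions away); the bound $i\le k=\lfloor\log_2(|B|-1)\rfloor$ is precisely what keeps the doubling partners $v_{p\pm2^t}$ inside $B$ and keeps the statement non-vacuous.

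I expect the main obstacle to be the exhaustive branch analysis of the $\langle\mathrm{FlyConst\text{-}\cdot}\rangle$ handler inside Algorithm~\ref{alg:conf-hypercube}: one must check that, under $\mathtt{exit}=0$, none of the competing branches (the $\mathbf{Flush}$ branch for $1<|u.L|<i$, the $u.S=\emptyset$ branch, or a stale higher-level shortcut forcing $\mathtt{exit}$) prevents the clean assignment $u.S_l(|u.L|+1):=w$, and that any garbage already present in the shortcut arrays of $B$-nodes at round $r$ is either overwritten by the incoming wave or would itself have triggered $\mathtt{exit}:=1$ (via a $\langle\mathrm{FlyConst\text{-}\cdot}\rangle$ consistency check, or via a $\langle\mathrm{TestvID}\rangle$ mismatch once it is aimed at a $B$-node whose virtual id is already pinned down by item~1). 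A secondary nuisance is keeping the valid index ranges straight across the induction so the two endpoint statements come out exactly as stated.
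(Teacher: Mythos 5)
Your proposal is correct, and item~1 (the one-shot $\langle\mathrm{TestvID}\rangle$ check at level~1 plus the write-once nature of $\mathit{vID}$ under $\mathtt{exit}=0$) is essentially identical to the paper's argument. For items~2 and~3, however, you take a genuinely different route. The paper inducts on the \emph{size} of the backbone: it verifies the base case $|B|=3$ by hand, and in the inductive step decomposes $B_s$ into two overlapping sub-backbones $B^1_s$ and $B^2_s$, applying the induction hypothesis to each; the only delicate case is $s=2^p+1$, where the midpoint $u_{s'}$ acquires both level-$k$ shortcuts from the two halves and then performs one further doubling step to hand $u_1$ and $u_s$ their level-$(k+1)$ shortcuts. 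Crucially, the paper's invariant only ever speaks about the two endpoints of each (sub-)backbone. You instead induct on the \emph{round number} and carry a stronger invariant -- the full level-$j$ hypercube links at every admissible position $p$ -- which makes the step a single uniform wave of $\langle\mathrm{FlyConst}\text{-}\cdot\rangle$ messages from every internal node holding both level-$(t+1)$ shortcuts. Your version avoids the overlapping-halves decomposition and the power-of-two case split, and it directly delivers the flyover property of Corollary~\ref{corr:flyover-const} rather than recovering it afterwards by applying the lemma to all sub-backbones; the price is that you must handle the branch analysis of the $\langle\mathrm{FlyConst}\text{-}\cdot\rangle$ handler (stale higher-level shortcuts forcing $\mathtt{exit}$, the exact-level acceptance condition $|u.L|=i$, the $\mathbf{Flush}$ branch) uniformly at all positions rather than only at endpoints -- which you correctly identify and address via the ``consistency-or-$\mathtt{exit}$'' dichotomy. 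Both proofs share the same mild round-indexing looseness about whether a level-$(j{+}1)$ shortcut acquired in round $\rho$ can already be used for sending in round $\rho$; the paper glosses over this in exactly the same way, so it is not a defect specific to your write-up.
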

\begin{proof} If the first statement is true in round $r$, then it remains true until some node exits dual-state; otherwise, some node sets its \texttt{exit} variable to 1 due to $\mathbf{TestFlyoverMetadata}$ function. Recall that \texttt{vID} is assigned when a node receives an advice (see Algorithm \ref{alg:local-cert}). Moreover, for any node $u$, once $u.\mathit{vID}$ is assigned, then it can only be updated in $\mathbf{FlushFlyover}$ function (which is executed only after $u.\mathit{exit}$ = 1). As no node in $B$ sets its $\mathtt{exit}$ variable to 1 until round $r+i'$ for any $i' \geq i$ and $i \in [k]$, the statement holds true in round $r+i'$ for any $i' \geq 1$.

For latter statements, we use induction over the size of a backbone. Let any backbone of size $s$ be $B_s = \{ u_1, \dots, u_s\}$, where $(u_1.L = \emptyset \lor u_1.S_l(1) \notin B_s)$ and $\forall i \in [s-1], u_i.S_r(1) = u_{i+1}$. For the base case, consider $s = 3$ where $B_s = \{ u_1, u_2, u_3 \}$. In round $r$, node $u_2$ sends $\langle \text{FlyConst-R},u_1,1,u_2 \rangle$ and $\langle \text{FlyConst-L},u_3,1,u_2 \rangle$ messages to node $u_3$ and $u_1$ respectively. Since the nodes do not set their $\mathtt{exit}$ variables to $1$ in both rounds $r$ and $r+1$, node $u_1$ assigns $u_3$ to $u_1.S_r(2)$ and node $u_3$ assigns $u_1$ to $u_3.S_l(2)$ in round $r+1$ (if they have not already done so), which also implies that $|u_1.R| \geq 2$ and $|u_3.L| \geq 2$ in round $r+1$. Recall that for any node $u$, for any $j \geq 1$, once $u.S_l(j)$ or $u.S_r(j)$ is assigned, then it can only removed (or updated) in $\mathbf{FlushFlyover}$ function (which is executed only after $u.\mathit{exit} = 1$). As no node in $B$ sets its $\mathtt{exit}$ variable to 1 until round $r+i'$ for any $i' \geq i$ and $i \in [k]$ where $k = 1$ (for $s = 3$), these statements hold true in round $r+i'$ for any $i' \geq 1$.

For inductive case, we assume that the statements are true for any backbone of size $(s - 1) \geq 3$, and then show that they hold true for any backbone $B_{s} = \{ u_1, \dots, u_s \}$. To that end, we split the proof into two cases, for any $p > 1$ and $p \in \mathbb{N}$, either $s \neq 2^p + 1$ or $s = 2^p + 1$.

In the first case, where $s \neq 2^p + 1, k = \lfloor \log_2(s - 1) \rfloor = \lfloor \log_2(s - 2) \rfloor$. Consider the two backbones $B^1_s = \{ u_1, u_2, \dots, u_{s-1}\}$ and $B^2_s = \{ u_{2}, u_{3}, \dots, u_{s}\}$. For any $i \in [k]$, if all nodes in $B_s$ have their $\mathtt{exit}$ variable set to $0$ in all rounds $r, \dots, r+i'$ where $i' \geq i$, then in any round $r+i'$, the statements hold true, independently, for $B^1_s$ and $B^2_s$. Thus, they also hold true for backbone $B_{s}$.

In the second case, where $s = 2^p + 1, k = \lfloor \log_2(s - 1) \rfloor = \lfloor \log_2(s - 2) \rfloor + 1$. Consider the two backbones $B^1_s = \{ u_1, u_2, \dots, u_{s'}\}$ and $B^2_s = \{ u_{s'}, u_{s'+1}, \dots, u_{s}\}$ where $s = 2s'-1$. For any $i \in [k-1]$, if all nodes in $B_s$ have their $\mathtt{exit}$ variable set to $0$ in rounds $r, \dots, r+i'$ where $i' \geq i$, then in any round $r+i'$, the statements hold true, independently, for $B^1_s$ and $B^2_s$, which implies, in round $r+k-1$, $u_1.S_r(k) = u_{s'}$ and  $u_{s'}.S_l(k) = u_1$ and $u_{s}.S_l(k) = u_{s'}$ and $u_{s'}.S_r(k) = u_s$. In round $r+k-1$, node $u_{s'}$ sends $\langle \text{FlyConst-R},u_1,k,u_{s'} \rangle$ and $\langle \text{FlyConst-L},u_{s},k,u_{s'} \rangle$ messages to node $u_s$ and $u_1$ respectively. Since the nodes do not set their $\mathtt{exit}$ variables to $1$ in rounds $r+k-1$ and $r+k$, node $u_1$ assigns $u_s$ to $u_1.S_r(k+1)$ and node $u_s$ assigns $u_1$ to $u_s.S_l(k+1)$ in round $r+k$ (if they have not already done so), which implies that $|u_1.R| \geq k+1$ and $|u_3.L| \geq k+1$ in round $r+k$. Recall that for any node $u$, for any $j \geq 1$, once $u.S_l(j)$ or $u.S_r(j)$ is assigned, then it can only removed (or updated) in $\mathbf{FlushFlyover}$ function (which is executed only after $u.\mathit{exit} = 1$). As no node in $B_s$ sets its $\mathtt{exit}$ variable to 1 until round $r+i'$ for any $i' \geq i$ and $i \in [k]$ where $k = \lfloor \log_2(s - 1) \rfloor$, these statements hold true in round $r+i'$ for any $i' \geq 1$.
\end{proof}

\begin{corollary} \label{corr:flyover-const}
Consider any backbone $B$ in any round $r$. If all nodes in $B$ have their $\mathtt{exit}$ variable set to $0$ in all rounds $r, r+1, \dots, r+i$ where $i \geq \lfloor \log_2(|B| - 1) \rfloor$, then in any round $r+i$, $B$ is a flyover.
\end{corollary}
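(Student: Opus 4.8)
The plan is to derive the Corollary from Lemma~\ref{lemma:fly-const} by invoking the latter not on $B$ alone but on a whole family of nested sub-backbones of $B$. Write $m:=|B|$ and label $B=\{v_1,\dots,v_m\}$ so that $v_1$ is the endpoint meeting the left Wings condition of Definition~\ref{def:backbone} (this endpoint is unique: for $j\ge 2$ the Spine identity $v_j.S_l(1)=v_{j-1}\in B$ forces $v_j$ to violate that condition), with $v_j.S_r(1)=v_{j+1}$ and $v_{j+1}.S_l(1)=v_j$ for all $j\in[m-1]$ — exactly the labelling used in Definition~\ref{def:flyover}. One may assume $m\ge 3$: if $m=2$ the (Hypercube) clause of Definition~\ref{def:flyover} reduces to the Spine, while its (Virtual IDs) clause, $v_2.\mathit{vID}=v_1.\mathit{vID}+1$, follows — once $i\ge 1$ — from the $\langle\text{TestvID}\rangle$ check inside $\mathbf{TestFlyoverMetadata}$, by the same reasoning used for statement~(1) of Lemma~\ref{lemma:fly-const}.

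First I would clear the (Virtual IDs) clause. Since $m\ge 3$ we have $k:=\lfloor\log_2(m-1)\rfloor\ge 1$, hence $k\in[k]$, and by hypothesis every node of $B$ keeps $\mathtt{exit}=0$ throughout rounds $r,\dots,r+i$ with $i\ge k$; so statement~(1) of Lemma~\ref{lemma:fly-const}, applied to $B$ with parameter $k$ and $i':=i$, gives $v_{\ell+1}.\mathit{vID}=v_\ell.\mathit{vID}+1$ for all $\ell\in[m-1]$ in round $r+i$, which is the first clause of Definition~\ref{def:flyover}.

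Then I would establish the (Hypercube) clause pair by pair. Fix $\ell\in[m-1]$ and a level $j$ with $1\le j\le\lfloor\log_2(m-\ell)\rfloor+1$; note $\ell+2^{j-1}\le m$. For $j=1$ the identities $v_\ell.S_r(1)=v_{\ell+1}$ and $v_{\ell+1}.S_l(1)=v_\ell$ are part of the Spine. For $j\ge 2$ I would apply Lemma~\ref{lemma:fly-const} to the sub-backbone $C:=\{v_\ell,v_{\ell+1},\dots,v_{\ell+2^{j-1}}\}$: its left endpoint $v_\ell$ satisfies $v_\ell.S_l(1)=v_{\ell-1}\notin C$ when $\ell\ge 2$ (the $v$'s are distinct), and $v_1.L=\emptyset$ or $v_1.S_l(1)\notin B$ (hence $v_1.S_l(1)\notin C$) when $\ell=1$; the $1$-level right-shortcut chain on $C$ is inherited from the Spine of $B$; and $|C|=2^{j-1}+1\ge 3$. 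With $k_C:=\lfloor\log_2(|C|-1)\rfloor=j-1$ we get $k_C\le\lfloor\log_2(m-\ell)\rfloor\le\lfloor\log_2(m-1)\rfloor\le i$, and all nodes of $C\subseteq B$ keep $\mathtt{exit}=0$ through round $r+i$, so Lemma~\ref{lemma:fly-const} (statements~(2)--(3)), applied to $C$ with parameter $k_C$ and $i':=i$, yields in round $r+i$ that $v_\ell$ owns at least $k_C+1=j$ right-shortcuts with $v_\ell.S_r(\jmath)=v_{\ell+2^{\jmath-1}}$ for $\jmath\in[j]$, and symmetrically $v_{\ell+2^{j-1}}.S_l(\jmath)=v_{\ell+2^{j-1}-2^{\jmath-1}}$ for $\jmath\in[j]$ (after matching the way levels are indexed in Lemma~\ref{lemma:fly-const}, where shortcuts start from the Spine at level~$1$, to the convention of Definition~\ref{def:flyover}). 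Taking $\jmath=j$ gives $v_\ell.S_r(j)=v_{\ell+2^{j-1}}$ and $v_{\ell+2^{j-1}}.S_l(j)=v_\ell$, which is the $(\ell,j)$ instance of the (Hypercube) clause. Having verified both clauses of Definition~\ref{def:flyover} in round $r+i$, $B$ is a flyover.

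The hard part will be the bookkeeping rather than any new idea: one must verify that the left Wings condition survives restriction to each sub-backbone $C$ (which relies on the Spine identities $v_{j+1}.S_l(1)=v_j$ and on distinctness of the $v$'s, both guaranteed by Definition~\ref{def:backbone}), and that the single round budget $i\ge\lfloor\log_2(m-1)\rfloor$ is simultaneously enough for all the $C$'s — which holds since $|C|\le m$. A secondary nuisance is the off-by-one between the level-indexing in the statement of Lemma~\ref{lemma:fly-const} and that of Definition~\ref{def:flyover}; it has to be tracked but is purely notational.
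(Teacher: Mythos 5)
Your proposal is correct and takes essentially the route the paper intends: the paper states this as an immediate corollary of Lemma~\ref{lemma:fly-const}, and your explicit application of that lemma to the nested sub-backbones $\{v_\ell,\dots,v_{\ell+2^{j-1}}\}$ is exactly the step left implicit there (and mirrors the decomposition into sub-backbones used inside the lemma's own inductive proof). Your handling of the level-indexing mismatch and of the degenerate $|B|=2$ case is consistent with the paper's conventions.
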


\subsection{Fast Information Dissemination}

\begin{lemma} \label{lemma:inf-prop}
Consider any backbone $B$ in any round $r$. If there exists a node $u \in B$ such that $u.\mathit{exit} = 1$ at round $r$, then by round $r + 2\lfloor \log |B| \rfloor$, all nodes in $B$ would've exited from the dual-state (at least once).
\end{lemma}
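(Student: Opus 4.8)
The plan is to track how the $\langle\text{RejFlyover}\rangle$ message that $u$ emits in round $r$ — when, having $u.\mathit{exit}=1$, it executes $\mathbf{RejectFlyover}$ — floods the whole backbone, exploiting the hypercube shortcuts that, by Lemma~\ref{lemma:fly-const}, are meanwhile being built on the still‑unreached part of $B$. Write $B=\{v_1,\dots,v_{m}\}$ with $m=|B|$ and $u=v_i$. The elementary step is that a node with $\mathtt{exit}=1$ runs $\mathbf{RejectFlyover}$, which first sends $\langle\text{RejFlyover}\rangle$ to every id in $u.S\cup\{u.\mathit{flyID}\}\cup u.\textit{c-ids}$ — in particular to both its level‑$1$ shortcuts — and only afterwards resets $u.L,u.R$ to $\emptyset$; thus the node exits the dual‑state in that round (Definition~\ref{def:dual-state}), and, by the spine condition of a backbone, its two chain‑neighbours receive $\langle\text{RejFlyover}\rangle$, set $\mathtt{exit}:=1$ in $\mathbf{BasicChecks}$, and themselves exit one round later. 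So a rejection wave spreads outward from $v_i$ along the $1$‑chain, in both directions at once, at speed at least one. We may assume $v_i$ is the only node of $B$ with $\mathtt{exit}=1$ in round $r$: any other early rejection merely seeds an extra copy of the same flood, which only accelerates the process.

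Next I would show the wave is super‑linear. Fix, say, the rightward direction and for $t\ge0$ set $B^{(t)}=\{v_{i+t},\dots,v_{m}\}$. By a careful induction on the frontier position one argues that $B^{(t)}$ is a backbone and that no node of it has $\mathtt{exit}=1$ during rounds $r,\dots,r+t-1$ (the speed‑one strand of the wave has reached only $v_{i+t-1}$ by round $r+t-1$, and any faster strand that has appeared so far lies, by the induction hypothesis together with the virtual‑id / flyover‑id / shortcut‑level consistency checks, no further than that). Hence Lemma~\ref{lemma:fly-const} (equivalently Corollary~\ref{corr:flyover-const}) applies to $B^{(t)}$ and gives, at round $r+t-1$, shortcuts $v_{i+t}.S_r(j)=v_{i+t+2^{j-1}}$ for all $j$ up to $\min\!\bigl(t,\lfloor\log_2|B^{(t)}|\rfloor\bigr)+O(1)$. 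Therefore, when the wave reaches $v_{i+t}$ around round $r+t$ and $v_{i+t}$ executes $\mathbf{RejectFlyover}$, it forwards $\langle\text{RejFlyover}\rangle$ simultaneously to $v_{i+t+2^{k}}$ for every $k\lesssim t$. Writing $\phi(s)$ for the farthest node to the right of $v_i$ reached within $s$ rounds, this gives $\phi(s+1)\ge\phi(s)+2^{\Omega(s)}$ while $s\lesssim\log_2 m$, so the frontier leaves $B$ after only $\log_2 m+O(1)$ rounds; symmetrically on the left.

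Finally I would fill in the nodes that the power‑of‑two jumps skip over. Since each rejecting node forwards $\langle\text{RejFlyover}\rangle$ to \emph{all} of its shortcuts $v\pm1,v\pm2,v\pm4,\dots$ at once, every interval skipped by a jump has both endpoints notified one round later, and the nodes strictly inside it — being far from $v_i$ — have accumulated shortcuts longer than the interval itself, so they notify the entire interval within $O(\log(\text{width}))$ further rounds. The widest skipped interval has width $O(m)$ and its filling starts around round $r+\lfloor\log_2 m\rfloor$; hence every node of $B$ receives $\langle\text{RejFlyover}\rangle$ — and so exits the dual‑state — by round $r+2\lfloor\log_2 m\rfloor$. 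This is precisely where the factor $2$ comes from: at most $\lfloor\log_2 m\rfloor$ rounds for the frontier to sweep across $B$, plus at most $\lfloor\log_2 m\rfloor$ more to close the last gap. The cases where $m$ is below a small constant, in which the plain speed‑one wave already finishes within $2\lfloor\log_2 m\rfloor$ rounds, are checked directly.

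The delicate part is the previous two steps — and in particular the bookkeeping that lands the bound at exactly $2\lfloor\log_2|B|\rfloor$ rather than some larger multiple of $\log|B|$. One must (i) make rigorous the claim that $B^{(t)}$ stays a backbone with no premature $\mathtt{exit}=1$ during the required windows, a statement that is on its face circular with ``the frontier has advanced only so far'' and so needs a joint induction over rounds; and (ii) control every one‑round send/receive delay and every off‑by‑one coming out of the indexing in Lemma~\ref{lemma:fly-const}, so that the frontier sweep and the gap filling together still fit inside $2\lfloor\log_2|B|\rfloor$. Once the single‑source case is in place, folding in the multi‑source situation and the rejections triggered by the metadata/construction checks (which only add flood sources) is routine.
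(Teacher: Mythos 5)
Your core mechanism is the same as the paper's: a node with $\mathtt{exit}=1$ sends $\langle\text{RejFlyover}\rangle$ to \emph{all} ids in its flyover variables, and Lemma~\ref{lemma:fly-const} guarantees that level-$j$ shortcuts exist after $j$ rounds, so the rejection spreads super-linearly. But the way you organize the bound --- a left/right frontier sweep followed by a separate gap-filling phase --- is not what the paper does, and it is exactly where your plan has a real hole. Your step-2 induction hypothesis, that no node of $B^{(t)}=\{v_{i+t},\dots,v_m\}$ has $\mathtt{exit}=1$ during rounds $r,\dots,r+t-1$, is simply false once the power-of-two jumps begin: nodes far beyond the speed-one frontier exit early, which is the whole point of the super-linear spread. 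You acknowledge the circularity but do not resolve it, and your gap-filling step ("the nodes inside the skipped interval notify it in $O(\log(\text{width}))$ rounds") is itself a recursive instance of the very propagation bound you are trying to prove.

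The paper sidesteps both problems by arguing per target node rather than per frontier. It fixes the round-$r$ backbone distance $d_B(\cdot,\cdot)$ and, for each $v\in B$, tracks the potential $D_B(v)=\min_{w\in B,\,w.\mathit{exit}=1} d_B(w,v)$. Two observations finish it: (i) if $D_B(v)<2$ at some round, then $v$'s level-1 neighbour flushes and $v$ exits in the next round; (ii) otherwise, Lemma~\ref{lemma:fly-const} is applied to the sub-backbone between $v$ and its \emph{nearest} exiting node $w$ --- whose interior consists of $\mathtt{exit}=0$ nodes by the very definition of "nearest," so no circular hypothesis is needed --- and when $w$ flushes it hits a shortcut of length $2^{\lfloor\log_2 D_B(v)\rfloor}\ge D_B(v)/2$ toward $v$, so $D_B(v)$ at least halves once the required shortcut level has been built. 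Since level $k$ takes $k$ rounds to appear and at most $\lfloor\log|B|\rfloor$ halvings are needed, the two contributions interleave to give $2\lfloor\log|B|\rfloor$. There is no separate gap to fill: a node "inside a skipped interval" is just a node whose nearest exiting neighbour happens to be an interval endpoint, and the same potential argument covers it. If you want to salvage your write-up, replacing the frontier/gap decomposition with this per-node potential is the missing idea.
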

\begin{proof}
Let $d_B(p, q)$ denote the distance between any two nodes $\{p, q\} \subseteq B$ over the backbone $B$ in round $r$. We recursively define it in the following way. (Note that this distance remains same, throughout the execution, as it is defined for backbone $B$ at round $r$.)
\begin{equation*}
  d_B(p, q) = \begin{cases}
              0 &\text{if $p = q$},\\
              (d_B(p.S_l(1), q)) + 1 &\text{if $(q.L = \emptyset \lor q.S_l(1) \notin B)$},\\
              d_B(p, q.S_l(1)) + 1 &\text{if $(p.L = \emptyset \lor p.S_l(1) \notin B)$},\\
              \lvert d_B(p, r) - (d_B(q, r) \rvert &\text{otherwise, where $r \in B \land (r.L = \emptyset \lor r.S_l(1) \notin B)$}.
              \end{cases}
\end{equation*}

Let $D_{B}(v) = \min_{w \in B \land w.\mathit{exit} = 1}d_B(w, v)$ denote the minimum distance, over all nodes in $B$ that have their $\mathtt{exit}$ variable set to 1, to the node $v \in B$.

If at round $r' \in \{ r, \dots, r+2\lfloor \log |B| \rfloor - 1\}$, if either $(v.L \neq \emptyset \land v.S_l(1) \in B \land v.S_l(1).\mathit{exit} = 1)$ or $(v.R \neq \emptyset \land v.S_r(1) \in B \land v.S_r(1).\mathit{exit} = 1)$, i.e., $D_{B}(v) < 2$, then node $v$ exits the dual-state in round $r'+1$. This is due to Remark \ref{rem:flushflyover}, every node, in $\mathbf{FlushFlyover}$ function, sends $\langle \text{RejFlyover} \rangle$ to all ids in its flyover-related address variables. Moreover, if $D_{B}(v) \geq 2$ in round $r$, then we show that it decreases exponentially over time (due to flyover construction, i.e., pointer doubling \cite{JaJa92}).

Consider a backbone $B_k = \{ v_{1, k}, v_{2, k}, \dots \}$ in round $r+k$, where $(v_{1, k}.L = \emptyset \lor v_{1, k}.S_l(1) \notin B_k)$ and $\forall i \in [|B_k|-1], v_{i, k}.S_r(1) = v_{i+1, k}$, and $B_0 = B$. In round $r+k$, consider a node $w$ that satisfies $d_{B}(w, v) = D_{B}(v) \geq 2$, and the backbone $B_k = \{ v_{1, k}, \dots, v_{|B_k|, k} \}$, part of the original backbone $B$, such that either $(v_{1, k} = w \land v_{|B_k|, k} = v)$ or $(v_{1, k} = v \land v_{|B_k|, k} = w)$ is true. For any $k \geq 1$, by Lemma \ref{lemma:fly-const}, we know that $\forall j \in [k], v_{1, k}.S_r(j) = v_{1+2^j, k}$ and $\forall j \in [k], v_{|B_k|, k}.S_l(j) = v_{|B_k|-2^j, k}$. Thus, by round $r+2\lfloor \log |B| \rfloor - 1$, either node $v$ exited the dual-state, or $D_B(v) < 2$.
\end{proof}

\subsection{Concurrent Advices}

\begin{claim} \label{claim:sync-advice}
Consider any set of nodes $Q \subset V$ that process advice messages in round $r$. For any node $v \in Q$, if the advice is effective, then for any node $u \notin Q$, then $u \notin v.S$ after $v$ enters dual-state.
\end{claim}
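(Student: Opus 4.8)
The plan is to trace backwards from the moment $v$'s shortcut set first becomes nonempty and show that every identifier that could have been written into it belongs to a node that processed an advice message in the same round~$r$. Let $r^\star$ be the first round with $v.S\neq\emptyset$, i.e.\ the round $v$ enters the dual-state. Since $v$ processes advice in round~$r$, the guard of $\mathbf{GetAdvice}$ (Algorithm~\ref{alg:local-cert}) pins down $v.S=\emptyset$, $v.\mathit{exit}=0$ and the value of $v.t$ in round~$r$, and $v.t$ strictly decreases every round thereafter by $\mathbf{BasicChecks2}$ (Algorithm~\ref{alg:init2}); effectiveness forces $v.t>1$ in round~$r^\star$, so $r<r^\star$ and $r^\star-r$ is bounded by a fixed constant determined by the timer value set in $\mathbf{SnapshotReq}$. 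The other global fact I need is that the supervisor sends $\langle\text{RequestSnapshot}\rangle$ only after every node is attentive, so all timers are started in the same round; hence any node whose timer exceeds~$1$ during rounds $r,\dots,r^\star$ has exactly the same timer value as $v$ throughout this window and therefore, if it processes an advice message at all, does so precisely in round~$r$, i.e.\ lies in~$Q$.

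For the structural part I would argue as follows. Since $v.S=\emptyset$ at the start of round~$r^\star$, the only rule that can write a nonempty value into $v.S$ that round is the final assignment of $\mathbf{JoinPath}$ (the $\langle\text{FlyConst}\rangle$ rule in $\mathbf{R\_TestFlyoverConstruction}$ that writes $u.S_l(|u.L|+1)$ already requires $|u.L|\geq 1$). Hence $v.S\subseteq\{\mathit{flyL},\mathit{flyR}\}$, where $\mathit{flyL}$ and $\mathit{flyR}$ are identifiers carried by $\langle\text{Path-}\rangle$ and $\langle\text{Path+}\rangle$ messages received by $v$ in round~$r^\star$ (with $\mathit{ignore\_msg}=0$, so in particular $v.t>1$). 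Such messages are emitted only inside $\mathbf{ExecuteTransform}$, one round earlier, by some node~$w$, and the identifier they carry lies in $\{w.\mathit{id}\}\cup\{\min\mathit{children}_w,\max\mathit{children}_w\}\cup\{\mathit{parent}_w\}$. Inspecting $\mathbf{LocalTransform}$, $\mathbf{SetupLocalTransform}$ and $\mathbf{CertifyTree}$: $w$ runs $\mathbf{ExecuteTransform}$ only after receiving a $\langle\text{Verified},\text{parent},\cdot\rangle$ message; every $\langle\text{Verified},\cdot,\cdot\rangle$ message is sent only by a node running $\mathbf{SetupLocalTransform}$, addressed to the senders of the $\langle\text{TestAdvice}\rangle$ messages it received; and a $\langle\text{TestAdvice}\rangle$ message is sent only at the last line of $\mathbf{GetAdvice}$, i.e.\ only by a node that processes an advice message. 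Chasing this chain back through a constant number of rounds, each of $w$, the members of $\mathit{children}_w$, and the node named by $\mathit{parent}_w$ must have processed an advice message; by the timer synchronization above, each did so in round~$r$, hence lies in~$Q$. Therefore $v.S\subseteq Q$ in round~$r^\star$, which is exactly the statement $u\notin v.S$ for every $u\notin Q$. (If ``after $v$ enters dual-state'' is meant in the stronger sense of ``throughout this dual-state epoch'', I would append a short induction: after round~$r^\star$ the set $v.S$ only grows via the $\langle\text{FlyConst}\rangle$ rule, which copies an identifier already present as a shortcut of a node in $v$'s backbone, so membership in $Q$ is preserved until $v$ next executes $\mathbf{RejectFlyover}$.)

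The main obstacle I anticipate is the round-by-round bookkeeping of message types across Algorithms~\ref{alg:local-cert}, \ref{alg:distrib-tree-to-path} and the advice-related helper functions of Section~\ref{subsec:helpers}, together with ruling out that an identifier from a stale message of an earlier timer epoch (or from the arbitrary initial configuration, when $r$ is very small) can sneak into $v.S$ during rounds $r,\dots,r^\star$. Both of these are controlled by the same two facts used above --- the supervisor waits until all nodes are attentive before re-issuing $\langle\text{RequestSnapshot}\rangle$ (so the flushes of $\mathbf{RejectFlyover}$ from the previous epoch have already fired and cleared its flyover/advice messages), and the timer is deterministic and common to the whole batch --- so the write-up is mostly a careful case analysis rather than a new idea.
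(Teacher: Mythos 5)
Your backward trace through $\mathbf{JoinPath} \leftarrow \mathbf{ExecuteTransform} \leftarrow \mathbf{LocalTransform} \leftarrow \mathbf{SetupLocalTransform} \leftarrow \mathbf{CertifyTree} \leftarrow \mathbf{GetAdvice}$ is exactly the structural skeleton the paper uses: every identifier that can reach $v.S$ originates from a node that itself processed an advice message a fixed number of rounds earlier in this chain. The gap is in how you then conclude that those nodes processed their advice in round $r$ specifically. You invoke the ``global fact'' that the supervisor issues $\langle\text{RequestSnapshot}\rangle$ only after every node is attentive, so that all timers are started in the same round and any node processing advice during the window $r,\dots,r^\star$ must do so in round $r$. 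That fact is only guaranteed for an \emph{honest} supervisor, whereas this claim is a building block of the robustness analysis against a \emph{malicious} one: it feeds into Lemma~\ref{lemma:sync-advice} and from there into the ``Fast Detection of Bad Advice'' lemmas (\ref{lemma:winged-backbone}, \ref{lemma:maximal-ouroboros}, \ref{lemma:mult-backbone-exit}), which are precisely about a supervisor that staggers requests and advice across different parts of the network. The model explicitly places no restriction on a malicious supervisor's messages, so it can set $y.t := 5$ one round later for some node $y$ than for $v$, making $y$ process advice in round $r+1$ (hence $y \notin Q$) while still having $y.t > 1$ throughout your window. Your timer-synchronization step therefore does not go through, and you cannot rule out that such a $y$ participates in the handshake.

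The paper closes this differently (and purely locally): it argues that the $\langle\text{TestAdvice}\rangle$/$\langle\text{Verified}\rangle$ back-and-forth between a node and its advised parent, executed in lockstep in the synchronous model and guarded at each hop by the $u.S=\emptyset \land u.\mathit{exit}=0 \land u.t>1$ tests and the distance check, can only complete end-to-end for nodes that processed their advice in the \emph{same} round; any node outside that synchronized handshake never appears in $\mathbf{SetupLocalTransform}$, hence never in $\mathbf{LocalTransform}$ or $\mathbf{JoinPath}$, hence never in $v.S$. To repair your write-up you would replace the supervisor-honesty assumption with this local timing argument: fix the exact offsets at which each message type in the chain must be sent relative to the sender's advice-processing round (forced by $u.t=4$ in $\mathbf{GetAdvice}$ and the deterministic decrement in $\mathbf{BasicChecks2}$), and show a one-round misalignment anywhere in the chain causes the corresponding guard to fail before anything is written into $v.S$. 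The rest of your case analysis (including the closure argument for later rounds via $\mathbf{R\_TestFlyoverConstruction}$) can stand as is.
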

\begin{proof}
Once any node receives an advice, if the node id present in it, is also part of its memory (i.e., in \texttt{snap} variable), then the node proceeds to check that the distance (i.e., in \texttt{dist} variable) matches with its ``parent'' node id (as in the advice). See Algorithm \ref{alg:local-cert} for more details. This back-and-forth verification of advice between a node and its parent in a synchronous system, before executing the tree-to-path transformation (i.e., Algorithm \ref{alg:distrib-tree-to-path}), ensures that they must have processed the advice in the same round. If any node $u \notin Q$ is not part of this back-and-forth communication, then it is not part of $\mathbf{SetupLocalTransform}$ function in Algorithm \ref{alg:local-cert}, which in turn, implies that it is also not part of $\mathbf{LocalTransform}$ and $\mathbf{JoinPath}$ functions in Algorithm \ref{alg:distrib-tree-to-path}. Thus, $\forall v \in Q, u \notin v.S$.
\end{proof}

\begin{lemma} \label{lemma:sync-advice}
Let $D$ be the set of nodes in dual-state in any round $r > 5$. Let $D'$ be the set of nodes that transition to dual-state in round $r+1$. For any pair of nodes $u \in D$ and $v \in D'$, $u \notin v.S$ in round $r+1$.
\end{lemma}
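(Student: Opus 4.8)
The plan is to obtain Lemma~\ref{lemma:sync-advice} as a short consequence of Claim~\ref{claim:sync-advice} (synchronized advice processing) together with two structural facts that I would establish first: (i) the pipeline carrying a node from ``processes an advice message'' to ``enters the dual-state'' has a fixed length of $c=3$ rounds, driven by the timer $u.t$ of Algorithm~\ref{alg:init2} and the message chain $\langle\text{TestAdvice}\rangle\to\langle\text{Verified}\rangle\to\langle\text{Path}\rangle\to(\text{assign }u.S)$ across Algorithms~\ref{alg:local-cert} and~\ref{alg:distrib-tree-to-path}; and (ii) a node assigns a shortcut to a previously empty $u.S$ \emph{only} inside $\mathbf{JoinPath}$, and $\mathbf{GetAdvice}$ processes an advice only when $\neg\mathit{busy}$, i.e.\ when $u.S=\emptyset$. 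The combination of (i) and (ii) is what forces two nodes processing advice in the same round to enter the dual-state in the same (later) round, and to have empty $u.S$ up to that point.

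First I would pin down the pipeline. A node $v$ transitions to the dual-state in round $r+1$ only by setting $v.S_l(1)$ and/or $v.S_r(1)$ in $\mathbf{JoinPath}$ at round $r+1$ (the $\langle\text{FlyConst}\rangle$ handlers only set higher-level shortcuts $u.S_l(i+1)$ with $|u.L|=i\ge 1$, so they cannot turn an empty $u.S$ nonempty). Tracing backwards, the $\langle\text{Path}\rangle$ messages consumed at round $r+1$ were emitted by $\mathbf{ExecuteTransform}$ at round $r$, which was triggered by $\langle\text{Verified}\rangle$ messages emitted by $\mathbf{SetupLocalTransform}$ at round $r-1$, which was triggered by $\langle\text{TestAdvice}\rangle$ messages emitted by $\mathbf{GetAdvice}$ at round $r-2$, i.e.\ when the relevant nodes processed their advice (Line~6 of Algorithm~\ref{alg:local-cert}); and $v$ itself is among these nodes, since it took part in this back-and-forth and $r>5$ rules out any initial-configuration message still being in transit. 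Moreover, $v$ successfully entering the dual-state in round $r+1$ means its advice is \emph{effective} (it reached $v.S\neq\emptyset$ while $v.t>1$). Setting $\rho:=r-2$ and letting $Q$ be the set of all nodes that process advice in round $\rho$, we thus have $v\in Q$ with effective advice, so Claim~\ref{claim:sync-advice} gives: for every $u\notin Q$, $u\notin v.S$ immediately after $v$ enters the dual-state, i.e.\ in round $r+1$.

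It remains to exclude $u\in Q$ when $u\in D$. Suppose $u\in v.S$ in round $r+1$; by the previous paragraph $u\in Q$, so $u$ processes advice in round $\rho=r-2$, which forces $u.S=\emptyset$ at round $r-2$ via the $\neg\mathit{busy}$ guard. By fact~(i) the only round in which $u$ could next make $u.S$ nonempty is $\rho+c=r+1$ (in $\mathbf{JoinPath}$): between rounds $r-2$ and $r+1$ no $\langle\text{Path}\rangle$ message addressed to $u$ is produced, because the Tree-to-Path messages addressed to $u$ belong to $u$'s own component $Q$ and follow the same deterministic schedule, and any $\langle\text{FlyConst}\rangle$ or $\langle\text{TestLine}\rangle$ message reaching $u$ while $u.S=\emptyset$ merely sets $u.\mathit{exit}:=1$ without touching $u.S$. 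Hence $u.S=\emptyset$ throughout rounds $r-2,r-1,r$, in particular $u.S=\emptyset$ in round $r$, contradicting $u\in D$ by Definition~\ref{def:dual-state}. Therefore $u\notin v.S$ in round $r+1$, which is the claim.

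The main obstacle I anticipate is making the ``rigid schedule'' argument airtight rather than merely plausible: one must verify that, while a node is mid-pipeline, its timer stays in the window $u.t>1$ for exactly the right number of rounds, that its $\mathit{snap}/\mathit{dist}/\mathit{vID}$ state is not clobbered by stray messages (this is precisely the role of the $\mathit{busy}$ and $\mathit{ignore\_msg}$ guards and of Remark~\ref{rem:attentive-bot}), and — crucially — that a node of $Q$ cannot simultaneously belong to an \emph{earlier} advice component that would already have placed it in the dual-state, which is exactly where the $\neg\mathit{busy}$ precondition of $\mathbf{GetAdvice}$ does the work. Once these bookkeeping facts are isolated as lemmas, the statement follows immediately from Claim~\ref{claim:sync-advice}.
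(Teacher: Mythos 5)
Your proposal is correct and follows essentially the same route as the paper: both reduce the lemma to Claim~\ref{claim:sync-advice} by identifying $Q$ as the set of nodes processing advice in round $r-2$ (via the fixed 3-round advice-to-dual-state pipeline) and then arguing that no node of $D$ can lie in $Q$ because of the $\neg\mathit{busy}$/timer guard. The only difference is presentational — you derive $u\notin Q$ by contradiction on the forward pipeline, while the paper does a direct two-case analysis of $u$'s history — and your write-up is somewhat more explicit about the bookkeeping the paper leaves implicit.
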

\begin{proof}
Consider the following (exhaustive) cases for any node $u \in D$ in round $r$.
\begin{enumerate}
    \item Node $u$ was always in dual-state from round 1.
    \item Node $u$ processed an advice message in a round $r' \leq r-3$, as the distributed tree-to-path algorithms (cf. Section \ref{subsec:distrib-TtP}) takes 3 rounds, until node $u$ enters dual-state. 
\end{enumerate}
Each node in $D'$, as it enters dual-state in round $r+1$, processed an advice message at round $r-2$. Let $Q$ be the set of nodes that processed advice messages in round $r-2$. Clearly, $u \notin Q$ (as $u.t \neq 0$ or $u.S \neq \emptyset$ in that round). Thus, by Claim \ref{claim:sync-advice}, any pair of nodes $u \in D$ and $v \in D'$, $u \notin v.S$.
\end{proof}

\subsection{Fast Detection of Bad Advice} \label{anal:det-bad-advice}

\begin{lemma} \label{lemma:winged-backbone}
Let $B = \{ v_1, \dots, v_{|B|} \}$ be a maximal backbone in round $r > 5$, where either $(v_1.S_l(1) \notin B)$ or $(v_{|B|}.S_r(1) \notin B)$ is satisfied. Let $L$ be the set of lost nodes in round $r$. There is at least node in $B$ that sets its $\mathtt{exit}$ variable to 1 by round $r+2$. Moreover, every node in $L$ sets its $\mathtt{exit}$ variable to 1 by round $r+2$.
\end{lemma}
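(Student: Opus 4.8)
\emph{Proof plan.} Both claims rest on one mechanism, the ``reciprocity handshake'' built into $\mathbf{TestFlyoverConstruction}$ and $\mathbf{R\_TestFlyoverConstruction}$: if a node $a$ has $a.R\neq\emptyset$ and $a.S_r(1)=b$, then in that round $a$ sends $\langle\text{TestLine-R},a\rangle$ to $b$; if $b$ does not reciprocate this shortcut — i.e.\ $b.L=\emptyset$ or $b.S_l(1)\neq a$ — then next round $b$ sets $b.\mathit{exit}\coloneqq1$ and, since now $b.S=\emptyset\lor b.\mathit{exit}=1$, replies $\langle\text{RejFlyover}\rangle$ to $a$ via $\mathbf{SendRejFly}$; the round after, $a$ reads it in $\mathbf{BasicChecks}$ and sets $a.\mathit{exit}\coloneqq1$. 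The mirror statement holds through $a.S_l(1)$. I also use that any node setting $\mathit{exit}\coloneqq1$ runs $\mathbf{RejectFlyover}$ by the next round, thereby sending $\langle\text{RejFlyover}\rangle$ to everything in $u.S\cup\{u.\mathit{flyID}\}\cup u.\textit{c-ids}$, in particular to its two first-level shortcuts. In every case below we may assume the node in question has $\mathit{exit}=0$ at the start of round $r$, as otherwise there is nothing to prove, so it does emit its TestLine messages in round $r$. It also helps to first establish that for $r>5$ the network contains no ouroboros: no first-level cycle can survive, since the $\langle\text{TestvID}\rangle$ checks in $\mathbf{TestFlyoverMetadata}$ break any such cycle within $O(1)$ rounds and the path-forming functions ($\mathbf{JoinPath}$, certified via $\mathbf{CertifyTree}$) never create one.

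\textbf{Lost nodes.} The structural crux is: \emph{a lost node $u$ holds a non-reciprocated first-level shortcut}, where $u.S_r(1).L=\emptyset$ counts as non-reciprocation, and symmetrically on the left. Suppose not, and grow the maximal chain $C$ through $u$ by following first-level shortcuts between mutually-reciprocating pairs. If $C$ revisits a vertex then, matching up back-pointers, $C$ is periodic and one period is an ouroboros containing $u$ — impossible for $r>5$. Otherwise $C$ is a simple path; if its two end vertices have the relevant side empty, $C$ is a backbone containing $u$; and if a terminal vertex's non-reciprocated shortcut points back into $C$, truncating $C$ at those vertices splits it into backbones, one of which contains $u$ (a short check of the wing conditions). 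In each case $u$ lies in a backbone or ouroboros, contradicting lostness. Given the claim, $u$ itself emits the decisive TestLine message in round $r$, so the handshake yields $u.\mathit{exit}\coloneqq1$ by round $r+2$.

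\textbf{Winged maximal backbone.} Assume w.l.o.g.\ $v_{|B|}.R\neq\emptyset$ and $w\coloneqq v_{|B|}.S_r(1)\notin B$ (the case $v_1.S_l(1)\notin B$ being the mirror image). In round $r$, $v_{|B|}$ sends $\langle\text{TestLine-R},v_{|B|}\rangle$ to $w$; if $w$ does not reciprocate, the handshake makes $v_{|B|}\in B$ exit by round $r+2$. Otherwise $w.S_l(1)=v_{|B|}$, so $v_1,\dots,v_{|B|},w$ obeys the spine condition; by maximality of $B$ this is not a backbone, hence a wing fails. If the right wing fails then $w.R\neq\emptyset$ and $w.S_r(1)\in\{v_1,\dots,v_{|B|},w\}$; in every possibility other than $w.S_r(1)=v_1$ \emph{and} $v_1.S_l(1)=w$, the message $\langle\text{TestLine-R},w\rangle$ is not reciprocated at its target — a vertex of $B$ (exiting by round $r+1$), or $w$ itself, whose ensuing $\mathbf{RejectFlyover}$ hits $v_{|B|},v_1\in B$. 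If the left wing fails then $v_1.L\neq\emptyset$ and $v_1.S_l(1)=w$ (since already $v_1.S_l(1)\notin B$); then $v_1$'s $\langle\text{TestLine-L},v_1\rangle$ is either not reciprocated by $w$ (handshake: $v_1$ exits by $r+2$), or $w.S_r(1)=v_1$.

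\textbf{The delicate case, and the main obstacle.} The only surviving possibility is $v_1.S_l(1)=w$, $w.S_r(1)=v_1$, $w.S_l(1)=v_{|B|}$, $v_{|B|}.S_r(1)=w$, i.e.\ $\{v_1,\dots,v_{|B|},w\}$ is a Perfect-Ring ouroboros; this is excluded for $r>5$ by the no-first-level-cycle fact above. (Without invoking it, one argues directly: each ring vertex $a$ has $a.R\neq\emptyset$, so sends $\langle\text{TestvID},a.\mathit{vID}+1\rangle$ along $a.S_r(1)$, and the receiver sets $\mathit{exit}\coloneqq1$ unless its $\mathit{vID}$ equals that value; since $v_{i+1}.\mathit{vID}=v_i.\mathit{vID}+1$ cannot hold all the way around a cycle, some ring vertex sets $\mathit{exit}\coloneqq1$ within one round, and if it is $w$ its subsequent $\mathbf{RejectFlyover}$ carries $\langle\text{RejFlyover}\rangle$ to $v_1,v_{|B|}\in B$.) I expect the structural claim for lost nodes and the treatment of this last configuration to be the two places demanding the most care; the only residual delicacy is pinning the constant in ``$r+2$'' exactly rather than $r+O(1)$ for the ring branch, which is immaterial once this lemma feeds into Lemma~\ref{lemma:inf-prop}.
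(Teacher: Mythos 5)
Your proposal rests on the same core mechanism as the paper's proof: the $\langle\text{TestLine}\rangle$ handshake, by which a non-reciprocated first-level shortcut is detected by its target in round $r+1$ and answered with $\langle\text{RejFlyover}\rangle$, forcing the holder to set $\mathit{exit}$ in round $r+2$. The paper's own proof is far terser: it takes $w=v_1.S_l(1)\notin B$, asserts $w.S_r(1)\neq v_1$ (citing Lemma~\ref{lemma:sync-advice}), and concludes, saying only that ``the same argument applies'' to lost nodes. You do genuinely more work in two places, and both additions are sound and arguably needed. For lost nodes you supply the missing structural fact that a lost node must hold a non-reciprocated first-level shortcut (otherwise the maximal mutually-reciprocating chain through it is a backbone or a stylish/perfect-ring ouroboros); this is exactly what makes ``the same argument'' go through. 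And you refuse to assume $w.S_r(1)\neq v_1$, tracing the reciprocated case via maximality down to its one survivor, the perfect ring $B\cup\{w\}$ --- a configuration that is in fact compatible with $B$ being a maximal backbone satisfying the hypothesis, so it cannot simply be waved away.

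Two cautions. Your blanket claim that ``for $r>5$ the network contains no ouroboros'' is false: the initial configuration is adversarial, a first-level ring can be present from round $1$, and the paper's own Lemma~\ref{lemma:maximal-ouroboros} only dissolves one in $O(\log|O|)$ rounds (one vertex detects the $\mathit{vID}$ inconsistency quickly, but the exit signal must still propagate around the ring via Lemma~\ref{lemma:inf-prop}). You must therefore lean on your parenthetical fallback --- the $\langle\text{TestvID}\rangle$ argument --- which is correct and is precisely how the paper handles perfect rings. Second, as you note yourself, in that branch the only vertex guaranteed to fail the $\mathit{vID}$ check may be $w\notin B$, whose $\mathbf{RejectFlyover}$ reaches $v_1,v_{|B|}$ only in the following round, giving $r+3$ rather than $r+2$; the paper sidesteps this by asserting the reciprocated case away, and since the lemma is only ever consumed inside $O(\log n)$ bounds the one-round slack is immaterial, but your analysis correctly identifies this as the genuinely delicate configuration.
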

\begin{proof}
We provide the proof for the backbone $B$, but the same argument applies to the lost nodes in $L$. Wlog, let node $v_1.L \neq \emptyset$; similar argument also applies to $v_{|B|}$ if $v_{|B|}.R \neq \emptyset$.

Node $v_1$ sends $\langle \text{TestLine-L} \rangle$ to $v_1.S_l(1)$ in round $r$. But $v_1.S_l(1)$, in round $r+1$, by Lemma \ref{lemma:sync-advice}, as it holds that $(v_1.S_l(1)).S_r(1) \neq v_1$, sends $\langle \text{RejFlyover} \rangle$ to $v_1$ (regardless whether $v_1.S_l(1)$ in a dual-state or not). Node $v_1$ its $\mathtt{exit}$ variable set to 1 by round $r+2$.
\end{proof}


\begin{lemma} \label{lemma:maximal-ouroboros}
Each node in a maximal ouroboros $O$ in round $r > 5$, exits dual-state in $O(\log |O|)$ rounds.
\end{lemma}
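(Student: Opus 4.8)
The plan is two-phased. I would first show that within a constant number of rounds some node of $O$ sets $\mathtt{exit}=1$ and runs $\mathbf{RejectFlyover}$, and then show that from that point the rejection reaches every node of $O$ within $O(\log|O|)$ further rounds; adding the two bounds gives the claim. Throughout I use the interpretation (as in Lemma~\ref{lemma:inf-prop}) that ``exits the dual-state in $X$ rounds'' means ``exits at least once within $X$ rounds''.

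\emph{Phase 1 (breaking the ring).} By Definition~\ref{def:ouroboros} the level-$1$ shortcuts inside $O$ contain a directed cycle of some length $1\le\ell\le|O|$: following the $S_r(1)$-pointers (equivalently the matched $S_l(1)$-pointers) from any node returns to it after $\ell$ steps. Each node $u$ on this cycle has $u.R\neq\emptyset$, so in round $r$ it executes $\mathbf{TestFlyoverMetadata}$ and sends $\langle\text{TestvID},u.\mathit{vID}+1\rangle$ to $u.S_r(1)$; on receipt the (dual-state) target $w$ sets $w.\mathtt{exit}=1$ unless $w.\mathit{vID}=u.\mathit{vID}+1$. A node's $\mathit{vID}$ can be overwritten only by $\mathbf{RejectFlyover}$ (which requires $\mathtt{exit}=1$) or while processing advice (which is blocked while $u.S\neq\emptyset$), so no $\mathit{vID}$ on the cycle changes unless its owner has already rejected. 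Hence, were no cycle node ever to set $\mathtt{exit}=1$, the equality $w.\mathit{vID}=u.\mathit{vID}+1$ would hold for every consecutive pair $u\to w$ on the cycle, and telescoping around it would give $\ell=0$, a contradiction. So some node of $O$ has $\mathtt{exit}=1$ by round $r+1$ and runs $\mathbf{RejectFlyover}$ by round $r+2$.

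\emph{Phase 2 (dissemination).} I would next observe that the level-$1$ shortcut structure restricted to dual-state nodes can only shrink over time: $\mathbf{R\_TestFlyoverConstruction}$ writes only shortcuts at levels $\ge 2$, the lone function that writes a level-$1$ shortcut ($\mathbf{JoinPath}$) fires only when $u.S=\emptyset$, and $\mathbf{RejectFlyover}$ merely clears shortcuts. Thus no new $1$-shortcut cycle can appear, and since the ouroboros contains exactly one such cycle (a path-spine plus one closing edge), Phase 1 destroys it within $O(1)$ rounds. Consequently by some round $r'=r+O(1)$ the still-dual-state nodes of $O$ form a disjoint union of backbones in the sense of Definition~\ref{def:backbone}: their matched $1$-shortcuts are the spines, and every endpoint has a $1$-shortcut that is empty or points at a node of $O$ that has already rejected. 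Because $\mathbf{RejectFlyover}$ sends $\langle\text{RejFlyover}\rangle$ along \emph{every} shortcut of the rejecting node --- in particular its $1$-shortcuts --- each of these backbones contains a node that has received $\langle\text{RejFlyover}\rangle$, hence has $\mathtt{exit}=1$ via $\mathbf{BasicChecks}$. Applying Lemma~\ref{lemma:inf-prop} to each such backbone $B$, all of $B$'s nodes leave the dual-state within $2\lfloor\log|B|\rfloor\le 2\lfloor\log|O|\rfloor$ further rounds; combined with Phase 1, every node of $O$ has exited within $O(\log|O|)$ rounds.

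\emph{Main obstacle.} I expect the hard part to be the Phase-2 bookkeeping: arguing carefully that, once no $1$-shortcut cycle remains, the residual dual-state nodes of $O$ genuinely decompose into objects matching Definition~\ref{def:backbone} --- in particular handling stylish-ring ``lassos'', where rejecting the junction node splits the structure into two pieces, and degenerate one-node pieces, which reject directly upon receiving $\langle\text{RejFlyover}\rangle$ --- and that each piece indeed contains a node with $\mathtt{exit}=1$ at the round it is formed, so that Lemma~\ref{lemma:inf-prop} can be invoked as a black box. The hypothesis $r>5$ is needed only to exclude transient advice-processing states, which in any case cannot coexist with $u.S\neq\emptyset$.
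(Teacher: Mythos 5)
Your proof is correct and follows the same skeleton as the paper's: break the ring in $O(1)$ rounds, then let the residue decompose into backbones and invoke Lemma~\ref{lemma:inf-prop} for $O(\log|O|)$ dissemination. The differences are in which algorithmic mechanism does the work at each step. For Phase~1 the paper splits into cases: perfect rings are caught by the $\langle\text{TestvID}\rangle$ telescoping (as you do), but stylish rings are caught instead via the $\langle\text{TestLine}\rangle$ handshake at the one node whose $1$-shortcut is unmatched. Your unified telescoping argument also works for stylish rings, since the $\langle\text{TestvID}\rangle$ check does not require the closing pointer to be matched --- though your parenthetical ``equivalently the matched $S_l(1)$-pointers'' is inaccurate there (the closing edge of a stylish ring is not matched), and when the closing condition is $v_1.S_l(1)\in B$ the directed cycle lives in the $S_l(1)$-pointer graph, so you must telescope the symmetric leftward messages $\langle\text{TestvID},u.\mathit{vID}-1\rangle$, which you do not state explicitly. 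For Phase~2 the paper seeds each residual piece differently: it observes that every node of the ring has $L\neq\emptyset\wedge R\neq\emptyset$, so the pieces are \emph{winged} maximal backbones (plus lost nodes), and applies Lemma~\ref{lemma:winged-backbone} to get an $\mathtt{exit}=1$ node in each piece within two rounds; you instead seed the pieces via the $\langle\text{RejFlyover}\rangle$ messages that $\mathbf{RejectFlyover}$ sends along the rejecting node's $1$-shortcuts (Remark~\ref{rem:flushflyover}). Both mechanisms are present in the pseudocode and both yield the $O(1)$ seeding needed before Lemma~\ref{lemma:inf-prop}, so your variant is a legitimate alternative; the paper's use of Lemma~\ref{lemma:winged-backbone} has the minor advantage of also covering pieces whose endpoints point at nodes that rejected for unrelated reasons, but in this lemma's setting the two arguments are interchangeable.
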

\begin{proof}
By Lemma \ref{lemma:sync-advice}, the size of any maximal ouroboros (or maximal backbone) doesn't increase over time (i.e., for any round $r' > r$, if some node $u \notin O$, then $u \notin O$ in round $r'$). Let us consider the two cases of maximal ouroboros (as in Definition \ref{def:ouroboros}) in round $r$.
\begin{enumerate}
    \item If $O$ is a perfect ring, then at least one pair of consecutive nodes in $O$ set their $\mathtt{exit}$ variables to 1 by round $r+1$, as by design, every node $u$ in its dual-state, $u$ sends $u.\mathit{vID} + 1$ to $u.S_{r}(1)$ and $u.\mathit{vID} - 1$ to $u.S_{l}(1)$ in every round. Let $L \subseteq O$ be the set of nodes that are lost in round $r+2$; they exit from dual-state by round $r+4$ by Lemma \ref{lemma:winged-backbone}. The set of nodes $O' \subseteq (O \setminus L)$, that are in dual-state but not lost in round $r+2$, forms a partition of maximal winged backbones as every node $u \in O$ in round $r$, has $u.L \neq \emptyset \land u.R \neq \emptyset$. By Lemma \ref{lemma:winged-backbone}, combined with Lemma \ref{lemma:inf-prop}, by round $r+O(\log |O|)$, all nodes in $O'$ exit the dual-state. 
    \item If $O$ is a stylish ring, then there is at least one node $u \in O$ such that,
    \begin{equation*}
      \{u.S_l(1), u.S_r(1)\} \subseteq O \land ((v = u.S_l(1) \land v.S_r(1) \neq u) \lor (v = u.S_r(1) \land v.S_l(1) \neq u)).
    \end{equation*}
    Node $u$ sends $\langle \text{TestLine-R} \rangle
    $ and $\langle \text{TestLine-L} \rangle
    $ messages in round $r$. In round $r+1$, (at least) one of the recipients, sets the $\mathtt{exit}$ variable to 1, and sends back $\langle \text{RejFlyover} \rangle$ to node $u$. By round $r+2$, node $u$ exits the dual-state. Let $L \subseteq O$ be the set of nodes that are lost in round $r+2$; they exit from dual-state by round $r+4$ by Lemma \ref{lemma:winged-backbone}. The set of nodes $O' \subseteq O \setminus (M \cup L)$ that are in dual-state but not lost in round $r+2$, form a partition of maximal winged backbones, as every node $v \in O'$ in round $r$, has $v.L \neq \emptyset \land v.R \neq \emptyset$. By Lemma \ref{lemma:winged-backbone} and Lemma \ref{lemma:inf-prop}, by round $r+O(\log |O|)$, all nodes in $O'$ exit the dual-state.
\end{enumerate}
\end{proof}

\begin{lemma} \label{lemma:corr-config}
Consider any maximal backbone $B \subseteq V$ in any round $r > 5$, where $B$ is not winged. If all nodes in $B$ have their $\mathtt{exit}$ variable set to $0$ in all rounds $r, r+1, \dots, T$, then the following statements hold.
\begin{enumerate}
    \item If $T \geq r+ 2\lceil \log |B| \rceil$, then in any round $r' \in \{ r+ 2\lceil \log |B| \rceil, \dots, T\}$, $B$ is a flyover, and for all $u \in B, u.\mathit{flyID} = v.\mathit{id}$ where $v.L = \emptyset$.
    \item If $T \geq r+ 3\lceil \log |B| \rceil$, and for all $u \in B, u.S \subseteq B$, then in any round $r' \in \{ r+ 3\lceil \log |B| \rceil, \dots, T\}$, backbone $B$ is correctly configured.
\end{enumerate}
\end{lemma}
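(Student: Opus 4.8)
\emph{Setup.} The plan is to prove the two parts in order, each time converting the ``no node of $B$ sets $\mathtt{exit}:=1$ during $[r,T]$'' hypothesis into ``all local certification checks pass'' and then letting the structural lemmas finish. Write $B=\{v_1,\dots,v_{|B|}\}$ with $v_i.S_r(1)=v_{i+1}$ and $v_{i+1}.S_l(1)=v_i$; since $B$ is maximal and not winged, $v_1.L=\emptyset$ and $v_{|B|}.R=\emptyset$. I would first record that $B$ is \emph{frozen} on $[r,T]$: a node's $1$-shortcuts and any already-assigned flyover variable can only be reset inside $\mathbf{RejectFlyover}$ (Remark~\ref{rem:flushflyover}), which needs $\mathtt{exit}=1$, and by Lemma~\ref{lemma:sync-advice} (using $r>5$) a dual-state structure never grows; hence the vertex set $B$ and its spine are unchanged on $[r,T]$, and it remains only to track the other variables.

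\emph{Part 1.} I would first pin down the virtual ids: every round $v_i$ sends $\langle\mathrm{TestvID},v_i.\mathit{vID}+1\rangle$ to $v_{i+1}$, so $v_{i+1}.\mathit{vID}\neq v_i.\mathit{vID}+1$ would force $v_{i+1}$ to set $\mathtt{exit}:=1$ within a round (excluded since $T\ge r+2$); hence $v_i.\mathit{vID}=v_1.\mathit{vID}+(i-1)$ on $[r,T]$, and the $\mathbf{BasicChecks}$ invariant for $v_1$ (Remark~\ref{flyover-inv}(2), using $v_1.L=\emptyset$, $v_1.R\neq\emptyset$) pins $v_1.\mathit{vID}=1$, so $v_i.\mathit{vID}=i$. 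Next I would invoke Corollary~\ref{corr:flyover-const}: since $B$ has no exit for $\lfloor\log_2(|B|-1)\rfloor\le 2\lceil\log|B|\rceil$ rounds, $B$ is a flyover from round $r+\lfloor\log_2(|B|-1)\rfloor$ on, and it stays one because shortcuts are never deleted. For the flyover id I would argue: $v_1.\mathit{flyID}=v_1.\mathit{id}$ throughout (same invariant), $v_1$ broadcasts $\langle\mathrm{TestFlyID},v_1.\mathit{id}\rangle$ along all of its (growing) shortcuts each round, and any node receiving this with a nonempty left shortcut writes it into $\mathtt{flyID}$ and never overwrites it afterward (a pre-existing mismatched value would trip the $\langle\mathrm{TestFlyID}\rangle$ consistency guard into an $\mathtt{exit}$, so it does not occur). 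Running the pointer-doubling, distance-halving estimate of Lemma~\ref{lemma:inf-prop}, the value reaches all of $B$ in $O(\log|B|)$ rounds, and counting shows that by round $r+2\lceil\log|B|\rceil$ every $u\in B$ has $u.\mathit{flyID}=v_1.\mathit{id}$; that is Part~1.

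\emph{Part 2.} Building on Part~1 and the extra hypothesis $\bigcup_{u\in B}u.S\subseteq B$ (which keeps every $\mathbf{NextStop}$ hop inside $B$), I would argue as follows. From round $r+2\lceil\log|B|\rceil$ on, each $v_i$ ($i\ge2$) has $v_i.\mathit{vID}>1$ and $v_i.\mathit{flyID}\neq v_i.\mathit{id}$, so it repeatedly injects $\langle\mathrm{TestCert},v_i.\mathit{id},v_i.\textit{c-par},v_i.\textit{c-dist}\rangle$, forwarded via $\mathbf{NextStop}$ toward virtual id $v_i.\textit{c-par}$. Any $\mathbf{NextStop}$ returning $\varnothing$ (including the $\mathbf{BasicChecks}$ test) or any failed check $\mathit{dist}-1=u.\textit{c-dist}$ at the destination would cause an $\mathtt{exit}$, which is excluded; so the message reaches some $u_i\in B$ with $u_i.\mathit{vID}=v_i.\textit{c-par}$ and $u_i.\textit{c-dist}=v_i.\textit{c-dist}-1$, and since the gap to the target halves each hop on a fully-formed flyover this completes by round $r+3\lceil\log|B|\rceil$; there $u_i$ appends $v_i$ to $u_i.\textit{c-ids}$ and returns $\langle\mathrm{IntroCert}\rangle$, placing $u_i.\mathit{id}$ in $v_i.\textit{c-ids}$. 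Existence of such $u_i$ for all $i\ge2$ is exactly the hypothesis of the proof-labeling Claim stated earlier for a maximal backbone (its other preconditions --- not winged, correct spine, $v_i.\mathit{vID}=i$, $v_i.\mathit{flyID}=v_1.\mathit{id}$ --- holding by assumption and Part~1), so that Claim gives $G'_B=G^*_B$. Then I would read off Definition~\ref{def:corr-config-backbone}: condition~1 is Part~1; $v_1$ being the unique node with $\textit{c-dist}=0$ (Remark~\ref{flyover-inv}) makes the $\textit{c-dist}$-increasing orientation of $G'_B=G^*_B$ the rooted tree $T^*_B$, which with $v_1.\textit{c-dist}=0$ and the verified distance checks gives conditions~2 and~4; and condition~3 holds because every edge of $T^*_B$ (equivalently, every edge of $G^*_B$) is a parent edge, hence materialized in both endpoints' $\textit{c-ids}$ by the child's $\mathbf{TestCert}$ and the returning $\mathbf{IntroCert}$.

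\emph{Expected main obstacle.} The conceptual core is short --- ``no $\mathtt{exit}$ long enough $\Rightarrow$ all certification checks pass $\Rightarrow$ (by the proof-labeling Claim) the encoded object equals $G^*_B$.'' The real work is the timing bookkeeping: one must verify that hypercube construction, the $\mathtt{flyID}$ broadcast, and the $\mathbf{TestCert}$ routing with its $\mathbf{IntroCert}$ reply telescope into the stated $2\lceil\log|B|\rceil$ and $3\lceil\log|B|\rceil$ bounds even though they run concurrently over a structure only partially built at round $r$, and --- crucial both for consistency of the hypothesis and for the checks to be exercised to completion --- that no guard fires spuriously in the meantime (in particular, $\mathbf{TestCert}$ copies routed early over an incomplete hypercube still move toward their target and are never wedged at an endpoint whose $\mathbf{NextStop}$ is $\varnothing$).
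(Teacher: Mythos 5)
Your proposal is correct and follows essentially the same route as the paper's proof: freeze the backbone via Lemma~\ref{lemma:sync-advice} and the write-once discipline of the flyover variables, verify virtual ids via the $\langle\text{TestvID}\rangle$ checks, obtain the flyover from Lemma~\ref{lemma:fly-const}/Corollary~\ref{corr:flyover-const} by round $r+\lceil\log|B|\rceil$, propagate $\mathtt{flyID}$ over the shortcuts by round $r+2\lceil\log|B|\rceil$, and then let the $\langle\text{TestCert}\rangle$ routing (kept inside $B$ by the hypothesis $u.S\subseteq B$) establish the certificate within one more $\lceil\log|B|\rceil$ window. If anything, you are slightly more explicit than the paper in invoking the spanning-tree certification claim to conclude that the verified parent pointers encode $G^*_B$, which the paper leaves implicit in its appeal to Definition~\ref{def:corr-config-backbone}.
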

\begin{proof} First, by Lemma \ref{lemma:sync-advice}, the size of any maximal backbone doesn't increase over time (i.e., for any round $r' \geq r$, if some node $v \notin B$ in round $r$, then $v \notin B$ in round $r'$). Since $B$ is \emph{not} a winged maximal backbone, i.e., there exist $v_1$ and $v_{|B|}$ nodes in $B$ such that $(v_1.L = \emptyset)$ and $(v_{|B|}.R = \emptyset)$.

Due to our algorithm design, we recall the possible changes of flyover-related variables for any node $u$ whose $u.S \neq \emptyset$ and $u.\mathit{exit} = 0$.
\begin{enumerate}
    \item The variables $\texttt{vID}, \texttt{c-par}, \texttt{c-dist}$ are initially updated after receiving an advice message from the supervisor (see $\mathbf{GetAdvice}$ function). But they can only be updated again once the node executes $\mathbf{FlushFlyover}$ function (i.e., after $u.\mathit{exit} = 1$).
    \item If $u.L = \emptyset \land u.R \neq \emptyset$, then $u.\mathit{flyID} = u.\mathit{id}$; otherwise, the variable $\texttt{flyID}$ can be assigned some other id in $\mathbf{R\_TestFlyoverMetadata}$ function. Importantly, once $u.\mathit{flyID} \neq u.\mathit{id}$, it can only be updated again once the node executes $\mathbf{FlushFlyover}$ function (i.e., after $u.\mathit{exit} = 1$).
    \item The variables $\texttt{L}, \texttt{R}$ and $\texttt{c-ids}$ can be updated over time in that the existing ids in the variables cannot be removed but new ids can be added. Thus, this means that $u.S_l(1)$ and $u.S_r(1)$ can be updated again once the node executes $\mathbf{FlushFlyover}$ function (i.e., after $u.\mathit{exit} = 1$).
\end{enumerate}
As all nodes have their $\texttt{exit}$ variable set to 0, let us go over the required variable changes of a correctly configured flyover (as in Definitions \ref{def:flyover} and \ref{def:corr-config-backbone}). Wlog, let $B = \{ v_1, v_2, \dots, v_{|B|} \}$. If the following events do not happen, then $\exists v \in B, v.\mathtt{exit} = 1$ by round $r+3\lfloor \log |B| \rfloor$.
\begin{enumerate}
    \item (Virtual IDs) As $v_1.L = \emptyset$, $v_1.\mathit{vID}$ must be equal to 1. For all $i \in [|B|]$, in round $r+1$, due to the $\mathbf{TestFlyoverMetadata}$ function, $v_i$ must be equal to $i$.
    \item (Hypercube Edges) By Lemma \ref{lemma:fly-const} (and Corollary \ref{corr:flyover-const}), by round $r + \lfloor \log |B| \rfloor$, backbone $B$ is a flyover (as per Definition \ref{def:flyover}) due to the $\mathbf{TestFlyoverConstruction}$ function.
    \item (Flyover ID) By round $r + 2 \lfloor \log |B| \rfloor$, for all $i \in [|B|]$, $v_i.\mathit{flyID}$ must be equal to $v_1$ due to the $\mathbf{TestFlyoverMetadata}$ function (as $B$ must be a flyover by round $r + \lfloor \log |B| \rfloor$).
    \item (Connectivity Certificate) At round $r + 2 \lfloor \log |B| \rfloor$, the $\texttt{prop-flyID}$ variable is set to 1 for any node $u \in B$, as the clause $(u.\mathit{vID} = 1) \lor (u.\mathit{vID} > 1 \land u.\mathit{flyID} = v_1)$ is satisfied. For all $i \in [|B|] \setminus \{ 1 \}$, node $v_i$ routes the $v_i.\textit{c-dist} - 1$ value (over the flyover) to the node with the virtual ID $v_i.\textit{c-par}$ using the $\mathbf{NextStop}$ function. The message must be routed to some node \emph{within} the flyover, and that node \emph{verifies} the message with its $\texttt{c-dist}$ variable. Combined with the tests in the $\mathbf{BasicChecks}$ function, this results in adding the appropriate neighboring ids in the sorted-path graph over backbone $B$ (as per Definition \ref{def:corr-config-backbone}) by the next $\lceil \log |B| \rceil$ rounds.
\end{enumerate}
\end{proof}

\begin{corollary} \label{cor:conn-A0-1}
Let all nodes form a single backbone in round $r$. If all nodes have their $\mathtt{exit}$ variable set to 0 in all rounds $r, r+1, \dots, r+3\lceil \log n \rceil$, then the (sub)network over $\forall u \in V, \mathrm{M}_u(\mathcal{A}_0)$, includes the sorted-path topology in round $r+3\lceil \log n \rceil$.
\end{corollary}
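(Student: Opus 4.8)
The plan is to derive this directly from Lemma~\ref{lemma:corr-config}, followed by one bookkeeping step about how advised neighbors are handed to the base algorithm. First, observe that since all $n$ nodes form a single backbone $B$ in round $r$, we have $B=V$; in particular the ``wings'' requirement forces $v_1.L=\emptyset$ and $v_{|B|}.R=\emptyset$ (any left/right $1$-shortcut of an endpoint would lie in $V=B$, violating the wing condition), so $B$ is a \emph{maximal} backbone that is \emph{not} winged, and trivially $u.S\subseteq B$ for every $u\in B$. Since no node sets its $\mathtt{exit}$ variable to $1$ in rounds $r,\dots,r+3\lceil\log n\rceil$, the second part of Lemma~\ref{lemma:corr-config} applies with $T=r+3\lceil\log n\rceil$ and $|B|=n$: in round $r+3\lceil\log n\rceil$ the backbone $B$ is \emph{correctly configured} (and also a flyover, by Corollary~\ref{corr:flyover-const} and the first part of that lemma). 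Because $B=V$, the graph $G^*_B$ of Definition~\ref{def:corr-config-backbone} coincides with the sorted path $G^*$, and item~3 of that definition gives $E^*(v_i)\subseteq v_i.\textit{c-ids}$ for every node $v_i$; that is, each node already stores both of its sorted-path neighbors in its $\textit{c-ids}$ variable, the ``second'' neighbor being supplied by the $\langle\text{IntroCert}\rangle$ round-trip in Algorithm~\ref{alg:conf-cert}, which is exactly what makes Definition~\ref{def:corr-config-backbone} symmetric.

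Next I would invoke the transfer mechanism described at the start of Section~\ref{sec:pseudocodes}: in every round, before the base algorithm is executed, the ids currently stored in $u.\textit{c-ids}$ (when nonempty) are flushed into the address variables of $\mathrm{M}_u(\mathcal{A}_0)$ that $\mathcal{A}_0$ uses for target-topology neighbors. Hence, in round $r+3\lceil\log n\rceil$, for every node $v_i$ the ids of its sorted-path neighbors $E^*(v_i)$ are present in $\mathrm{M}_{v_i}(\mathcal{A}_0)$, so the subnetwork formed by the address variables $\bigcup_{u\in V}\mathrm{M}_u(\mathcal{A}_0)$ contains every edge of $G^*$; i.e.\ it includes the sorted-path topology, as claimed. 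Weak connectivity of the whole communication graph throughout, guaranteed by Lemma~\ref{lemma:conn-global}, ensures that nothing in $\mathrm{M}_u(\mathcal{A}_0)$ is destroyed in the process, although for this particular statement we only need the positive containment.

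The heavy lifting is entirely in Lemma~\ref{lemma:corr-config}, so the corollary itself is short; the only real care needed is the timing convention. One must check that ``correctly configured in round $r'$'' and ``the $\textit{c-ids}$ contents are transferred in round $r'$'' refer to the same round — this is handled by the fixed per-round execution order (flyover algorithms, then the distributed Tree-to-Path algorithms, then $\mathcal{A}_0$), so the values certified during the flyover phase are precisely the ones flushed to $\mathrm{M}_u(\mathcal{A}_0)$ in that same round. I expect this one-round alignment, together with making precise that ``includes'' is read as edge containment of the explicit subnetwork restricted to $\mathrm{M}_u(\mathcal{A}_0)$, to be the only points requiring attention; there is no genuine obstacle beyond what Lemma~\ref{lemma:corr-config} already establishes.
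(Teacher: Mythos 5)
Your proposal is correct and follows exactly the route the paper intends: the corollary is stated without a separate proof precisely because it is Lemma~\ref{lemma:corr-config}(2) applied with $B=V$ and $T=r+3\lceil\log n\rceil$ (so $G^*_B=G^*$ and item~3 of Definition~\ref{def:corr-config-backbone} puts each node's sorted-path neighbors into $\textit{c-ids}$), combined with the per-round transfer of $\textit{c-ids}$ into $\mathrm{M}_u(\mathcal{A}_0)$ described at the start of Section~\ref{sec:pseudocodes}. Your additional checks (that the all-of-$V$ backbone is maximal and not winged, that $u.S\subseteq B$ holds, and the round-alignment of the transfer) are exactly the bookkeeping the paper leaves implicit.
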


\begin{lemma} \label{lemma:mult-backbone-exit}
Consider a maximal backbone $B = \{ v_1, \dots, v_{|B|} \}$ in any round $r > 5$, where $|B| < n$ and $(v_1.L = \emptyset \lor v_1.S_l(1) \notin B)$ and $\forall i \in [|B|-1], v_i.S_r(1) = v_{i+1}$. By round $r + O(\log n)$, at least one node in $B$ will set its $\mathtt{exit}$ variable to 1.
\end{lemma}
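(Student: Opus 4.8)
The plan is a proof by contradiction. Assume no node of $B$ ever assigns $\mathtt{exit}\coloneqq 1$ during rounds $r, r+1, \dots, r + C\lceil\log n\rceil$, where $C$ is a constant to be fixed at the end, large enough to dominate every $O(\log n)$ delay below. If $B$ is a \emph{winged} maximal backbone, that is, $v_1.L\neq\emptyset$ or $v_{|B|}.R\neq\emptyset$, then Lemma \ref{lemma:winged-backbone} already forces a node of $B$ to set $\mathtt{exit}=1$ within two rounds, contradicting the assumption; so I may assume $v_1.L=\emptyset$ and $v_{|B|}.R=\emptyset$.

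First I stabilize the structure. By Lemma \ref{lemma:sync-advice} --- together with the fact that, while no $\mathtt{exit}$ flag is raised inside $B$, the $1$-shortcuts of nodes of $B$ cannot change --- the node set of $B$ is frozen throughout the window, and no component outside $B$ absorbs a node of $B$. Applying Corollary \ref{corr:flyover-const} and Lemma \ref{lemma:corr-config}(1) to the non-winged maximal backbone $B$, within $2\lceil\log|B|\rceil$ rounds $B$ becomes a flyover and every $u\in B$ has $u.\mathit{flyID}=v_1.\mathit{id}$ (here $v_1$ is the node with $v_1.L=\emptyset$, whose $\mathit{flyID}$ is pinned to its own id by the invariants of Remark \ref{flyover-inv}). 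From then on, in each round the $\mathbf{TestFlyoverMetadata}$ step of every node of $B$ sends $\langle\text{TestFlyID}, v_1.\mathit{id}\rangle$ to every id stored anywhere in its memory.

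The core step is to show that within $O(\log n)$ further rounds some node $q\notin B$ receives such a message. Since $|B|<n$, Lemma \ref{lemma:conn-global} gives a cut edge between $B$ and $V\setminus B$ in every round. If at some round a node $p\in B$ holds an id $q\notin B$ anywhere in its memory (its $\mathcal{A}_0$-memory, its flyover variables, or its channel), then $p$'s $\mathbf{TestFlyoverMetadata}$ step delivers $\langle\text{TestFlyID}, v_1.\mathit{id}\rangle$ to $q$ in the next round. Otherwise every cut edge points into $B$; here I invoke Claim \ref{claim:dr-prim}: a node $q\notin B$ holding some $p\in B$ in $\mathrm{M}_q(\mathcal{A}_0)$ and delegating during its $\mathcal{A}_0$ step (which it does via the DR primitive) either keeps that edge --- in which case, since $\mathcal{A}_0$ and the flyover protocol preserve weak connectivity and the Delegate/DR primitives always retain the delegatee, an outside id must eventually appear in some $B$-node's memory --- or, within two rounds, creates an edge out of $B$, reducing to the previous case. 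Either way some $q\notin B$ receives $v_1$'s flyover id within $O(\log n)$ rounds.

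Finally I case on $q$'s state when it receives $\langle\text{TestFlyID}, v_1.\mathit{id}\rangle$. If $q.S=\emptyset$, or $q.S\neq\emptyset$ with $q.\mathit{flyID}\neq v_1.\mathit{id}$, then $\mathbf{R\_TestFlyoverMetadata}$ sets $q.\mathit{exit}\coloneqq 1$ and sends $\langle\text{RejFlyover}\rangle$ to the id $v_1.\mathit{id}$, so that $\mathbf{BasicChecks}$ makes $v_1.\mathit{exit}=1$ the round after --- the desired contradiction. The remaining subcase, $q.S\neq\emptyset$ and $q.\mathit{flyID}=v_1.\mathit{id}$ (an outside node that has already inherited $v_1$'s flyover id), is the main obstacle: $q$ then lies in a maximal ouroboros, in a maximal backbone $B'\neq B$, or is a lost node, and in each case the existing machinery forces $q$ to run $\mathbf{RejectFlyover}$ within $O(\log n)$ rounds --- Lemma \ref{lemma:maximal-ouroboros} for an ouroboros; Lemma \ref{lemma:winged-backbone} with Lemma \ref{lemma:inf-prop} for a winged $B'$ or a lost $q$; and, for a non-winged $B'$ with leftmost node $v'_1\neq v_1$, Lemma \ref{lemma:corr-config}(1), since $q.\mathit{flyID}=v_1.\mathit{id}\neq v'_1.\mathit{id}$ is write-once, so $B'$ cannot stabilize into a flyover with uniform flyover id $v'_1.\mathit{id}$, whence some node of $B'$ raises $\mathtt{exit}$ and Lemma \ref{lemma:inf-prop} propagates it to $q$. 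When $q$ runs $\mathbf{RejectFlyover}$ it sends $\langle\text{RejFlyover}\rangle$ to every id in $\{q.\mathit{flyID}\}\cup q.S\cup q.\textit{c-ids}$, in particular to $v_1.\mathit{id}$, and $v_1$ raises $\mathtt{exit}$ one round later. Taking $C$ large enough to absorb all these delays yields the contradiction. I expect the cut-edge bookkeeping --- tracking, across the interleaved $\mathcal{A}_0$ execution and the DR primitive and independently of the cut edge's orientation, that $v_1$'s flyover id reaches some node outside $B$ in $O(\log n)$ rounds --- to be the hardest part, the stubborn subcase being when that node has already adopted $v_1$'s flyover id.
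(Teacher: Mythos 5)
Your overall strategy matches the paper's: dispose of the winged case via Lemma \ref{lemma:winged-backbone}, use Lemma \ref{lemma:corr-config} to force the non-winged $B$ into a flyover with uniform flyover id $v_1.\mathit{id}$ within $O(\log n)$ rounds, and then exploit weak connectivity across the cut between $B$ and $V\setminus B$ together with a case analysis on the state of the outside endpoint. Your treatment of the stubborn subcase where the outside node has already adopted $v_1.\mathit{id}$ as its (write-once) flyover id is also essentially the paper's reduction.

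The gap is in your ``core step,'' exactly where you flag difficulty. When every cut edge points \emph{into} $B$ --- some $q\notin B$ holds $p\in B$ in $\mathrm{M}_q(\mathcal{A}_0)$ while no node of $B$ holds any outside id --- Claim \ref{claim:dr-prim} only applies if $q$ actually delegates $p$; if it does not, or delegates and keeps the edge, your fallback assertion that ``an outside id must eventually appear in some $B$-node's memory'' is unjustified and in general false: the edge $(q,p)$ can simply persist, no $\langle\text{TestFlyID}, v_1.\mathit{id}\rangle$ message ever leaves $B$, and your concluding case analysis (which conditions on some $q\notin B$ \emph{receiving} $v_1$'s flyover id) never fires. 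The paper closes this case in the opposite direction: detection happens at $p$, driven by messages $q$ sends. Concretely (Case 3 of the paper's proof), within $O(\log n)$ rounds the outside node $q$ either exits the dual-state or becomes attentive, whereupon $q.\mathit{vID}$ is reset and $q$ sends $\langle\text{TestFlyID},\bot\rangle$ to every id in its memory --- including $p$ --- or else $q$ stabilizes in its own non-winged maximal backbone $B_q$ (disjoint from $B$ by maximality), has its \emph{prop-flyID} flag set by Lemma \ref{lemma:corr-config}, and sends $\langle\text{TestFlyID}, q.\mathit{flyID}\rangle$ with $q.\mathit{flyID}\notin B$ to $p$. In either event $p$, with $p.S\neq\emptyset$ and $p.\mathit{flyID}=v_1.\mathit{id}$, observes a mismatching flyover id (or $\bot$) in $\mathbf{R\_TestFlyoverMetadata}$ and sets $\mathtt{exit}\coloneqq 1$. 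Without this reverse-direction mechanism your argument does not cover all weakly connected configurations.
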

\begin{proof}
First, by Lemma \ref{lemma:sync-advice}, the size of any maximal backbone doesn't increase over time (i.e., for any round $r' > r$, if some node $u \notin B$, then $u \notin B$ in round $r'$).

If $B$ is a maximal winged backbone in round $r$, then by Lemma \ref{lemma:winged-backbone}, at least one node in $B$ sets its $\mathtt{exit}$ variable set to 1 by round $r+2$. Thus, for the rest of the proof, we consider the scenario where $B$ is \emph{not} a winged maximal backbone, i.e., $(v_1.L = \emptyset)$ and $(v_{|B|}.R = \emptyset)$.

Furthermore, for the rest of the proof, we consider the scenario where, no node in $B$ sets its $\texttt{exit}$ variable to 1 by round $r+ 2 \lceil \log n \rceil$. (Otherwise, that would only support the lemma statement.) By Lemma \ref{lemma:corr-config}, in any round $r' \geq r + 2 \lceil \log n \rceil$, $B$ is a flyover, and $\forall u \in B, u.\mathit{flyID} = v_1$.

Recall that the communication graph is weakly connected in any round (due to Lemma \ref{lemma:conn-global}). We split the proof into multiple cases (and subcases), analyzing how the backbone $B$ gets affected from the existence of a directed edge $(p, q)$ in round $r'$, where $(p \in B \land q \notin B)$ or $(p \notin B \land q \in B)$.

\emph{Case 1.} Node id $p$ is stored in $q$'s flyover-related variables, i.e., $p \in q.S \cup \{q.\mathit{flyID}\} \cup q.\textit{c-ids}$. In all the following subcases, some node in $B$ will set its $\texttt{exit}$ variable to 1 by round $r' + O(\log n)$.
\begin{enumerate}
    \item If $q.S = \emptyset$, then $q$ sends $\langle \text{RejFlyover} \rangle$ to all ids in $q.S \cup \{q.\mathit{flyID}\} \cup q.\textit{c-ids}$ (see, Remark \ref{rem:flushflyover}.).
    \item If $q.S \neq \emptyset$, and any one of the following scenarios holds: $q$ is a lost node, or $q$ is in a maximal ouroboros or maximal winged backbone. By Lemma \ref{lemma:winged-backbone} and Lemma \ref{lemma:maximal-ouroboros}, $q$ exits the dual-state by round $r'+O(\log n)$, and sends $\langle \text{RejFlyover} \rangle$ to all ids in $q.S \cup \{q.\mathit{flyID}\} \cup q.\textit{c-ids}$. 
    \item If $q.S \neq \emptyset$, and $q$ is in a maximal backbone $B_q$, where $B_q$ is \emph{not} winged. Let $q'$ be the node in $B_q$ where $q'.L = \emptyset$. Clearly, $B_q \cap B = \emptyset$ (as both are maximal), so $p.\mathit{flyID} \notin B_q$ and $q' \notin B$. If some node in $B_q$ exits dual-state by round $r' + 2 \lceil \log n \rceil$, then by Lemma \ref{lemma:inf-prop}, $q$ exits the dual-state by round $r' + 4 \lceil \log n \rceil$ and sends $\langle \text{RejFlyover} \rangle$ to all ids in $q.S \cup \{q.\mathit{flyID}\} \cup q.\textit{c-ids}$. On the other hand, if no node in $B_q$ sets its \texttt{exit} variable to 1 until round $r' + 2 \lceil \log n \rceil$, then by Lemma \ref{lemma:corr-config}, the \texttt{prop-flyID} variable is set to 1 in $q$, after which $q' = q.\mathit{flyID}$ is sent to $p$, due to $\mathbf{TestFlyoverMetadata}$ function, causing $p$ to set its \texttt{exit} variable to 1, as $p.\mathit{flyID} \neq q.\mathit{flyID}$.
\end{enumerate}

\emph{Case 2.} Node id $q$ is stored in (memory or channel of) $p$. As all nodes in $B$ have their \texttt{prop-flyID} variable set to 1, node $q$ receives $\langle \text{TestFlyID}, p.\mathit{flyID} \rangle$ in round $r'+1$. Based on $q$'s memory in round $r'+1$, we consider the following subcases.
\begin{enumerate}
    \item If $(q.S \neq \emptyset \land q.\mathit{flyID} \neq p.\mathit{flyID}) \lor (q.S = \emptyset)$, then due to $\mathbf{R\_TestFlyoverMetadata}$ function, $q$ sends $\langle \mathrm{RejFlyover} \rangle$ to $p.\mathit{flyID}$ in round $r'+1$, causing node $p.\mathit{flyID}$ to set its $\mathtt{exit}$ to 1.
    \item If $(q.S \neq \emptyset \land q.\mathit{flyID} = p.\mathit{flyID})$, then we resort to Case 1 (where $p$ is $p.\mathit{flyID}$).
\end{enumerate}

\emph{Case 3.} Node id $p$ is in $M_q({\mathcal{A}_0})$. If node $q$ uses the DR primitive (see, Definition \ref{def:bidirected}), i.e., sends node id $p$ to another node, within algorithm $\mathcal{A}_0$, by round $r' + O(\log n)$, then by Claim \ref{claim:dr-prim}, there would exist an edge $(p', q')$, where $p' \in B$ and $q' \in B$, in the next two rounds. In that scenario, we can resort to Case 2. Thus, we analyze the following subcases, where $p$ is always in $M_q({\mathcal{A}_0})$.
\begin{enumerate}
    \item If $q.S = \emptyset$, then in $O(1)$ rounds, $q.t = 0$ as it is decremented by 1 in every round (and it can be at most 5). (See, $\mathbf{BasicChecks2}$ function in Section \ref{subsec:distrib-TtP}.) When $q.t \leq 1$, $q.\mathit{vID}$ is set to 0, causing node $q$ to send $\langle \text{TestFlyID}, \bot \rangle$ to all ids in $\mathbf{TestFlyoverMetadata}$ function (before $q$ processes a new advice message; see Remark \ref{rem:attentive-bot}). Then, $p$ would set its \texttt{exit} variable to 1.
    \item If $q.\mathit{exit}$ is set to 1 in the next $O(\log n)$ rounds, then $q.S$ and $q.\mathit{vID}$ are set to $\emptyset$ and $0$ (resp.). In that round, node $q$ sends $\langle \text{TestFlyID}, \bot \rangle$ to all ids in $\mathbf{TestFlyoverMetadata}$ function (before $q$ processes a new advice message; see Remark \ref{rem:attentive-bot}), causing $p$ to set its \texttt{exit} variable to 1.
    \item If $q.S \neq \emptyset \land q.\mathit{exit} = 0$ for the next $O(\log n)$ rounds, then by Lemma \ref{lemma:winged-backbone} and \ref{lemma:maximal-ouroboros}, $q$ must belong to a maximal backbone $B_q$, where $B_q$ is \emph{not} winged. Let $q'$ be the node in $B_q$ where $q'.L = \emptyset$. Clearly, $B_q \cap B = \emptyset$ (as both are maximal), so $p.\mathit{flyID} \notin B_q$ and $q' \notin B$. By Lemma \ref{lemma:corr-config}, the \texttt{prop-flyID} variable is set to 1 in $q$, after which $q' = q.\mathit{flyID} \notin B$ is sent to $p$, due to $\mathbf{TestFlyoverMetadata}$ function, causing $p$ to set its \texttt{exit} variable to 1, as $p.\mathit{flyID} \neq q.\mathit{flyID}$.
\end{enumerate}

\emph{Case 4.} Node id $p$ is present in a flyover-related message of $q$. In each of the following subcases, we can resort to either Case 1 or Case 3.
\begin{enumerate}
    \item In $\mathbf{R\_TestFlyoverConstruction}$ function, node $q$ processes two types of messages. For any message $\langle \mathit{msg}, \mathit{sen} \rangle$, $\mathit{sen}$ is stored in, either $q.S$ or $\mathrm{M}_q(\mathcal{A}_0)$. For any message $\langle \mathit{msg}, w, i, \mathit{sen} \rangle$, $w$ and $\mathit{sen}$ are stored in either $q.S$ or $\mathrm{M}_q(\mathcal{A}_0)$.
    
    In $\mathbf{R\_TestFlyoverMetadata}$ function, for any message $\langle \text{TestFlyID}, \mathit{flyID} \rangle$, $\mathit{flyID}$ is stored in either $q.\mathit{flyID}$ or $\mathrm{M}_q(\mathcal{A}_0)$.
    \item In $\mathbf{R\_TestConnCertificate}$ function, for any message $\langle \text{IntroCert}, w \rangle$, $u$ stores $q$ in $u.\textit{c-ids}$. For any message $\langle \text{TestCert}, w, \mathit{vID}, \mathit{dist} \rangle$, there are three scenarios: (1) $w$ is sent to some node $q' \in q.S$, or (2) $q$ stores $w$ in $q.\textit{c-ids}$, or (3) $q$ stores $w$ in $\mathrm{M}_q(\mathcal{A}_0)$. In the first scenario, as the \texttt{prop-flyID} variable is set to 1, node $q$ would have sent $q.\mathit{flyID}$ to $p$. If $q.\mathit{flyID} \neq p.\mathit{flyID}$, then $p$ would set its \texttt{exit} variable to 1. Otherwise, if $q.\mathit{flyID} \neq p.\mathit{flyID}$, we resort to Case 1, where $p' = p.\mathit{flyID} \in B$ is stored in node $q$'s flyover-related variable.
\end{enumerate}

\emph{Case 5.} Finally, node id $p$ is present in a advice-related message of $q$ (see, algorithms in Section \ref{subsec:distrib-TtP}). In each of those algorithms, whenever a node $q$ processes any message, it simply flushes the node ids present in the message to $M_q(\mathcal{A}_0)$ at the end. Thus, we resort to Case 3.
\end{proof}

\subsection{Putting Everything Together}

\begin{claim} \label{claim:exit-dual-state}
If a node is in a precarious dual-state in round $r$, it exits dual-state by round $r+ O(\log n)$.
\end{claim}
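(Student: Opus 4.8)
The plan is to prove the claim by a direct case analysis on the three conditions of Definition~\ref{def:prec-dual-state}, feeding each case into the dissemination and detection lemmas proved above. Throughout I may assume $r > 5$: if $r \le 5$ and $u$ has not already left the dual-state by round $6$ (in which case we are done, as $6 = O(\log n)$), then $u$ is still in \emph{some} precarious dual-state at round $6$, and restarting the argument there costs only an additive constant.

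First, if $u$ is a \emph{lost} node, Lemma~\ref{lemma:winged-backbone} gives that $u$ sets $u.\mathit{exit} \coloneqq 1$ by round $r+2$; since a non-empty $u.S$ can be cleared only through the $\mathbf{RejectFlyover}$ branch of Algorithm~\ref{alg:init} (cf.\ Remark~\ref{rem:flushflyover}), which fires at the latest in the round after $u.\mathit{exit}$ becomes $1$, node $u$ exits the dual-state by round $r+3$. Second, if $u$ lies in a maximal ouroboros $O$, Lemma~\ref{lemma:maximal-ouroboros} immediately yields that $u$ exits the dual-state within $O(\log|O|) = O(\log n)$ rounds.

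The remaining case is that $u$ belongs to a maximal backbone $B$. If $|B| < n$, Lemma~\ref{lemma:mult-backbone-exit} produces some node $w \in B$ that sets $w.\mathit{exit}\coloneqq 1$ at some round $r^\star \le r + O(\log n)$; taking $r^\star$ minimal over all nodes of $B$, no node of $B$ has left the dual-state before $r^\star$ (clearing $S$ requires $\mathit{exit}=1$), so $B$ is still a backbone at round $r^\star$ that contains both $w$ (with $w.\mathit{exit}=1$) and $u$. Applying Lemma~\ref{lemma:inf-prop} at round $r^\star$ then forces every node of $B$, and in particular $u$, to exit the dual-state by round $r^\star + 2\lfloor\log|B|\rfloor = r + O(\log n)$. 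If instead $|B| = n$, condition~(3) of Definition~\ref{def:prec-dual-state} already supplies a node that sets $\mathit{exit}\coloneqq 1$ at some round $r^\star \in \{r,\dots,r+3\lceil\log n\rceil\}$; taking $r^\star$ minimal as before, the same invocation of Lemma~\ref{lemma:inf-prop} gives that $u$ exits the dual-state by round $r^\star + 2\lfloor\log n\rfloor = r + O(\log n)$.

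The one point that needs care is justifying that, in the backbone case, $B$ is still a well-formed backbone (the hypothesis of Lemma~\ref{lemma:inf-prop}) at the round $r^\star$ when the first exit is triggered. For this I would combine two structural facts: by Lemma~\ref{lemma:sync-advice} a maximal backbone never gains a new member; and for any node in the dual-state, the first-level shortcuts $S_l(1), S_r(1)$ are assigned only inside $\mathbf{JoinPath}$ while processing advice — which cannot recur once $S \neq \emptyset$, since the advice-processing guards require $u.S = \emptyset$ — and are thereafter immutable until $\mathbf{RejectFlyover}$ runs. Hence, on the window $[r, r^\star - 1]$ during which no node of $B$ has set $\mathit{exit}\coloneqq 1$, the spine of $B$ is frozen, so $B$ is still exactly a backbone at round $r^\star$; that is all Lemma~\ref{lemma:inf-prop} requires, and the lemma's own analysis of $d_B(\cdot,\cdot)$ handles the nodes that subsequently peel off. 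Summing the constant/logarithmic bounds over the finitely many cases gives the claimed $r + O(\log n)$ bound.
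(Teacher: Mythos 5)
Your proof is correct and follows essentially the same route as the paper's: a case analysis over the three conditions of Definition~\ref{def:prec-dual-state}, invoking Lemma~\ref{lemma:winged-backbone} for lost nodes, Lemma~\ref{lemma:maximal-ouroboros} for ouroboroi, and Lemma~\ref{lemma:mult-backbone-exit} (resp.\ condition~(3)) for small (resp.\ full-size) backbones, each time finishing with the dissemination bound of Lemma~\ref{lemma:inf-prop}. The paper's own proof is terser; your added care about the $r>5$ assumption and about the backbone's spine being frozen until the first exit is triggered only fills in details the paper leaves implicit.
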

\begin{proof}
For the first three cases of the precarious dual-state  property (see Definition \ref{def:prec-dual-state}), if all nodes do not form a single backbone, then by Lemma \ref{lemma:winged-backbone}, Lemma \ref{lemma:maximal-ouroboros} and Lemma \ref{lemma:mult-backbone-exit}, combined with fast information dissemination (i.e., Lemma \ref{lemma:inf-prop}), every node exits the dual-state in $O(\log n)$ rounds. However, if the network forms a single backbone, and if some node is sets its $\texttt{exit}$ variable to 1 by round $r+3\lceil \log n \rceil$, then again, combined with Lemma \ref{lemma:inf-prop}, every node exits the dual-state by round $r+ O(\log n)$.
\end{proof}

\begin{theorem} \label{theorem:s-robustness}
The network is s-robust.
\end{theorem}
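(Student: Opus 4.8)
The statement splits into the two conditions of Definition~\ref{def:s-robust} — Sybil resistance and robustness — and I would treat them independently.

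For \emph{Sybil resistance}, the plan is a direct walk through the pseudocode. A node ever inspects the contents of an advice message only inside $\mathbf{GetAdvice}$ (Algorithm~\ref{alg:local-cert}), and only when that message is \emph{well-formed} in the sense of Definition~\ref{def:well-formed-adv}. The only address-typed field of an advice message is $\mathit{Adv}(u).\mathit{par}$, and well-formedness forces $\mathit{Adv}(u).\mathit{par}\in\{\bot,\mathrm{id}\}$ with $\mathrm{id}$ already present in $u$'s memory (equivalently, in $\mathit{snap}$, which by $\mathbf{SnapshotReq}$ is assembled solely from ids already stored in $\mathrm{M}_u(\mathcal{A}_0)$ via the $\langle\text{Intro},\cdot\rangle$ messages $u$ sends to itself); the remaining advice fields ($\mathit{vID}$, $\textit{c-par}$, $\textit{c-dist}$, $\mathit{dist}$) are non-address variables holding integers only. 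Hence any id appearing in the advice but not already in $u$'s memory or channel is never stored in a variable and never placed into an outgoing message — I would simply verify this line by line for every statement that reads an $\langle\text{Advice},\cdots\rangle$ message, and similarly confirm that none of the flyover or advice algorithms re-circulate such ids.

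For \emph{robustness}, fix a round $r>5$ in which the nodes do not form a single correctly configured flyover, and take any node $u$ with $u.S\neq\emptyset$ at round $r$. The key step is to show $u$ is in a \emph{precarious} dual-state (Definition~\ref{def:prec-dual-state}) at round $r$; then Claim~\ref{claim:exit-dual-state} gives that $u$ exits the dual-state within $O(\log n)$ rounds, and combining with the fast-dissemination Lemma~\ref{lemma:inf-prop} yields the uniform $O(\log n)$ bound. To establish precariousness I would classify a dual-state node by its $1$-shortcut structure: $u$ is either lost (Definition~\ref{def:lost-nodes}), or lies in a maximal ouroboros (Definition~\ref{def:ouroboros}), or lies in a maximal backbone $B$ (Definition~\ref{def:backbone}). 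The first two cases are precarious by definition (and already handled by Lemmas~\ref{lemma:winged-backbone} and~\ref{lemma:maximal-ouroboros}); if $|B|<n$, case~3 of Definition~\ref{def:prec-dual-state} applies directly, via Lemma~\ref{lemma:mult-backbone-exit}. The remaining case $|B|=n$ is the delicate one: then $B$ must fail to be a correctly configured flyover at round $r$, and I would use the write-once/monotone behaviour of the flyover variables — $\mathit{vID},\textit{c-par},\textit{c-dist}$ and $S_l(1),S_r(1)$ cannot change, and $\mathit{flyID}$ cannot change once $\neq\mathit{id}$, unless a node first sets $\mathtt{exit}=1$, while $\textit{c-ids}$ and the higher-level shortcuts only grow — together with the contrapositive of Lemma~\ref{lemma:corr-config}: if no node in $B$ set $\mathtt{exit}=1$ during $r,\dots,r+3\lceil\log n\rceil$ then $B$ would be correctly configured by round $r+3\lceil\log n\rceil$; since the frozen variables were already inconsistent at round $r$ (or, once Lemma~\ref{lemma:fly-const} fills in the hypercube edges, routing the connectivity certificates over an inconsistent $T^*_B$ per Definitions~\ref{def:flyover} and~\ref{def:corr-config-backbone} detects the fault), this forces some node to set $\mathtt{exit}=1$ inside the window, which is exactly case~3 of Definition~\ref{def:prec-dual-state}.

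Finally, I would address re-entry: a node returns to the dual-state only after processing a fresh effective advice message, and by Lemma~\ref{lemma:sync-advice} the set of nodes that so transition is disjoint from the set already in dual-state; hence every newly formed component is again a maximal backbone, a maximal ouroboros, or a set of lost nodes, and thus precarious unless all $n$ nodes simultaneously received correct advice (the honest-supervisor scenario, excluded here). Iterating, within $O(\log n)$ rounds no node is in the dual-state, i.e., every supervisor-induced overlay has been rejected. I expect the main obstacle to be precisely the $|B|=n$ subcase — making rigorous that a size-$n$ maximal backbone which is not \emph{already} a correctly configured flyover must trigger an $\mathtt{exit}=1$ within $3\lceil\log n\rceil$ rounds — since it is where the write-once structure of the variables and Lemmas~\ref{lemma:fly-const} and~\ref{lemma:corr-config} must be combined most carefully.
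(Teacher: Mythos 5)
Your proposal is correct and follows essentially the same route as the paper: Sybil resistance is argued from the well-formedness check in $\mathbf{GetAdvice}$ (the only address-typed advice field, $\mathit{Adv}(u).\mathit{par}$, must already lie in the snapshot), and robustness is obtained by classifying dual-state nodes into the precarious cases (lost / ouroboros / backbone of size $<n$, handled by Claim~\ref{claim:exit-dual-state}) versus a size-$n$ backbone, where the contrapositive of Lemma~\ref{lemma:corr-config} forces an $\mathtt{exit}=1$ unless the flyover becomes correctly configured. Your extra care on the $|B|=n$ subcase (write-once variables) and on re-entry via Lemma~\ref{lemma:sync-advice} only makes explicit what the paper's terser proof leaves implicit.
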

\begin{proof} By design (see, $\mathbf{GetAdvice}$ function in Algorithm \ref{alg:local-cert}, and Definition \ref{def:well-formed-adv} for ``well-formed'' advice), for any node $u$, (1) there can be only \emph{one} node id present in any advice message, and (2) that id must \emph{already be present} in the $\mathtt{snap}$ variable when the advice is received at node $u$. Thus, the supervisor cannot add arbitrary connections to any node, ensuring the Sybil resistance property.

If any node is in a precarious dual-state (see Definition \ref{def:prec-dual-state}), then by Claim \ref{claim:exit-dual-state}, it exits the dual-state in $O(\log n)$ rounds. On the other hand, if no node is in a precarious dual-state, and the network forms a single backbone, and no node sets its \texttt{exit} variable to 1 for $\Theta(\log n)$ rounds, then by Lemma \ref{lemma:corr-config}, the backbone is a correctly configured flyover (i.e., the supervisor provided correct advice messages), i.e., by Corollary \ref{cor:conn-A0-1}, the network is in an almost-legal configuration. Moreover, this observation implies that even if the supervisor is malicious (e.g., providing ``bad'' advice messages), the network quickly rejects the ``supervised-induced'' subnetwork (from any arbitrary configuration). Thus, the network also achieves the robustness property, which would allow an honest supervisor to quickly (i.e., in $O(\log n)$ rounds) contact all nodes, provide correct advice messages to repair the network.
\end{proof}

\begin{claim} \label{claim:process-msg-ids}
Consider a node id $v$ in any flyover-related or advice-related message of a node $u$ in round $r$. By round $r+1$, one of the following events happen, where $w = u.\mathit{flyID}$ in round $r$.
\begin{enumerate}
    \item $v \in u.S \cup \{ u.\mathit{flyID} \} \cup u.\textit{c-ids}$.
    \item $v \in \mathrm{M}_u(\mathcal{A}_0)$.
    \item $v.\mathit{flyID} = w$.
    \item $w \in \mathrm{M}_v(\mathcal{A}_0)$.
\end{enumerate}
\end{claim}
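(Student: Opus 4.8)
The plan is to prove the claim by an exhaustive case analysis over the finitely many message formats that our algorithms ever place in a node's channel, tracking, for each node id carried by such a message, where it ends up after node $u$ processes it in round $r$. The organising observation is that the response functions are \emph{id-conserving}: every node id that $u$ reads out of a received message is, during the same round, either (i) assigned to one of $u$'s flyover variables $u.S$, $u.\mathit{flyID}$ or $u.\textit{c-ids}$ --- giving condition~1; (ii) handed to $\mathbf{Flush}(\cdot)$, directly or via $\mathbf{SendRejFly}(\cdot)$, and hence moved into $\mathrm{M}_u(\mathcal{A}_0)$ by Definition~\ref{def:flush} --- giving condition~2; or (iii) placed into a new routed message without being retained. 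Moreover, whenever processing a message triggers $u.\mathit{exit}\coloneqq 1$, the $\mathbf{RejectFlyover}$ step (Remark~\ref{rem:flushflyover}) flushes $u.S\cup\{u.\mathit{flyID}\}\cup u.\textit{c-ids}$ together with the message payload, which again lands every relevant id in condition~2.

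Most message types fall cleanly into (i)--(ii) and can be dispatched by inspection of the pseudocode. For $\langle\text{TestvID},\cdot\rangle$ the payload is a virtual id, not a node id, so there is nothing to track. For $\langle\text{RejFlyover}\rangle$, $\langle\text{TestLine-R}/\text{L},\cdot\rangle$, $\langle\text{FlyConst-R}/\text{L},\cdot,\cdot,\cdot\rangle$ and $\langle\text{IntroCert},\cdot\rangle$, each carried id is either already recorded as a shortcut of $u$ or gets written to $u.S$ or $u.\textit{c-ids}$ (condition~1), or else $u$ raises its exit bit and $\mathbf{Flush}$es it (condition~2), using the invariants of Remark~\ref{flyover-inv} to rule out the remaining guard branches. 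For $\langle\text{TestFlyID},\cdot\rangle$ the carried id is adopted as $u.\mathit{flyID}$ (condition~1), already equals $u.\mathit{flyID}$ (condition~1), or $u$ sets its exit bit and $\mathbf{Flush}$es it (condition~2). All advice-related messages --- $\langle\text{Intro},v\rangle$, $\langle\text{TestAdvice},\cdot,v\rangle$, $\langle\text{Verified},\cdot,v\rangle$, $\langle\text{Path+},v\rangle$, $\langle\text{Path-},v\rangle$ (and the control messages, which carry no foreign id) --- end each round either with a $\mathbf{Flush}$ of the accumulated ids or with the assignment $u.S_l(1)/u.S_r(1)\coloneqq v$, so again condition~1 or~2.

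The only genuinely delicate case is $\langle\text{TestCert},v,\mathit{vID},\mathit{dist}\rangle$, because $\mathbf{R\_TestConnCert}$ has one branch --- taken when $u.S\neq\emptyset$, $u.\mathit{exit}=0$, $u.\mathit{vID}\neq\mathit{vID}$ and $\textit{prop-flyID}$ holds at $u$ --- in which $u$ forwards $v$ to $\mathbf{NextStop}(\mathit{vID})$ without retaining it, so neither condition~1 nor~2 is immediate. Here I would argue that $\textit{prop-flyID}$ forces $u.\mathit{flyID}$ to be either $u.\mathit{id}$ or an already-propagated flyover id $w$, and that a routed $\text{TestCert}$ can only reach $u$ through a shortcut edge, so $u$ lies on the backbone of the originator $v$; in the same round, and before the response functions run, $\mathbf{TestFlyoverMetadata}$ makes $u$ emit $\langle\text{TestFlyID},w\rangle$ along that backbone. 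Combining this with the pointer-doubling reachability of Lemma~\ref{lemma:fly-const}/Corollary~\ref{corr:flyover-const}, one concludes that by round $r+1$ either $v.\mathit{flyID}$ already equals $w$ (condition~3) or $v$ has received $\langle\text{TestFlyID},w\rangle$ and, since $w\neq\bot$, either adopts it (condition~3) or sets its exit bit and flushes $w$ to $\mathrm{M}_v(\mathcal{A}_0)$ (condition~4). This last step is the main obstacle: it requires pinning down that ``all ids in memory'' in $\mathbf{TestFlyoverMetadata}$ covers the id carried by the just-received $\text{TestCert}$, and matching the resulting round against the ``by round $r+1$'' bound. If the claim is only ever invoked once the network already forms a single correctly configured backbone (Lemma~\ref{lemma:corr-config}), condition~3 is immediate from flyID consistency and this subtlety disappears.
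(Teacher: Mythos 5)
Your overall decomposition is the same as the paper's: an exhaustive case analysis over the message types, showing that each carried id is conserved into $u.S\cup\{u.\mathit{flyID}\}\cup u.\textit{c-ids}$ or into $\mathrm{M}_u(\mathcal{A}_0)$ via $\mathbf{Flush}$, and you correctly isolate the $\langle\text{TestCert},\cdot\rangle$ forwarding branch as the one case for which conditions~3 and~4 exist. All of the routine cases are handled as in the paper.

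The gap is in how you close that delicate case. The detour through ``$u$ lies on the backbone of the originator'' plus the pointer-doubling reachability of Lemma~\ref{lemma:fly-const}/Corollary~\ref{corr:flyover-const} cannot work: from an arbitrary initial configuration a $\text{TestCert}$ sitting in $u$'s channel need not have arrived via any shortcut edge (the channel can be arbitrarily corrupted), and in any event those lemmas give $O(\log n)$-round guarantees, not the single round the claim demands. The step you flag as ``the main obstacle'' is in fact exactly how the paper resolves the case, and it is not an obstacle in this model: a node's channel is part of its register, so the id $v$ carried by the just-received $\text{TestCert}$ is among ``all ids in memory'' when $\mathbf{TestFlyoverMetadata}$ runs. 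Since the forwarding branch of $\mathbf{R\_TestConnCert}$ is guarded by $\textit{prop-flyID}$, node $u$ has already sent $\langle\text{TestFlyID}, u.\mathit{flyID}\rangle$ directly to $v$ in round $r$; in round $r+1$ node $v$ either adopts $w=u.\mathit{flyID}$ as its own flyover id (condition~3) or sets its exit bit and flushes $w$ into $\mathrm{M}_v(\mathcal{A}_0)$ (condition~4). Your fallback --- assuming the claim is only invoked once a single correctly configured backbone exists --- is not available, since the claim is applied in Theorem~\ref{theorem:final} to arbitrary configurations to argue that red edges turn blue or green. You should commit to the direct argument rather than hedge on it.
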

\begin{proof}
Similar to the proof of Lemma \ref{lemma:conn-global}, we provide a case-by-case analysis depending on the message that the node id $v$ is part of.
\begin{enumerate}
    \item If $u.S = \emptyset$, in functions, $\mathbf{R\_TestFlyoverConstruction}, \mathbf{R\_TestFlyoverMetadata}$ and $\mathbf{R\_TestConnCertificate}$, node id $v$ straightaway gets added to $\mathrm{M}_u(\mathcal{A}_0)$ in round $r$.
    \item If $v$ belongs to any function in the distributed tree-to-path algorithms (cf. Section \ref{subsec:distrib-TtP}), again, node id $v$ straightaway gets added to $\mathrm{M}_u(\mathcal{A}_0)$ in round $r$.
    \item In $\mathbf{R\_TestFlyoverConstruction}$ function, node $u$ processes two types of messages. For any message $\langle \mathit{msg}, \mathit{sen} \rangle$, $\mathit{sen}$ is stored in, either $u.S$ or $\mathrm{M}_u(\mathcal{A}_0)$. For any message $\langle \mathit{msg}, w, i, \mathit{sen} \rangle$, $w$ and $\mathit{sen}$ are stored in either $u.S$ or $\mathrm{M}_u(\mathcal{A}_0)$.
    
    In $\mathbf{R\_TestFlyoverMetadata}$ function, for any message $\langle \text{TestFlyID}, \mathit{flyID} \rangle$, there are two cases: $\mathit{flyID}$ is stored in either $u.\mathit{flyID}$ or $\mathrm{M}_u(\mathcal{A}_0)$.
    \item In $\mathbf{R\_TestConnCertificate}$ function, for any message $\langle \text{TestCert}, v, \mathit{vID}, \mathit{dist} \rangle$, there are three cases: (1) $v$ is sent to node $u' \in u.S$, or (2) $u$ stores $v$ in $u.\textit{c-ids}$, or (3) $u$ stores $v$ in $\mathrm{M}_u(\mathcal{A}_0)$; in the first case, before $v$ is sent to an id $u' \in u.S$, $u$ must've sent $\langle \mathrm{TestFlyID}, u.\mathit{flyID} \rangle$ to $v$ in round $r$ (as the \texttt{prop-flyID} variable is set to 1). In round $r+1$, node $v$ stores $u.\mathit{flyID}$ in either $v.\mathit{flyID}$, or $\mathrm{M}_v(\mathcal{A}_0)$. And, for any message $\langle \text{IntroCert}, v \rangle$, $u$ stores $v$ in $u.\textit{c-ids}$.
\end{enumerate}
\end{proof}

\begin{theorem} \label{theorem:final}
If supervisor is honest, then the network reaches an almost-legal configuration in $O(\log n)$ rounds; otherwise, it reaches an almost-legal configuration in $2\mathrm{R}(\mathcal{A}_0) + O(\log n)$ rounds.
\end{theorem}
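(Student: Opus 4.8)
The plan is to prove the two halves of the statement separately, in each case stitching together the lemmas of Sections~\ref{sec:TtP-analysis} and~\ref{sec:ss-analysis} into an end-to-end timeline. The one structural fact that must be kept in mind throughout is that the two self-stabilizing layers compose cleanly: the flyover/advice layer writes into $\mathrm{M}(\mathcal{A}_0)$ only ids that are already reachable (Sybil resistance, Claim~\ref{claim:process-msg-ids}), and the communication graph, in particular the subgraph induced by $\bigcup_u \mathrm{M}_u(\mathcal{A}_0)$, stays weakly connected in every round (Lemma~\ref{lemma:conn-global}); hence $\mathcal{A}_0$ is always executing on a legitimate (weakly connected) input, and the flyover layer only perturbs that input transiently.

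\textbf{Honest supervisor.} Starting from an arbitrary configuration the flyover variables may encode garbage; by Theorem~\ref{theorem:s-robustness} (equivalently, Claim~\ref{claim:exit-dual-state} together with Lemmas~\ref{lemma:winged-backbone}, \ref{lemma:maximal-ouroboros}, \ref{lemma:mult-backbone-exit}, \ref{lemma:inf-prop}) within $O(\log n)$ rounds every node has exited the dual-state and become attentive. Following Section~\ref{subsec:distrib-TtP}, the supervisor then waits $O(\log n)$ rounds until \emph{all} nodes are attentive, requests the snapshot (each node keeps its $\mathrm{M}(\mathcal{A}_0)$-neighborhood for the $O(1)$ rounds governed by the timer $u.t$), and computes the advice of Algorithm~\ref{alg:cloud-alg}, which is well-formed (Definition~\ref{def:well-formed-adv}). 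By synchrony all recipients process this advice in the same round (Lemma~\ref{lemma:sync-advice}, Claim~\ref{claim:sync-advice}), and the distributed Tree-to-Path algorithm (Section~\ref{subsec:distrib-TtP}, Theorem~\ref{conf-theorem:tree-to-path}, Algorithm~\ref{alg:distrib-tree-to-path}) forms in $O(1)$ rounds a single path over all $n$ nodes with consecutive virtual ids. Since the advice is correct, one checks that no guard of Algorithm~\ref{alg:conf-init} and no test of Algorithms~\ref{alg:conf-hypercube}, \ref{alg:conf-flymetadata}, \ref{alg:conf-cert} is ever triggered, so no node sets $\mathtt{exit}$; hence by Corollary~\ref{corr:flyover-const} the single backbone becomes a flyover within $O(\log n)$ rounds, the flyover id propagates, and by Lemma~\ref{lemma:corr-config} (second part) the backbone becomes correctly configured, i.e., each $u.\textit{c-ids}$ ends up equal to $u$'s $G^*$-neighbor set. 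As $\textit{c-ids}$ is flushed into $\mathrm{M}(\mathcal{A}_0)$ every round, Corollary~\ref{cor:conn-A0-1} yields that $\mathrm{M}(\mathcal{A}_0)$ contains the sorted path after $O(\log n)$ rounds, which is an almost-legal configuration. Closure follows because, once the flyover is correctly configured, no test fails, $\textit{c-ids}$ only stabilizes to the fixed $G^*$-neighbor set, and $\mathcal{A}_0$'s own closure keeps the sorted path in $\mathrm{M}(\mathcal{A}_0)$.

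\textbf{Malicious supervisor.} Here we lean on the base layer. By Lemma~\ref{lemma:conn-global} the $\mathrm{M}(\mathcal{A}_0)$-subgraph is weakly connected in every round, and by Sybil resistance (Theorem~\ref{theorem:s-robustness}, Claim~\ref{claim:process-msg-ids}) every id the flyover/advice layer ever inserts into $\mathrm{M}(\mathcal{A}_0)$ is already present in the network, so the flyover layer's edge manipulations are transient perturbations of a legitimate $\mathcal{A}_0$-input. By Theorem~\ref{theorem:s-robustness}, if at any round the network does not form a single correctly configured flyover, all nodes exit the dual-state within $O(\log n)$ rounds; thus after $O(\log n)$ rounds either the (possibly malicious) supervisor has accidentally produced a correct flyover over all $n$ nodes—so the honest-case analysis applies and we are already almost-legal—or the last disturbance the flyover layer inflicts on $\mathrm{M}(\mathcal{A}_0)$ happens within $O(\log n)$ rounds, after which $\mathcal{A}_0$ runs undisturbed. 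Since $\mathcal{A}_0$ uses the Delegate-after-Reversal primitive (Definition~\ref{def:bidirected}), each of its delegations costs a factor two in rounds, so $\mathcal{A}_0$ reaches and (by its closure) keeps the sorted path in at most $2\mathrm{R}(\mathcal{A}_0)$ further rounds; together with the $O(\log n)$ clean-up slack this gives $2\mathrm{R}(\mathcal{A}_0)+O(\log n)$.

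\textbf{Main obstacle.} The delicate point—flagged in the introduction as the paper's central challenge—is the composition argument in the malicious case: ruling out that the flyover layer keeps flushing stale edges into $\mathrm{M}(\mathcal{A}_0)$ indefinitely and thereby repeatedly resets $\mathcal{A}_0$. The leverage is twofold: (i) synchrony forces all nodes accepting a given advice to advance in lock-step (Lemma~\ref{lemma:sync-advice}), so a flawed advice cannot spawn flyovers that persist on disjoint subsets at staggered times, and any flyover not spanning all of $V$ or not correctly configured is globally purged within $O(\log n)$ rounds (Lemmas~\ref{lemma:winged-backbone}–\ref{lemma:mult-backbone-exit}, \ref{lemma:inf-prop}); and (ii) Sybil resistance bounds the footprint of each purge on $\mathrm{M}(\mathcal{A}_0)$. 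Making precise that these purges are finite in effect—that there is a well-defined last round after which $\mathcal{A}_0$ is undisturbed, or else the network is already almost-legal—is the crux; the remaining steps (the honest-case timeline, the two closure arguments, and verifying that correct advice triggers none of the $\mathtt{exit}$ guards in Algorithms~\ref{alg:conf-init}–\ref{alg:conf-cert}) are routine checks against Algorithm~\ref{alg:cloud-alg} and the pseudocodes of Section~\ref{sec:flyover-code}.
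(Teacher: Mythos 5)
Your honest-supervisor timeline matches the paper's: purge in $O(\log n)$ rounds, all nodes attentive, correct advice, Tree-to-Path, flyover construction, and Corollary~\ref{cor:conn-A0-1}. The gap is in the malicious case, and it sits exactly where you flagged ``the crux'' --- but the resolution you sketch is the wrong one and would fail. You argue that after $O(\log n)$ rounds there is ``a well-defined last round after which $\mathcal{A}_0$ is undisturbed.'' No such round need exist: a malicious supervisor can issue fresh bad advice every $\Theta(\log n)$ rounds forever, so nodes perpetually enter dual-states, build partial flyovers, detect faults, and flush ids into $\mathrm{M}(\mathcal{A}_0)$. An argument that waits for the interference to stop never terminates.

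The paper's proof does not need the interference to stop. It colors each communication-graph edge $(u,v)$ red (the id $v$ lives only in a flyover- or advice-related message of $u$), blue ($v$ lives in a flyover-related address variable of $u$), or green ($v \in \mathrm{M}_u(\mathcal{A}_0)$), and proves a \emph{monotone progress} statement: red edges become blue or green in one round (Claim~\ref{claim:process-msg-ids}), blue edges become green in $O(\log n)$ rounds (via Claim~\ref{claim:exit-dual-state}, Remark~\ref{rem:flushflyover}, and Corollary~\ref{cor:conn-A0-1}), and green paths between endpoints persist because $\mathcal{A}_0$ preserves connectivity and --- crucially --- the advice layer only ever \emph{adds} ids to $\mathrm{M}(\mathcal{A}_0)$ via flushes and never deletes from it. Hence the subnetwork $\mathrm{N}(\mathcal{A}_0)$ becomes weakly connected within $O(\log n)$ rounds and stays weakly connected thereafter, no matter how often the supervisor restarts its attacks; $\mathcal{A}_0$ then converges in $2\mathrm{R}(\mathcal{A}_0)$ further rounds (the factor $2$ from the DR primitive, which you did identify). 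So the missing idea is to replace ``the disturbance eventually ends'' with ``the disturbance is monotonically harmless to the $\mathcal{A}_0$-subnetwork's connectivity,'' established by the coloring invariants; without that, your case split (``either a correct flyover forms or the last flush has happened'') is not exhaustive.
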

\begin{proof} We denote the subnetwork formed by edges in $\forall u \in V, \mathrm{M}_u(\mathcal{A}_0)$ as $\mathrm{N}(\mathcal{A}_0)$. Recall that the communication graph is always weakly connected (in any round) due to Lemma \ref{lemma:conn-global}.

First, we introduce some notation for showing that $N(\mathcal{A}_0)$ would be (weakly) connected. Let the (directed) edge $(u, v)$ in the communication graph in any round $r \geq 1$; we define the color of an edge $(u, v)$ in the following way.
\begin{enumerate}
    \item If $(v \notin \mathrm{M}_u(\mathcal{A}_0)) \land (v \notin u.S \cup \{ u.\mathit{flyID} \} \cup u.\textit{c-ids})$, but node id $v$ is present in a flyover-related or advice-related message in $u$, then the color of $(u, v)$ is red.
    \item If $(v \notin \mathrm{M}_u(\mathcal{A}_0)) \land (v \in u.S \cup \{ u.\mathit{flyID} \} \cup u.\textit{c-ids})$, then the color of $(u, v)$ is blue.
    \item If $v \in \mathrm{M}_u(\mathcal{A}_0)$, then the color of the edge $(u, v)$ is green. Similarly, we say that a path between any pair of nodes (ignoring edge directions) is a \emph{green path} if all edges in it are green.
\end{enumerate}

We proceed to show that $\mathrm{N}(\mathcal{A}_0)$ forms a single (weakly) connected component, i.e., there exists a green path between any pair of nodes, in $O(\log n)$ rounds, regardless of the initial configuration and any node's interactions with the supervisor. If $\mathrm{N}(\mathcal{A}_0)$ is weakly connected, then it always remains a single (weakly) connected component in any subsequent round, as Algorithm $\mathcal{A}_0$ is a self-stabilizing overlay algorithm (i.e., it preserves connectivity from any initial configuration), and importantly, nodes do not delete connections from the memory (incl. channel) dedicated to Algorithm $\mathcal{A}_0$ while interacting with the supervisor (cf. Theorem \ref{theorem:s-robustness}).

Towards that end, consider any pair of nodes $u$ and $v$ that do \emph{not} have any green path between them in any round $r \geq 1$. As the communication graph is weakly connected, there exists some (undirected, i.e., ignoring edge directions) path $P = (u, v_1, \dots, v_t)$, where $v_t = v$ for $t \geq 1$, in that round. Consider any edge $e = (p, q)$ belonging to $P$; the following invariants hold in round $r+1$.
\begin{enumerate}
    \item If $e$ is red, then $e$ turns either blue or green, due to Claim \ref{claim:process-msg-ids}.
    \item If $e$ is blue, then $e$ either remains blue or turns green, due to Remark \ref{rem:flushflyover}. 
    \item If $e$ is green, then due to the guarantee that Algorithm $\mathcal{A}_0$ preserves connectivity, there exists some green path $P_e = (p, q_1, \dots, q_{t'})$, where $q_{t'} = q$ for $t' \geq 1$.
\end{enumerate}

We proceed to show that any blue edge turns green in $O(\log n)$ rounds. Consider any blue edge $e = (p, q)$, where the node id $q$ belongs to a flyover-related address variable of node $p$. In the following (exhaustive) cases, nodes $p$ and $q$ end up with a green path between them.
\begin{enumerate}
    \item If $p.S = \emptyset$, then all flyover-related address variables of $p$ are straightaway flushed to $\mathrm{M}_p(\mathcal{A}_0)$.
    \item If $p$ is in a precarious dual-state, then by Claim \ref{claim:exit-dual-state}, $p$ exits dual-state in $O(\log n)$ rounds.
    \item If the network forms a single backbone, and if all nodes have their $\mathtt{exit}$ variables set to 0 in all rounds $r, r+1, \dots, r+3\lceil \log n \rceil$. By Corollary \ref{cor:conn-A0-1}, $\mathrm{N}(\mathcal{A}_0)$ becomes (weakly) connected.
\end{enumerate}

If $\mathrm{N}(\mathcal{A}_0)$ is (weakly) connected (regardless of any supervisor interaction), the network reaches an almost-legal configuration in $2\mathrm{R}(\mathcal{A}_0)$ rounds. This is because each node $u$ executes Algorithm $\mathcal{A}_0$ (in the background), and it is guaranteed that Algorithm $\mathcal{A}_0$, when executed independently, converges to the target topology $G^*$ in $\mathrm{R}(\mathcal{A}_0)$ rounds. There is an extra factor of 2, in our upper bound, because ``Delegation'' (of a node id by any node via a message), that takes a single round, is implemented in two rounds as ``Delegate-after-Reversal'' (see Definition \ref{def:bidirected}). Thus, if supervisor is not honest, the network reaches an almost-legal configuration in $2\mathrm{R}(\mathcal{A}_0) + O(\log n)$ rounds.

If the supervisor is honest, we can leverage the (correct) advice for showing fast stabilization. First, we show that within $O(\log n)$ rounds, there exists at least one attentive node, in which case, the supervisor is contacted for help. Once the supervisor is contacted by a node, the supervisor waits for $O(\log n)$ rounds until all nodes become attentive. Then, the supervisor sends the correct advice messages, leading to all nodes forming a correctly-configured flyover in $O(\log n)$ rounds. Finally, each node finds its predecessor and successor (in the sorted-path topology $G^*$) in the \texttt{c-id} variable, due to Corollary \ref{cor:conn-A0-1}, as no node sets its $\mathtt{exit}$ variable to 1.

Towards that end, we prove the following arguments.
\begin{enumerate}
    \item If $\mathrm{N}(\mathcal{A}_0)$ does not include $G^*$ in $O(\log n)$ rounds, then there exists at least one attentive node in $O(\log n)$ rounds. (If $G^*$ is already included, then we are done.)
    \item If there is one attentive node, then all nodes become attentive in $O(\log n)$ rounds.
\end{enumerate}

Consider the following (exhaustive) cases in any round $r > 5$.
\begin{enumerate}
    \item By Claim \ref{claim:exit-dual-state}, any node in a precarious dual-state exits dual-state in $O(\log n)$ rounds.
    \item If the network forms a single backbone, and if all nodes have their $\mathtt{exit}$ variables set to 0 in all rounds $r, r+1, \dots, r+3\lceil \log n \rceil$. By Corollary \ref{cor:conn-A0-1}, $\mathrm{N}(\mathcal{A}_0)$ becomes (weakly) connected.
\end{enumerate}
For any node $u$, $u.t$ decreases by one in every round (and $u.t$ can be at most $5$). (See $\mathbf{BasicChecks2}$ function in Section \ref{subsec:distrib-TtP}). By Claim \ref{claim:process-msg-ids}, if there is any node id $v$ in any flyover-related or advice-related message of any node $u$ (where $u.S = \emptyset$), then it is flushed to $\mathrm{M}_u(\mathcal{A}_0)$. Once node $u$ exits dual-state, it becomes attentive (i.e., $u.S = \empty \land u.\mathit{exit} = 0 \land u.t = 0$) in $O(1)$ rounds. Thus, either all nodes become attentive in $O(\log n)$ rounds, or $N(\mathcal{A}_0)$ includes the target topology $G^*$.
\end{proof}

\subsection{Lower Bound for Convergence Time}

\begin{theorem}
For an s-robust network, there exists some initial configuration such that the convergence time for any pair of supervisor and overlay algorithms is $\Omega(\log n)$ rounds.
\end{theorem}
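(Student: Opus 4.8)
The plan is to pin down a single ``bad'' initial configuration and argue that, no matter which overlay algorithm the nodes run and no matter what the supervisor does, a particular edge that is \emph{required} in every legal configuration cannot appear quickly, because ids can only travel one hop per round.

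First I would fix the initial configuration. Take the node ids to be $1,2,\dots,n$ and let the initial network be a Hamiltonian path on $V$ whose two endpoints are the nodes with ids $1$ and $2$ (e.g.\ the path $1,3,4,5,\dots,n,2$); set every channel empty and every non-address variable to its default, so the initial communication graph is exactly this path, is weakly connected, and is a valid initial configuration. Since no id lies strictly between $1$ and $2$ we have $2=\mathit{succ}(1)$, hence $(1,2)\in E^*$; therefore in every legal configuration node $1$ must store id $2$ (and symmetrically node $2$ must store id $1$), because legality demands $N_{C'}(u)\supseteq N_C(u)$ for some target $C$. Yet in the chosen configuration the distance between nodes $1$ and $2$ is $n-1$.

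The technical core I would establish is a \emph{distance-halving lemma}: writing $d_C(x,y)$ for the shortest-path distance between $x$ and $y$ in the undirected communication graph of configuration $C$ (explicit and implicit edges together), I claim that for any s-robust network, any supervisor behavior, and any two nodes $x,y$, consecutive configurations satisfy $d_{C_{r+1}}(x,y)\ge d_{C_r}(x,y)/2$. The proof is a one-round information-flow argument: take a shortest $x$--$y$ path $z_0,\dots,z_k$ in $G_{C_{r+1}}$ and show $d_{C_r}(z_i,z_{i+1})\le 2$ for each link; summing gives $d_{C_r}(x,y)\le 2k=2\,d_{C_{r+1}}(x,y)$. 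To get $d_{C_r}(z_i,z_{i+1})\le 2$ one traces how $z_{i+1}$'s id came to sit in $z_i$'s register or channel at the start of round $r+1$: either it was already there in $C_r$ (distance $\le 1$); or it was copied during round $r$ from something $z_i$ already held (distance $\le 1$); or it arrived in a message from a node $w$ which, to be able to address $z_i$ and to carry $z_{i+1}$, must have held both ids in $C_r$ (distance $\le 2$); or it arrived inside a supervisor advice message, in which case the Sybil-resistance clause of s-robustness forbids $z_i$ from keeping or forwarding it unless it was already in $z_i$'s memory or channel in $C_r$ (distance $\le 1$). I would also check that the $\mathbf{Flush}$ operation, the DR primitive, and node-to-self messages only shuffle ids the node already holds, so they fall under the ``copied from something already held'' case. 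Finishing the argument, from $d_{C_1}(1,2)=n-1$ the lemma gives $d_{C_r}(1,2)\ge (n-1)/2^{\,r-1}$, whereas legality in round $r$ forces $d_{C_r}(1,2)=1$; hence $2^{\,r-1}\ge n-1$, i.e.\ $r\ge 1+\log_2(n-1)=\Omega(\log n)$.

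I expect the main obstacle to be making that one-round information-flow argument fully rigorous inside this particular model: the communication graph mixes explicit and implicit (channel) edges, nodes may create reverse links (via the DR primitive and the snapshot hand-off), the supervisor can send unsolicited messages to \emph{any} node, and ids get moved between the $\mathcal{A}$- and $\mathcal{A}_0$-memories by $\mathbf{Flush}$. The point that neutralizes the supervisor is precisely the Sybil-resistance guarantee --- an advice message can only cause a node to keep or forward ids it already holds locally --- so the supervisor never shortcuts the $d/2$ barrier; everything else is a careful but routine enumeration of the ways an id can move one hop in a single round.
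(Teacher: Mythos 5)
Your proposal is correct and takes essentially the same route as the paper: the paper's stated key observation is exactly your distance-halving claim (distance at least $D/2^t$ after $t$ rounds, with Sybil-resistance preventing the supervisor from shortcutting it), and its appendix proof just phrases this by tracking the trajectory of one id and bounding consecutive gaps by powers of $2$. Your version, with an explicit initial path configuration and a per-edge case analysis, is if anything slightly cleaner since it does not need the paper's restriction to algorithms that only add edges.
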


\begin{proof}

Consider an initial configuration where two nodes $u$ and $v$ are at a distance $D = \Omega(n)$ in the communication graph. However, these nodes are meant to be adjacent in target topology $G^*$. If there is a pair of supervisor and self-stabilizing overlay algorithm for which this network reaches an almost-legal configuration in $o(\log D)$ rounds, then we can show a contradiction that the nodes $u$ and $v$ were at a distance $d < D$ in the initial state.

Recall that the nodes of an s-robust network do not arbitrarily create connections of any node id present in the advice message. In particular, a node can only add (directed) edges between any two ids present in its memory or channel. Thus, even if the supervisor is aware of memory states (incl. channels) of all nodes, and is allowed to interact with any subsets of nodes simultaneously, it is still limited in its power to create new connections.

Let $t$ be the minimum number of rounds (from that initial configuration at round $1$) taken for node $u$ to acquire the id of node $v$ (or vice versa) in its channel. Let $S = (a_1, a_2, \dots, a_{t} = u)$ be the sequence of nodes from round $1$ to $t$ holding the id of $v$ (until it reached $u$). For ease of exposition, let us consider the class of distributed overlay algorithms that only add new (directed) edges, i.e, do not remove existing edges. If we arrive at a contradiction for this class of algorithms, then it also applies to all other algorithms.

To that end, observe that for this class of algorithms, the ids in $S$ must initially belong to some path $P$ (regardless of directions) from $v$ to $u$. The key observation is that for $i \in [t-1]$, between $a_{i}$ and $a_{i+1}$ in the path $P$, there can be at most $2^{i-1} - 1$ nodes (and $2^{i-1}$ edges), i.e., node $a_{i+1}$ cannot be too far from node $a_{i}$ in the initial communication graph. This is because even if all nodes introduce all ids stored in their respective memories (incl. channels) to each other via messages in all rounds, each node can only learn all ids in its $2^j$-hop neighborhood in $j$ rounds. Thus, the total number of edges (i.e., distance between $v$ and $u$) in this path $P$ is at most $2^t$.

Therefore, we arrive at a contradiction that if there exists a pair of supervisor and self-stabilizing overlay algorithm for which the network reached an almost-legal configuration in $t = o(\log D)$ rounds, then $u$ and $v$ must have been at a distance of at most $2^t < D$ in the initial configuration.
\end{proof}
\newpage
\section{Helpful Figures and Tables} \label{sec:helpful-figs}

\begin{figure}[tbh]
    \centering \includegraphics[width=0.5\textwidth]{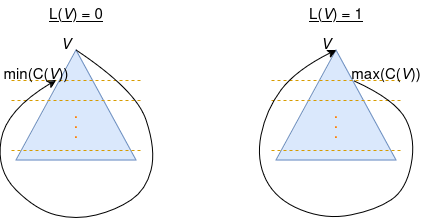}
    \caption{Intuition for the Tree-to-Path algorithm.}
    \label{fig:TtP-intuition}
\end{figure}

\begin{table}[tbh]
\parbox{.45\linewidth}{
\centering
    \begin{tabular}{|l|c|c|}
    \hline
     Variable & Store node ids? \\
    \hline
    $u.L$ & \cmark \\
    $u.R$ & \cmark \\
    $u.\mathit{vID}$ & \xmark \\
    $u.\mathit{flyID}$ & \cmark \\
    \hdashline
    $u.\text{\emph{c-par}}$ & \xmark \\
    $u.\text{\emph{c-dist}}$ & \xmark \\
    $u.\text{\emph{c-ids}}$ & \cmark \\
    \hdashline
    $u.\mathit{exit}$ & \xmark \\
    \hline
    \end{tabular}
    \caption{Flyover variables of a node $u$.}
    \label{tab:fly-vars}
}
\hfill
\parbox{.45\linewidth}{
\centering
    \begin{tabular}{|l|c|c|}
    \hline
     Variable & Store node ids? \\
    \hline
    $u.t$ & \xmark \\
    $u.\mathit{dist}$ & \xmark \\
    \hline
    \end{tabular}
    \caption{Advice variable of a node $u$.}
    \label{tab:adv-vars}
}
\end{table}


\end{document}